\documentclass[12pt]{article}
\usepackage{setspace}
\usepackage{amsfonts}%this package allows for calligraphic letters
\usepackage{comment}
\usepackage{longtable}
\usepackage{pdflscape}
\usepackage{tikz}
\usepackage{bibentry}
\usepackage{mathrsfs}
\nobibliography*
\usepackage[margin=1in]{geometry}
\usepackage{enumerate}
\usepackage{amssymb,amsmath,amsthm}
\usepackage{bbm}
\usepackage{array}
\usepackage{eqnarray}
\usepackage{tabularx}
\usepackage{graphicx}
\usepackage{subcaption} %this package allows side by side figures
\usepackage[authoryear]{natbib}
\usepackage{booktabs}
\usepackage{titlesec}
\usepackage{cite}
\usepackage{rotating}
\usepackage{soul}
\usepackage{fancyhdr}
\usepackage{multirow}
\usepackage{threeparttable}
\usepackage{cancel}
\usepackage{xr}
\usepackage{titletoc}
\usepackage{float}
\usepackage{tikz}
\usetikzlibrary{decorations.pathreplacing}
\usepackage[hypertexnames=false]{hyperref}
\hypersetup{
    colorlinks,
    linkcolor={blue!30!black},
    citecolor={blue!30!black},
    urlcolor={blue!30!black},
    breaklinks=true
}
\usepackage{xurl}
\titlespacing*{\section}{0pt}{0.5ex plus 1ex minus .2ex}{0.3ex plus .2ex}
\titlespacing*{\subsection}{0pt}{0.25ex plus 1ex minus .2ex}{0.3ex plus .2ex}
\titlespacing*{\subsubsection}
{0pt}{0.25ex plus 1ex minus .2ex}{0.3ex plus .2ex}
\usepackage{array}
\newcolumntype{L}[1]{>{\raggedright\let\newline\\\arraybackslash\hspace{0pt}}m{#1}}
\newcolumntype{C}[1]{>{\centering\let\newline\\\arraybackslash\hspace{0pt}}m{#1}}
\newcolumntype{R}[1]{>{\raggedleft\let\newline\\\arraybackslash\hspace{0pt}}m{#1}}

\usepackage{fancyhdr} % This should be set AFTER setting up the page geometry
\usepackage{sectsty}
\sectionfont{\large}
\subsectionfont{\normalsize}
\subsubsectionfont{\normalsize}

\usepackage{xcolor}
\definecolor{forest}{rgb}{0.0, 0.5, 0.0}

\usepackage[normalem]{ulem}
\graphicspath{{./Figures/}}

\newtheorem{proposition}{Proposition}[section]
\newtheorem{corollary}{Corollary}[section]

\newtheorem{remark}{Remark}[section]

\usetikzlibrary{arrows, positioning}
\usetikzlibrary{calc}

 \title{To what extent can long-differencing capture climate adaptation?\thanks{We thank participants at the 2026 World Congress of Environmental and Resource Economists for helpful comments and suggestions. Ghanem gratefully acknowledges the support of National Institute of Food and Agriculture grant 2024-67023-42704. We used Gemini (3 and 3.5) and Claude (Sonnet 5) to assist with coding and Claude (Fable 5) to assist with algebraic derivations for the analytical examples.}}
  \author{Dalia Ghanem\thanks{Department of Agricultural and Resource Economics, University of California Davis, \url{dghanem@ucdavis.edu}.}\quad Felix Pretis\thanks{Department of Economics, University of Victoria, British Columbia, \url{fpretis@uvic.ca}.}\quad Daniel Schuurman\thanks{Department of Agricultural and Resource Economics, University of California Davis, \url{dschuurman@ucdavis.edu}.}}
  \date{First Draft: July 24, 2026\\Comments welcome!}
\begin{document}
\maketitle

\begin{abstract}
Understanding the degree to which we are able to adapt to climate change is central to economic assessments of future climate damages. Economists increasingly use comparisons between long differences and fixed effects estimators to measure climate adaptation. We show that such comparisons can be misleading. Neither estimator is consistent for its intended parameter, as both the long-difference (LD) and fixed effects (FE) estimands are weighted averages of the long- and short-run responses to climate and weather. As a result, the difference between the two understates the true extent of adaptation, and the standard test based on this difference --while controlling size -- tends to be substantially underpowered in the settings researchers typically encounter. An empirically-calibrated simulation shows this difference understates adaptation by about 30--80\%, depending on the averaging window.
\end{abstract}

\newpage

\section{Introduction}

The economic costs of future climate change will depend not only on the trajectory of warming itself, but on the extent to which households, firms, and governments adapt to the changing climate they face. However, there is substantial uncertainty around the degree of existing adaptation and what the limits to adaptation might be. Economists are thus increasingly interested in measuring the extent of adaptation to climate change \citep{carleton2024hecc,lemoine2025nber,kolstad2020reep}. A common approach to estimating and testing for adaptation uses long differences (LD) \citep{dell2012aejm,burke2016aejep} to isolate climate variation from transitory weather shocks, comparing multi-year averages to filter out the transitory shocks and reveal how systems adjust to permanent environmental shifts over time. Adaptation is then inferred by comparing LD estimates of long-run responses to climatic variables with standard panel fixed effects (FE) estimates of short-run responses to observed weather. This empirical approach has been used to study adaptation and the long-run effects of climate in a wide range of contexts, including agriculture \citep{won2024ajae,cui2025,chen2021jde,yu2021sr,taylor2026jaere}, labor  \citep{liu2023aejep}, manufacturing \citep{ponticelli2023nber}, migration \citep{obolensky2024nber,baylis2025jpub}, and health \citep{obradovich2018pnas,carleton2017pnas}. Despite the wide use of LD to FE comparisons, little is known about whether this comparison actually recovers adaptation. The goal of this paper is to formally examine the conditions under which comparisons of FE and LD estimators can be informative about the extent of adaptation. Our results show that neither LD nor FE estimators are consistent for their respective target parameters, and thus resulting tests for adaptation can be substantively underpowered.

We first show that the LD and FE estimands are weighted averages of the long- and short-run response. Intuitively, the weight on the short-run response in the LD estimand depends on the contribution of short-run weather variation that survives the LD transformation. We therefore refer to it as the LD ``contamination weight''. The weight on the long-run response in the FE estimand depends on the contribution of climate variation that survives the FE transformation. We refer to it as the FE contamination weight. Under mild conditions, both LD and FE estimands are attenuated towards zero and away from their respective targets. 

This formal analysis relies on a standard outcome model used in the literature \citep[e.g.][]{burke2016aejep} to justify long-differencing, but does not require any assumptions on how the climate process itself evolves. Indeed, an important takeaway from our results is that long-differencing is agnostic about the climate process, though not about how the outcome of interest is generated given climate and weather shocks.\footnote{We show how the probability limits are affected by a deviation from the proposed outcome model to allow for response heterogeneity (see Remark \ref{rem:heterogeneity}).}

Using the probability limits, we can show that comparing FE and LD estimands constitutes a test by implication of the null hypothesis of no adaptation. The equality of the FE and LD estimands is a necessary but not sufficient condition of the no-adaptation hypothesis. If the LD and FE contamination weights sum to one, then this test would have trivial power regardless of the extent of adaptation. Caution is therefore warranted when interpreting non-rejections of adaptation tests, as with other tests that are based on an implication of the null hypothesis in question (such as identification tests), as opposed to an equivalent condition.

We illustrate the formal analysis numerically using an empirically-calibrated simulation using the temperature dataset used in \citet{burke2016aejep}. Consistent with our formal results, this simulation design demonstrates that the difference between LD and FE estimators is (in most conventional settings) downwardly biased relative to the true extent of adaptation. This downward bias can range from 30-80\% depending on the choice of averaging window used in the LD estimator and can lead to substantive loss in finite-sample power. We also report simulation results for the contamination weights, which underscore that the LD contamination weight tends to be larger in magnitude relative to the FE contamination weight. Furthermore, the LD contamination weight can vary substantially with the choice of the averaging window in the LD estimator.

In order to formally examine the role of the averaging window in the LD contamination weight, we consider two analytical examples as well as numerical examples using temperature data. We analyze the contamination weights analytically for two time series processes, specifically a linearly trending and a unit root case. For both examples, our analysis demonstrates that the relationship between the averaging window and the LD contamination weight is nonlinear and depends on the climate process. We demonstrate the corresponding patterns using the temperature data used in our simulations, where the climate component is defined as the 10- or 30-year normal. This analysis demonstrates that the specific choice of the bias-minimizing averaging window depends on the unobserved climate process and the LD contamination weight can be substantive, even at the minimizing choice of averaging window. In order to assess the magnitude of this weight in practice, researchers would have to consider different climate specifications. The contamination weights estimated using these specifications can also be used to construct confidence intervals for the true extent of adaptation by test inversion. We discuss these implications for empirical practice in Section \ref{sec:implications}.

This paper highlights econometric issues with the use of long differencing to measure adaptation. Concerns about the variation the LD method uses to identify long-run response were illustrated in \citet{burke2016aejep} and discussed in other work \citep{kolstad2020reep,lemoine2018nber,lemoine2025nber}. We demonstrate that LD and FE are biased away from their respective targets and demonstrate the implications for testing and measuring the extent of adaptation based on their difference.

Our work also contributes to a broad body of methodological work measuring long-term climate impacts and adaptation. Among the methods currently used, including non-linear panel data methods \citep{merel2021ajae,deryugina2017nber}, multistage models \citep{auffhammer2022jeem,butler2013ncc}, and partitioning variation approaches \citep{bento2023jeem,bilal2024nber,moore2014ncc,gammans2017}, LD is a commonly used approach. Our formal analysis provides guidance for interpreting adaptation tests and measures based on differences between LD and FE.

Our analysis is closely related to work on measurement error in panel models (\citealt{griliches1986errors}) where within- and first-difference transformations split regressor variation in distinct ways and therefore produce predictable patterns of bias. In their classical errors-in-variables setting, \citet{griliches1986errors} decompose observed regressors into a true signal and an i.i.d.\ measurement error. They compare estimators across within and first-difference transformations to characterize attenuation bias and recover the true coefficient. Our setting shares the similar feature of a regressor decomposed into two components with different time-series properties, but our analysis departs in a substantive way: the high-frequency component in our setting is not measurement error but instead a weather shock carrying its own causal short-run response. Differential filtering therefore does not produce attenuation toward zero, but yields estimators that are weighted averages of long- and short-run responses, with biases of opposite signs relative to their respective targets. Whereas \citeauthor{griliches1986errors} use differential filtering to recover a single coefficient, our analysis shows that a test for climate adaptation attenuates the very difference between long-run and short-run responses that the test is meant to detect. Last but not least, we do not (and cannot) assume that climate and weather shocks are uncorrelated as in the classical measurement error problem, as they are necessarily dependent.

\section{FE and LD estimators in the presence of adaptation}
\label{sec:theory_section}
In this section, we demonstrate that the FE and LD estimands are biased away from their respective targets, characterize their contamination weights and show that the adaptation test is a test by implication, highlighting its dependence on the bias of both estimands. We then illustrate this analysis in an empirically-calibrated simulation study.
\subsection{Setup}
FE models are used widely to estimate the effect of weather on economic outcomes of interest. The predominant specifications examined in this literature are linear in the parameters, while nonlinear in the higher-frequency (e.g.\ daily) temperature
\citep[for a review of the econometric specifications, see][]{cui2024on}
\begin{eqnarray}
y_{it}&=&\beta x_{it}+a_i+\lambda_t+z_{it}'\gamma+\varepsilon_{it}\quad i = 1,...,n, \quad t = 1,...,T. \label{eq:y_x_model}\\
    x_{it}&=&\mu(\mathcal{W}_{it})\nonumber
\end{eqnarray}
where $i$ denotes the cross-sectional unit, $t$ denotes the time period (commonly year), and $h$ the higher frequency dimension, which is typically daily. The higher-frequency temperature time series is denoted by $\mathcal{W}_{it}\equiv \{W_{it1},\dots,W_{itH}\}$. To simplify notation, we assume that $x_{it}$ is scalar, but the framework allows $x_{it}$ to be multi-dimensional and include bins, degree day measures, splines and other popular specification choices as a special case.

To allow cross-sectional units to respond differently to short- and long-run changes in weather, suppose that observed weather $x_{it}$ is the sum of climate $c^*_{it}$ and weather shocks $w^*_{it}$ that are both latent to the econometrician, $x_{it}=c^*_{it}+w^*_{it}$.\footnote{For instance, if climate only changes the mean as assumed in some climate models \citep{burke2016aejep,deryugina2017nber}, such that $x_{it}\sim (\mu_{it},\sigma_x^2)$, then climate is given by $c_{it}^*=\mu_{it}$ and weather by deviations from climate $w_{it}^*=x_{it}-\mu_{it}\sim (0,\sigma_w^2)$. The assumption that weather is composed of a low-frequency component representing climate and a high-frequency component representing weather shocks is also present in work including \citet{gospodinov2025}, \citet{bilal2024nber}, and \citet{bento2023jeem}. } The following specification \citep[e.g.][]{burke2016aejep} allows $\theta_S$ to capture the short-run response to $w_{it}^*$ and $\theta_L$ to capture the long-run response to $c_{it}^*$,\footnote{See \citet{burke2016aejep} Equation (2) in Supplemental Appendix Section A.2.1.} 
\begin{equation}
  y_{it} = \theta_S w^*_{it} + \theta_L  c^*_{it} +\alpha_i+ \lambda_t+z_{it}'\gamma+\varepsilon_{it}, \quad i = 1,...,n, \quad t = 1,...,T. \label{eq:outcome_model_covariates}
\end{equation}
In the absence of adaptation ($\theta_L=\theta_S$), Eq.\ \eqref{eq:outcome_model_covariates} simplifies to Eq.\ \eqref{eq:y_x_model}; that is, the latter is a restricted version of the former.

The setup in Eq.\ \eqref{eq:outcome_model} parallels that of \citet{griliches1986errors}, who study within- and first-difference estimators under classical measurement error. The key distinction is that they treat the second component as a nuisance to be eliminated, whereas we treat it as a weather shock with its own causal effect — which, as shown below, changes the nature of the resulting bias.

To simplify presentation hereinafter, we abstract from covariates and additional fixed effects in Eq.\ \eqref{eq:y_x_model} and consider the following model,
\begin{eqnarray}y_{it}&=&\theta_S w_{it}^*+\theta_L c_{it}^*+\alpha_i+\varepsilon_{it}\label{eq:outcome_model}\end{eqnarray}
We emphasize however that this omission is without loss of generality as the extension of our results to \eqref{eq:outcome_model_covariates} is immediate by Frisch-Waugh-Lovell Theorem.

\begin{remark}[Beyond homogeneous response models] 
We extend our results to heterogeneous response models in Appendix \ref{app:heterogeneity} and summarize the main takeaways in Remark \ref{rem:heterogeneity}.
\end{remark}
\subsection{Definitions: FE, LD, and adaptation test}
In this section, we introduce the FE and LD estimators as well as the adaptation test.

Let $\tilde z_{it}$ represent the within-transformed version of $z_{it}$, formally $\tilde{z}_{it}\equiv z_{it}- \frac{1}{T}\sum_{t=1}^Tz_{it}$. To simplify notation, we use $\sum_t$ to denote $\sum_{t=1}^T$. The FE model is given by the following,
\begin{equation}
    \tilde y_{it} = \beta_{FE}\tilde x_{it} + \tilde u_{it}.
\end{equation}
The population analogue of the FE estimator, $\hat{\beta}_{FE}$, as $n\rightarrow\infty$ is denoted by $\beta_{FE}\equiv \dfrac{\bar{E}[\sum_t\tilde{x}_{it}\tilde{y}_{it}]}{\bar{E}[\sum_t\tilde{x}_{it}^2]}$, where $\bar{E}[Z_i]\equiv \lim_{n\rightarrow\infty}\frac{1}{n}\sum_{i=1}^nE[Z_i]$ and we assume sufficient regularity conditions hold.
 
The LD approach relies on a different transformation of the outcome and regressor that takes the difference of their average in two equal-sized, non-overlapping windows. The logic is that temporal averaging cancels out transitory weather fluctuations, leaving only climate variation in the differences between averages. The LD-transformed model is given by
\begin{equation}
   \Delta \bar y_{i} = \beta_{LD}\Delta \bar x_{i} + \Delta \bar u_{i},
\end{equation}
where $\Delta \bar z_{i}$ is the long-differenced version of $z_{it}$, formally $\Delta \bar z_{i}\equiv \frac{1}{\tau}\sum_{t=T-\tau+1}^T z_{it}-\frac{1}{\tau}\sum_{t=1}^\tau z_{it}$ for some $1\leq \tau\leq T/2$. Let $\beta_{LD}$ denote the population analogue of the LD estimator $\hat{\beta}_{LD}$, formally $\beta_{LD}\equiv \frac{\bar{E}[\Delta \bar x_i\Delta \bar y_i]}{\bar{E}[\Delta \bar x_i^2]}$, where we assume sufficient moment conditions.\footnote{With a slight abuse of notation, we will use $\Delta \bar z_i^2$ to denote the square of $(\Delta \bar z_i)^2$ to avoid having too many parantheses throughout.} The LD estimand therefore depends on the user's choice of $\tau$: the time span over which the average is computed in the LD transformation. Figure \ref{fig:ld_lit} presents the choices of $\tau$ together with the study period ($T$) in recent articles that have used LD and demonstrates a lack of consensus on the choice of this tuning parameter.

\begin{figure}[htbp]
    \centering
    \begin{tabular}{cc}
    \includegraphics[width=0.45\linewidth]{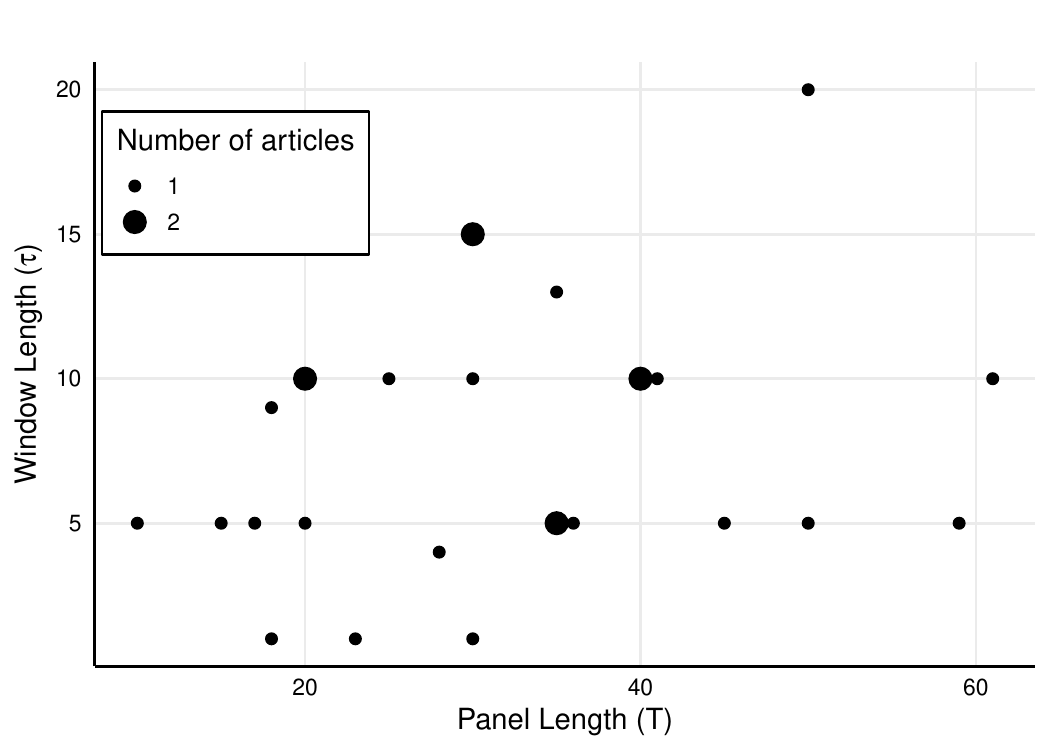}&  \includegraphics[width=0.45\linewidth]{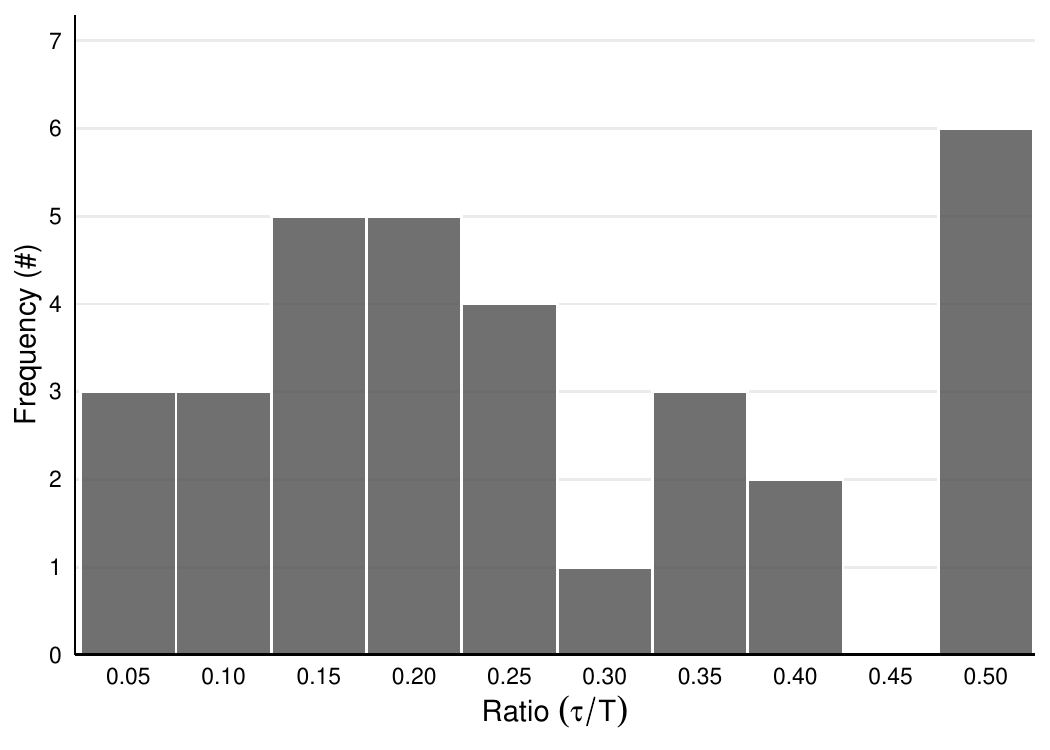}\\
   (a) Choices of $\tau$ by panel length ($T$) & (b) Histogram of the ratio of $\tau/T$ \\
    \end{tabular}
    \caption{Summary of choices of long-differencing averaging window $\tau$ across the literature}
    \label{fig:ld_lit}
        \raggedright
    \begin{singlespace}
    {\footnotesize \textit{Notes:} This figure plots 32 specifications choices for $\tau$ with corresponding panel lengths ($T$) from 29 research articles using LD between 2012 and 2026.} 
    \end{singlespace}
\end{figure}

 Differences between ${\beta}_{FE}$ and ${\beta}_{LD}$ are attributed to adaptation. The economic reasoning behind this comparison is that the fixed effect estimand $\beta_{FE}$ measures the marginal responses to weather where certain long-run adaptations, such as technology or capital investments, are not possible \citep{lemoine2018nber,burke2016aejep}.\footnote{See \citet{lemoine2018nber} for detailed analysis of the margins of adaptation captured by panel first- and long-difference estimators in a dynamic environment where agents maximize expected payoffs that depend on a stock variable.} By contrast, the LD estimand reflects a longer horizon where it is assumed that agents are able to adjust along additional margins. Under this framework, the null hypothesis of no adaptation is given by: 
\begin{equation}
    H_0: \beta_{FE}=\beta_{LD}.
\end{equation}

A rejection of $H_0$ implies that the long-run response to climate is statistically different from the short-run response. The measured adaptive behavior depends on the direction of the difference. In the often-studied cases where larger marginal effects of treatment imply adverse outcomes (i.e.\ heat on mortality or crop yields), a long-run response that is smaller than short-run ($|\beta_{LD}|<|\beta_{FE}|$) indicates long-run adaptation.\footnote{It may be worth noting that this test does not speak to the \textit{cost} of adaptation.} Conversely, a long-run response that is larger ($|\beta_{LD}|>|\beta_{FE}|$) indicates intensification of climate impacts exceeds agents' ability to adapt.

\subsection{What are LD and FE consistent for when long-run and short-run response differ?}
\label{sec:theory_results}

In this section, we show the probability limits of the FE and LD estimators. While the difference between the probability limit of a given estimator and its target parameter is technically an inconsistency, we will often refer to it as bias for simplicity. We provide the proofs for the following propositions in Appendix \ref{appendix:proofs}. In addition to standard moment conditions required for the probability limits to be well-defined, we impose strict exogeneity throughout the paper to focus our analysis on the bias stemming from $\theta_L\neq \theta_S$.

\begin{proposition} \label{prop:FE} 
Suppose that $\bar{E}[\sum_t\tilde{x}_{it}\tilde{y}_{it}]<\infty$ and $0<\bar{E}[\sum_t\tilde{x}_{it}^2]<\infty$. Suppose further that Eq.\ \eqref{eq:outcome_model} holds and $E[\varepsilon_{it}|X_i] = 0$ for all $i=1,\dots,n$ and $t=1,\dots,T$. 
\begin{enumerate}[(i)]
\item $\beta_{FE}=\omega_{S}^{FE}\theta_S+\omega_{L}^{FE}\theta_L$, where 
\begin{eqnarray}\omega_S^{FE}\equiv\frac{\bar{E}[\sum_t\tilde{w}^{*2}_{it}]+\bar{E}[\sum_t\tilde{w}^{*}_{it}\tilde{c}^{*}_{it}]}{\bar{E}[\sum_t\tilde{x}_{it}^2]}\quad \omega_L^{FE}\equiv \frac{\bar{E}[\sum_t\tilde{c}^{*2}_{it}]+\bar{E}[\sum_t\tilde{w}^{*}_{it}\tilde{c}^{*}_{it}]}{\bar{E}[\sum_t\tilde{x}_{it}^2]}\end{eqnarray}
and $\omega_{S}^{FE}+\omega_{L}^{FE}=1$.
\item $\beta_{FE}-\theta_S=(\theta_L-\theta_S)\omega_L^{FE}$.
\end{enumerate}
\end{proposition}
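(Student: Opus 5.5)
The plan is to substitute the outcome model \eqref{eq:outcome_model} into the numerator of $\beta_{FE}$ and use linearity of the within transformation. Since $\tilde{z}_{it}$ is linear in $z_{it}$ and annihilates anything constant over $t$, the fixed effect $\alpha_i$ drops out. We obtain
\begin{equation*}
\tilde y_{it}=\theta_S\tilde w^*_{it}+\theta_L\tilde c^*_{it}+\tilde\varepsilon_{it},
\end{equation*}
and, using $x_{it}=c^*_{it}+w^*_{it}$, also $\tilde x_{it}=\tilde c^*_{it}+\tilde w^*_{it}$.

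Multiplying out gives
\begin{equation*}
\sum_t\tilde x_{it}\tilde y_{it}=\theta_S\sum_t(\tilde w^{*2}_{it}+\tilde w^*_{it}\tilde c^*_{it})+\theta_L\sum_t(\tilde c^{*2}_{it}+\tilde w^*_{it}\tilde c^*_{it})+\sum_t\tilde x_{it}\tilde\varepsilon_{it}.
\end{equation*}
Taking $\bar E[\cdot]$ termwise is justified by the assumed finiteness of the moments.

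The main step is to show that $\bar E[\sum_t\tilde x_{it}\tilde\varepsilon_{it}]=0$. Here I write $\sum_t\tilde x_{it}\tilde\varepsilon_{it}=\sum_t\tilde x_{it}\varepsilon_{it}$, because $\sum_t\tilde x_{it}=0$. Each $\tilde x_{it}$ is a function of $X_i=(x_{i1},\dots,x_{iT})$. By iterated expectations and strict exogeneity, $E[\tilde x_{it}\varepsilon_{it}]=E[\tilde x_{it}E[\varepsilon_{it}|X_i]]=0$.

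This is the only point requiring care. Strict exogeneity is stated conditional on $X_i$ rather than on $(c^*_i,w^*_i)$. That suffices because only $\tilde x_{it}$ multiplies $\varepsilon_{it}$ after the decomposition has been collected into $\tilde x$. So I will keep the cross term in the form $\tilde x\varepsilon$ and not split it into $c^*$ and $w^*$ parts.

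Dividing by $\bar E[\sum_t\tilde x_{it}^2]>0$ yields (i) with the stated weights. For the adding-up identity, expand
\begin{equation*}
\tilde x_{it}^2=\tilde w^{*2}_{it}+2\tilde w^*_{it}\tilde c^*_{it}+\tilde c^{*2}_{it}.
\end{equation*}
Summing and taking expectations shows that the two numerators add to the denominator, so $\omega_S^{FE}+\omega_L^{FE}=1$.

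Part (ii) then follows by substituting $\omega_S^{FE}=1-\omega_L^{FE}$:
\begin{equation*}
\beta_{FE}-\theta_S=(1-\omega_L^{FE})\theta_S+\omega_L^{FE}\theta_L-\theta_S=(\theta_L-\theta_S)\omega_L^{FE}.
\end{equation*}
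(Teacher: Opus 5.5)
Your proof is correct and essentially matches the paper's argument: substitute the within-transformed outcome model, drop the error term by strict exogeneity, expand $\tilde x_{it}=\tilde c^*_{it}+\tilde w^*_{it}$, and get adding-up from the expansion of $\tilde x_{it}^2$. Your derivation of (ii) via $\omega_S^{FE}=1-\omega_L^{FE}$ is a tidier version of the paper's re-expansion of the denominator. One small caveat is that splitting $\bar{E}[\cdot]$ termwise needs the component moments such as $\bar{E}[\sum_t\tilde w^{*2}_{it}]$ to exist, which the paper takes as part of its standard regularity conditions rather than as a consequence of the two stated finiteness assumptions.
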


Part (i) of Proposition \ref{prop:FE} shows that the FE estimand is a weighted average of short- and long-run responses to weather and climate, $\theta_S$ and $\theta_L$, respectively. The weight on $\theta_S$, $\omega_S^{FE}$, depends on the within-demeaned weather shocks $\tilde w^*_{it}$ and its covariance with the within-demeaned climate.  Intuitively, if climate variation is eliminated by the within-group transformation, this weight equals 1. Part (ii) of Proposition \ref{prop:FE} demonstrates that the weight on the long-run response $\omega_L^{FE}$ biases the FE estimand away from the short-run response and depends on the variation from the unobserved component of climate that survives the within-transformed observed weather ($\tilde x_{it}$). We will therefore refer to $\omega_L^{FE}$ as the FE contamination weight. The resulting bias therefore depends on both the magnitude of $\omega_L^{FE}$ and the difference between $\theta_L$ and $\theta_S$. This resonates with recent work challenging the notion that FE estimators solely capture short-run response since agents can adapt to expectations of change \citep{lemoine2018nber,shrader2023wp}.

\begin{proposition} \label{prop:LD} Suppose that $\bar{E}[\Delta\bar{x}_{i}\Delta\bar{y}_{i}]<\infty$ and $0<\bar{E}[\Delta\bar{x}_{i}^2]<\infty$. Suppose further that Eq.\ \eqref{eq:outcome_model} holds and $E[\varepsilon_{it}|X_i] = 0$ for all $i=1,\dots,n$ and $t=1,\dots,T$. 
    \begin{enumerate}[(i)]
\item $\beta_{LD}=\omega_S^{LD}\theta_S+\omega_L^{LD}\theta_L$, where
\begin{eqnarray}\omega_S^{LD}\equiv\frac{\bar{E}[\Delta \bar{w}_{i}^{*2}]+\bar{E}[\Delta \bar{w}_{i}^{*} \Delta \bar{c}_{i}^{*}]}{\bar{E}[\Delta \bar{x}_{i}^2]},\quad \omega_L^{LD}\equiv \frac{\bar{E}[\Delta \bar{c}_{i}^{*2}]+\bar{E}[\Delta \bar{w}_{i}^{*} \Delta \bar{c}_{i}^{*}]}{\bar{E}[\Delta \bar{x}_{i}^2]},\end{eqnarray}
and $\omega_{S}^{LD}+\omega_{L}^{LD}=1$.
\item $\beta_{LD}-\theta_L=(\theta_S-\theta_L)\omega_S^{LD}$
\end{enumerate}
\end{proposition}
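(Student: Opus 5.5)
The plan mirrors the argument one would give for Proposition \ref{prop:FE}, with the long-difference operator $\Delta\bar{(\cdot)}$ in place of the within transformation. First, apply the LD transformation to Eq.\ \eqref{eq:outcome_model}. The operator $z\mapsto \frac{1}{\tau}\sum_{t=T-\tau+1}^T z_{it}-\frac{1}{\tau}\sum_{t=1}^{\tau} z_{it}$ is linear and annihilates anything constant over $t$, so it removes $\alpha_i$. This gives
\begin{equation*}
\Delta\bar y_i=\theta_S\Delta\bar w^*_i+\theta_L\Delta\bar c^*_i+\Delta\bar\varepsilon_i .
\end{equation*}
Linearity also gives $\Delta\bar x_i=\Delta\bar w^*_i+\Delta\bar c^*_i$, because $x_{it}=c^*_{it}+w^*_{it}$.

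Second, multiply the transformed equation by $\Delta\bar x_i$, take expectations, and average over $i$. The error term drops out. Indeed, $\Delta\bar x_i$ is a function of $X_i$ and $\Delta\bar\varepsilon_i$ is a fixed linear combination of the $\varepsilon_{it}$. By the law of iterated expectations and $E[\varepsilon_{it}|X_i]=0$ for every $t$, we get $E[\Delta\bar x_i\Delta\bar\varepsilon_i]=0$, and hence $\bar E[\Delta\bar x_i\Delta\bar\varepsilon_i]=0$. The assumed moment conditions ensure the limits exist and the denominator is positive.

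Third, expand $\bar E[\Delta\bar x_i\Delta\bar w^*_i]=\bar E[\Delta\bar w^{*2}_i]+\bar E[\Delta\bar w^*_i\Delta\bar c^*_i]$ and $\bar E[\Delta\bar x_i\Delta\bar c^*_i]=\bar E[\Delta\bar c^{*2}_i]+\bar E[\Delta\bar w^*_i\Delta\bar c^*_i]$. Dividing by $\bar E[\Delta\bar x_i^2]$ yields $\beta_{LD}=\omega_S^{LD}\theta_S+\omega_L^{LD}\theta_L$ with exactly the stated weights. For the sum of the weights, note that
\begin{equation*}
\bar E[\Delta\bar x_i^2]=\bar E[\Delta\bar w^{*2}_i]+2\bar E[\Delta\bar w^*_i\Delta\bar c^*_i]+\bar E[\Delta\bar c^{*2}_i],
\end{equation*}
which is the sum of the two numerators, so $\omega_S^{LD}+\omega_L^{LD}=1$.

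Part (ii) then follows by substituting $\omega_L^{LD}=1-\omega_S^{LD}$:
\begin{equation*}
\beta_{LD}-\theta_L=\omega_S^{LD}\theta_S+(1-\omega_S^{LD})\theta_L-\theta_L=(\theta_S-\theta_L)\omega_S^{LD}.
\end{equation*}

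The only step needing care is the error term. The paper conditions on $X_i$, so we must check that the conditioning information is rich enough, i.e.\ that $\Delta\bar w^*_i$ and $\Delta\bar c^*_i$ themselves do not need to be orthogonal to the error separately. They do not: only the moment with $\Delta\bar x_i$, the actual regressor, enters the estimand. We also need the individual second moments of $\Delta\bar w^*_i$ and $\Delta\bar c^*_i$ to be finite so the expansion is legitimate, which we treat as part of the standard moment conditions. Everything else is routine algebra.
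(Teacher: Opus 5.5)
Your proposal is correct and follows essentially the same route as the paper: apply the LD transformation to the outcome equation, remove the error term by strict exogeneity, expand $\Delta\bar x_i=\Delta\bar w^*_i+\Delta\bar c^*_i$, and note that the two numerators sum to the common denominator. For part (ii) you substitute $\omega_L^{LD}=1-\omega_S^{LD}$ directly, whereas the paper subtracts $\theta_L$ and re-expands $\bar E[\Delta\bar x_i^2]$, but these are the same computation.
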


Part (i) of Proposition \ref{prop:LD} shows that the LD estimand is also a weighted average of the short- and long-run parameters. The weight from short-run variation, $\omega_S^{LD}$, depends on the extent of the variation from the unobserved weather shocks $w^*_{it}$ that survives the LD transformation. We will refer to $\omega_S^{LD}$ as the LD contamination weight.  Part (ii) of Proposition \ref{prop:LD} demonstrates that the bias of $\beta_{LD}$ depends on both the magnitude of $\omega_S^{LD}$ and the difference between short- and long-run response parameters. 

To provide intuition on the weight in $\omega_S^{LD}$, the bias of $\beta_{LD}$ in Proposition \ref{prop:LD}, it is helpful to consider the case where the long-differenced climate and weather shocks are uncorrelated, $\bar{E}[\Delta\bar w_i^*\Delta \bar c_i^*]=0$. In this case, $\omega_S^{LD}$ depends inversely on the climate signal-to-noise ratio. To see this, first we simplify $\omega_S^{LD}$ under the zero-covariance assumption and then divide both numerator and denominator by $\bar{E}[\Delta \bar{w}_i^{*2}]$, which yields the following, 
\begin{eqnarray}
    \omega_S^{LD}&=&\frac{\bar{E}[\Delta \bar{w}_i^{*2}]}{\bar{E}[\Delta \bar{c}_i^{*2}]+\bar{E}[\Delta \bar{w}_i^{*2}]}=\frac{1}{1+SNR_c}
\end{eqnarray}
where we use $SNR_c\equiv \bar{E}[\Delta \bar{c}_i^{*2}]/\bar{E}[\Delta \bar{w}_i^{*2}]$ to denote the climate signal-to-noise ratio assuming $\bar{E}[\Delta \bar{w}_i^{*2}]> 0$.
Clearly, $\omega_S^{LD}$ (and consequently the bias of $\beta_{LD}$) decreases as $SNR_c$ increases. In Section \ref{sec:choice_tau}, we examine how the choice of $\tau$, the LD averaging window, affects the magnitude of $\omega_S^{LD}$.

The formulae in Propositions \ref{prop:FE} and \ref{prop:LD} highlight the potential for bias in both estimators of short- and long-run responses. Importantly, the weights and resulting bias depend on components of the data-generating process unobservable to the researcher, which we analyze numerically in Section \ref{sec:simulations}.

\subsection{Attenuation bias in LD and FE estimation}

Given the natural connection to measurement error problems, we next examine the conditions under which our setting yields attenuation bias similar to classical measurement error problems. To obtain this result, we impose plausible non-negative covariance assumptions. Specifically, we assume that LD- and FE-transformed weather shocks and climate are non-negatively correlated. The climate literature often considers transitory inter-annual weather variability and long-run climate trends to be physically independent phenomena driven by distinct mechanisms \citep{marotzke2015nature,maher2018grl,lehner2023erc,hasselmann1976tf}, which imply that the two variables are uncorrelated.  While this covariance might temporarily be ambiguous in short panels due to cyclical patterns like El Niño, over longer horizons this correlation dissipates (or is likely positive). Let $Sgn(x)$ denote the function that yields the sign of its argument $x$, specifically $Sgn(x)=1\{x>0\}-1\{x<0\}$.
\begin{corollary}[Attenuation bias]  \label{cor:attenuation}Suppose the assumptions in Propositions \ref{prop:FE} and \ref{prop:LD} hold. Furthermore, suppose that $\bar{E}[\tilde{w}_{it}^*\tilde{c}_{it}^*]\geq 0$ and $\bar{E}[\Delta \bar{w}_i^*\Delta \bar{c}_i^*]\geq 0$,

\begin{enumerate}[(i)] \item $|\beta_{FE}-\theta_S|\leq |\theta_L-\theta_S|$ and $Sgn(\beta_{FE}-\theta_S)=Sgn(\theta_L-\theta_S)$
 
\item $|\beta_{LD}-\theta_L|\leq |\theta_S-\theta_L|$ and $Sgn(\beta_{LD}-\theta_L)=Sgn(\theta_S-\theta_L)$.

\item $Sgn(\beta_{FE}-\theta_S)=-Sgn(\beta_{LD}-\theta_L)$

\end{enumerate}
\end{corollary}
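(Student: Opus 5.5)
The plan is to derive everything from the bias representations in part (ii) of Propositions \ref{prop:FE} and \ref{prop:LD}. It then suffices to show that the contamination weights lie in the unit interval, $\omega_L^{FE}\in[0,1]$ and $\omega_S^{LD}\in[0,1]$.

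\textbf{Step 1: the weights lie in $[0,1]$.} I would decompose the FE denominator using $\tilde x_{it}=\tilde c^*_{it}+\tilde w^*_{it}$:
\begin{equation*}
\bar E\Big[\sum_t\tilde x_{it}^2\Big]=\bar E\Big[\sum_t\tilde c^{*2}_{it}\Big]+\bar E\Big[\sum_t\tilde w^{*2}_{it}\Big]+2\,\bar E\Big[\sum_t\tilde w^*_{it}\tilde c^*_{it}\Big].
\end{equation*}
The numerator of $\omega_L^{FE}$ is $\bar E[\sum_t\tilde c^{*2}_{it}]+\bar E[\sum_t\tilde w^*_{it}\tilde c^*_{it}]$. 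Under the assumed non-negative covariance, this numerator is a sum of non-negative terms, so $\omega_L^{FE}\ge 0$. The same holds for the numerator of $\omega_S^{FE}$, so $\omega_S^{FE}\ge 0$. Since Proposition \ref{prop:FE}(i) gives $\omega_S^{FE}+\omega_L^{FE}=1$, it follows that $\omega_L^{FE}=1-\omega_S^{FE}\le 1$. The identical argument with $\Delta\bar x_i=\Delta\bar c^*_i+\Delta\bar w^*_i$ and the assumption $\bar E[\Delta\bar w^*_i\Delta\bar c^*_i]\ge 0$ gives $\omega_S^{LD},\omega_L^{LD}\ge 0$, and hence $\omega_S^{LD}\in[0,1]$.

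There is one notational point to handle. The statement writes the covariance condition as $\bar E[\tilde w^*_{it}\tilde c^*_{it}]\ge 0$, without the sum over $t$. I would read it as holding for each $t$, or equivalently for the sum, since summing non-negative terms preserves non-negativity. I should state this reading explicitly.

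\textbf{Step 2: parts (i) and (ii).} By Proposition \ref{prop:FE}(ii),
\begin{equation*}
|\beta_{FE}-\theta_S|=\omega_L^{FE}\,|\theta_L-\theta_S|\le|\theta_L-\theta_S|.
\end{equation*}
For the sign claim, $Sgn(\beta_{FE}-\theta_S)=Sgn(\omega_L^{FE})\,Sgn(\theta_L-\theta_S)$. This is the step I expect to be the main obstacle, because it fails in the degenerate case $\omega_L^{FE}=0$ with $\theta_L\neq\theta_S$: then the left side is $0$ but the right side is not.

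I would handle it in one of two ways. The first option is to note that a strict sign equality requires $\omega_L^{FE}>0$, for example $\bar E[\sum_t\tilde c^{*2}_{it}]>0$, meaning some climate variation survives demeaning. The second option is to interpret the sign statement as weak, i.e. $Sgn(\beta_{FE}-\theta_S)\in\{0,Sgn(\theta_L-\theta_S)\}$. I would adopt whichever reading matches the intended result and flag the degenerate case. Part (ii) follows identically from Proposition \ref{prop:LD}(ii), giving $\beta_{LD}-\theta_L=\omega_S^{LD}(\theta_S-\theta_L)$ with $\omega_S^{LD}\in[0,1]$.

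\textbf{Step 3: part (iii).} I would combine the sign results from parts (i) and (ii):
\begin{equation*}
Sgn(\beta_{FE}-\theta_S)=Sgn(\theta_L-\theta_S)=-Sgn(\theta_S-\theta_L)=-Sgn(\beta_{LD}-\theta_L).
\end{equation*}
The same caveat applies here. The result needs strictly positive contamination weights, or else it holds trivially when $\theta_L=\theta_S$, in which case both sides are zero.
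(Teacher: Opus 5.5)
Your proof is correct and takes the route the paper evidently intends. The paper gives no separate appendix proof of Corollary~\ref{cor:attenuation}, but the result is exactly the bias representations in part (ii) of Propositions~\ref{prop:FE} and~\ref{prop:LD}, combined with your observation that the non-negative covariance conditions force $\omega_L^{FE},\omega_S^{LD}\in[0,1]$.

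Your caveat about the sign claims is warranted. As stated, (i)--(iii) fail when $\theta_L\neq\theta_S$ and a contamination weight is exactly zero. An example is $\omega_L^{FE}=0$ when the within transformation removes all climate variation, a case the paper itself mentions after Proposition~\ref{prop:FE}. So the sign equalities need strictly positive weights, or must be read weakly.

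One minor point: per-$t$ non-negativity of $\bar{E}[\tilde w^*_{it}\tilde c^*_{it}]$ implies the summed condition but is not equivalent to it. The summed condition is all you need.
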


The sign restriction on the covariances in Corollary \ref{cor:attenuation} allows us to place bounds on the magnitude and direction of the FE and LD biases relative to their respective targets. The magnitude of the bias of each estimator is weakly smaller than the difference between the short- and long-run population estimands. The third result in Corollary \ref{cor:attenuation} states that the biases are of opposite signs. The implication is that the biases demonstrated in Propositions \ref{prop:FE} and \ref{prop:LD} attenuate differences between the FE and LD estimands away from their respective targets.
\subsection{Testing and quantifying adaptation}
The following corollary building on Propositions \ref{prop:FE} and \ref{prop:LD} characterizes the relationship between $\beta_{LD}-\beta_{FE}$ and the extent of adaptation, $\theta_L-\theta_S$. This characterization has two practical takeaways. First, testing adaptation using the null hypothesis, $H_0: \beta_{FE}=\beta_{LD}$, constitutes a test by implication. Second, to estimate $\theta_L-\theta_S$ consistently, one has to estimate the contamination weights, which require an assumption on the climate process. 
\begin{corollary}\label{cor:adaptation_test}
Suppose the assumptions in Propositions \ref{prop:FE} and \ref{prop:LD} hold.\\ $\beta_{LD}-\beta_{FE}=(\theta_L - \theta_S)(1-\omega_S^{LD} - \omega^{FE}_L)$
 \end{corollary}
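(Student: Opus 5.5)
The plan is to derive the identity directly from part (ii) of Propositions \ref{prop:FE} and \ref{prop:LD}, since both already express each estimand as its target plus a contamination term. The hypotheses of the corollary are exactly those of the two propositions, so both decompositions are available simultaneously. No further assumption on the climate process or on the covariance between $w^*_{it}$ and $c^*_{it}$ is needed.

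First, I will write $\beta_{LD}=\theta_L+(\theta_S-\theta_L)\omega_S^{LD}$ from Proposition \ref{prop:LD}(ii). I will also write $\beta_{FE}=\theta_S+(\theta_L-\theta_S)\omega_L^{FE}$ from Proposition \ref{prop:FE}(ii). Subtracting the second from the first gives
\begin{equation*}
\beta_{LD}-\beta_{FE}=(\theta_L-\theta_S)-(\theta_L-\theta_S)\omega_S^{LD}-(\theta_L-\theta_S)\omega_L^{FE}.
\end{equation*}
This uses the rewriting $(\theta_S-\theta_L)\omega_S^{LD}=-(\theta_L-\theta_S)\omega_S^{LD}$. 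Factoring out $(\theta_L-\theta_S)$ then yields $(\theta_L-\theta_S)(1-\omega_S^{LD}-\omega_L^{FE})$, which is the claimed identity.

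If one prefers to work from part (i) instead, there is an equivalent route. Substitute $\omega_L^{LD}=1-\omega_S^{LD}$ and $\omega_S^{FE}=1-\omega_L^{FE}$, which hold by the adding-up results in the propositions, and expand. Collecting the coefficients gives $\theta_L(1-\omega_S^{LD}-\omega_L^{FE})-\theta_S(1-\omega_L^{FE}-\omega_S^{LD})$, which is the same expression.

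There is no real obstacle; the only point needing care is the sign bookkeeping when converting $(\theta_S-\theta_L)$ into $-(\theta_L-\theta_S)$. I would also note that the weights must be well defined, which requires the positivity of $\bar{E}[\sum_t\tilde{x}_{it}^2]$ and $\bar{E}[\Delta\bar{x}_i^2]$. Both conditions are already assumed in the propositions.
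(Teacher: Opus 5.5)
Your proposal is correct and essentially matches the paper. The paper subtracts the part (i) representations and invokes the adding-up identities $\omega_S^{FE}+\omega_L^{FE}=1$ and $\omega_S^{LD}+\omega_L^{LD}=1$, which is exactly your alternative route; your main route through part (ii) is the same computation, since part (ii) of each proposition is just part (i) rearranged using those identities.
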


Corollary \ref{cor:adaptation_test} characterizes the difference between the two estimands, which form the basis of the adaptation test. The difference in estimands is linear in $(\theta_L-\theta_S)$ with a slope equal to $(1 - \omega_S^{LD} - \omega_L^{FE})$. Since $\omega_S^{LD}$ and $\omega_L^{FE}$ depend on the data-generating process of $c_{it}^*$ and $w_{it}^*$, this corollary highlights the consequences of LD and FE comparisons being agnostic about the climate process. It also demonstrates that if researchers are willing to impose assumptions on this process, then they can conduct inference on $\theta_L-\theta_S$ using Corollary \ref{cor:adaptation_test}. We discuss this implication further in Section \ref{sec:implications}.

Corollary \ref{cor:adaptation_test} further demonstrates that the test of $H_0: \beta_{LD}=\beta_{FE}$ is based on a necessary, but not sufficient, condition of $H_0:\theta_L=\theta_S$. It therefore controls size, but may have trivial power under the alternative if $\omega_S^{LD}+\omega_L^{FE}=1$. Furthermore, even if $\omega_S^{LD}+\omega_L^{FE}\neq 1$ but close to it, then this can compromise the power of the test as we demonstrate numerically in the following section. Intuitively, the equality $\beta_{FE} = \beta_{LD}$ is implied by the
no-adaptation hypothesis $\theta_L = \theta_S$, but it is not equivalent to it:
by Corollary \ref{cor:adaptation_test} the same equality can arise under adaptation
whenever $\omega^{LD}_S + \omega^{FE}_L = 1$. A non-rejection is therefore
consistent with either no adaptation or with adaptation masked by contamination.
This is the same logic as other tests by implication, such as
over-identification tests.

\begin{remark}[FE and LD under response heterogeneity]\label{rem:heterogeneity}
In Appendix \ref{app:heterogeneity}, we provide probability limits for $\beta_{FE}$ and $\beta_{LD}$ when allowing for cross-sectional response heterogeneity through the following correlated random coefficient model, specifically
$$y_{it}=\theta_{S,i}w_{it}^*+\theta_{L,i}c_{it}^*+\alpha_i+\varepsilon_{it}.$$
This mild deviation from homogeneous response complicates the probability limits. For the LD estimator, the probability limit consists of two components:
\begin{eqnarray*}\beta_{LD}&=&\underbrace{\bar{E}\left[\theta_{L,i}\frac{E[\Delta \bar x_i^2|\theta_{L,i},\theta_{S,i}]}{\bar{E}[\Delta \bar{x}_{i}^2]}\right]}_{\text{variance-weighted average of $\theta_{L,i}$}}+\underbrace{\bar{E}\left[\frac{E[\Delta \bar x_i^2|\theta_{L,i},\theta_{S,i}]}{\bar{E}[\Delta \bar{x}_{i}^2]}\omega_{S,i}^{LD}(\theta_{S,i}-\theta_{L,i})\right]}_{\text{contamination from $\Delta \bar w_i^*$}}
\end{eqnarray*}
where $\omega_{S,i}^{LD}\equiv\frac{E[\Delta \bar{w}_i^{*2}|\theta_{L,i},\theta_{S,i}]+E[\Delta\bar{c}_i\Delta\bar{w}_i|\theta_{L,i},\theta_{S,i}]}{E[\Delta \bar{x}_{i}^2|\theta_{L,i},\theta_{S,i}]}$ assuming $E[\Delta \bar{x}_{i}^2|\theta_{L,i},\theta_{S,i}]>0$. 

The first component of $\beta_{LD}$ is a variance-weighted average similar to other contexts where response heterogeneity is ignored in a fixed effects estimand \citep[e.g.][]{gibbons2019,ghanem2021what}. As a result, even in the absence of contamination from $\Delta\bar w_i^*$, the LD estimand captures a weighted average that gives higher weight to cross-sectional units that have higher variability in long-differenced weather, $\Delta\bar x_i$, relative to those with less variability. 

The second component captures the contamination from $\Delta \bar w_i^*$. This term depends on the unit-specific variance weight, LD contamination weight ($\omega_{S,i}^{LD}$) and $\theta_{S,i}-\theta_{L,i}$. To simplify this term, suppose that $\omega_{S,i}^{LD}$ is a strictly positive constant for all $i$, then the second term is the variance-weighted average of $\theta_{S,i}-\theta_{L,i}$. Suppose that cross-sectional units that experienced higher weather variability exhibit higher adaptation and thereby have a larger difference between $\theta_{S,i}$ and $\theta_{L,i}$, then this would exacerbate the magnitude of the second component, assuming the sign of $\theta_{S,i}-\theta_{L,i}$ is the same for all $i$.
    \end{remark}

\subsection{Empirically-calibrated simulation study}\label{sec:simulations}

We illustrate the formal analysis in a simulation design calibrated to the empirical setting in \citet{burke2016aejep}. The authors study the sensitivity of crop yields to temperature, finding negative effects of degree days over 29$^\circ$ Celsius with FE and LD approaches. Using the observed temperature data, we conduct simulations with $c_{it}^*$ defined as a climate normal, $c_{it}^*=\frac{1}{m}\sum_{\ell=1}^mx_{i(t-\ell)}$, to evaluate the performance of various specifications of the studied estimators.
\subsubsection{Simulation Design}  
We reconstruct the temperature dataset used in \citet{burke2016aejep} to measure adaptation of corn yields to extreme heat. Table \ref{tab:simulation_design_1} presents the details of the data-generating process. We construct a dataset of county-level daily temperature for a longer sample period than the one used in \citet{burke2016aejep} (1950-2022) (See Appendix \ref{app_section:sim1_data}). The additional data allow us to rely on the climate normal specification of ($c_{it}^*$) and compare our formal results across various specification choices for the LD estimator.
\begin{table}[H]
\centering
\footnotesize
\begin{threeparttable}
\caption{Simulation Design: Data-Generating Process}
\label{tab:simulation_design_1}
\begin{tabular}{@{}p{4cm}p{8cm}@{}}
\midrule
Outcome: & $Y_{it} = \theta_S w^*_{it} + \theta_L c^*_{it} + \eta_i + \delta_t + \varepsilon_{it} $\\
&$\text{for } i = 1,\dots,N,\ t = 1,\dots,T$ \\
Observed weather $x_{it}$: & Degree days above 29$^\circ$C \citep{burke2016aejep}\\
Climate $c_{it}^*$:&$c_{it}^*=\frac{1}{m}\sum_{\ell=1}^mx_{i(t-\ell)}$\\
Parameters: & $\theta_S = -0.07,\ \theta_L \in \{-0.07, -0.06,..., 0\}$ \\
Units: & N = 1,000\\

\midrule
\end{tabular}
\end{threeparttable}
\label{sim_results}
\end{table}

\begin{figure}[htbp]
    \centering
\begin{tabular}{c}\includegraphics[width=0.65\linewidth]{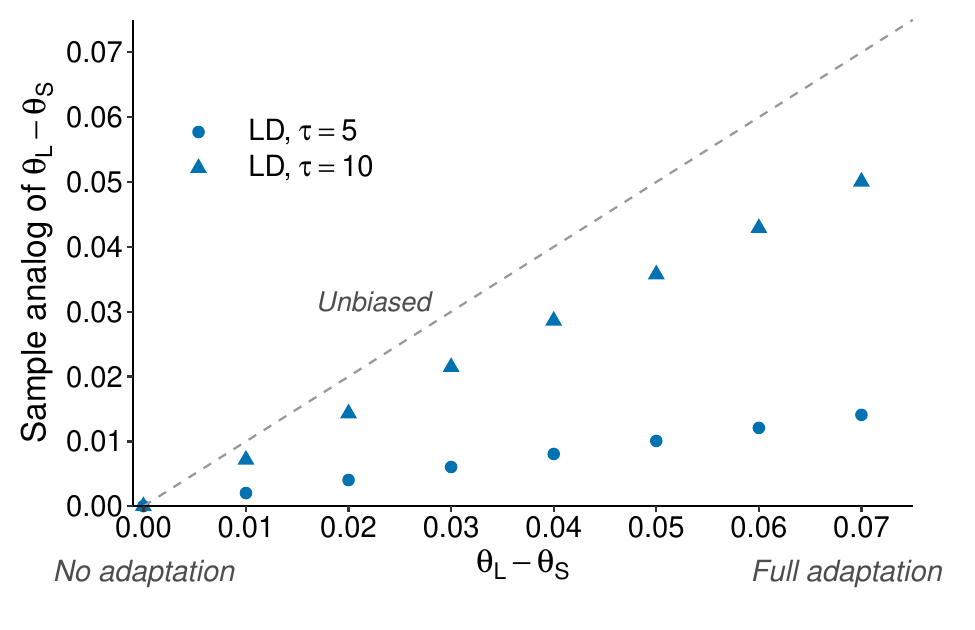}
    
    \\
    (a) Simulation Mean of $\hat{\beta}_{LD}-\hat{\beta}_{FE}$\\
    \\
    \\
        \includegraphics[width=0.65\linewidth]{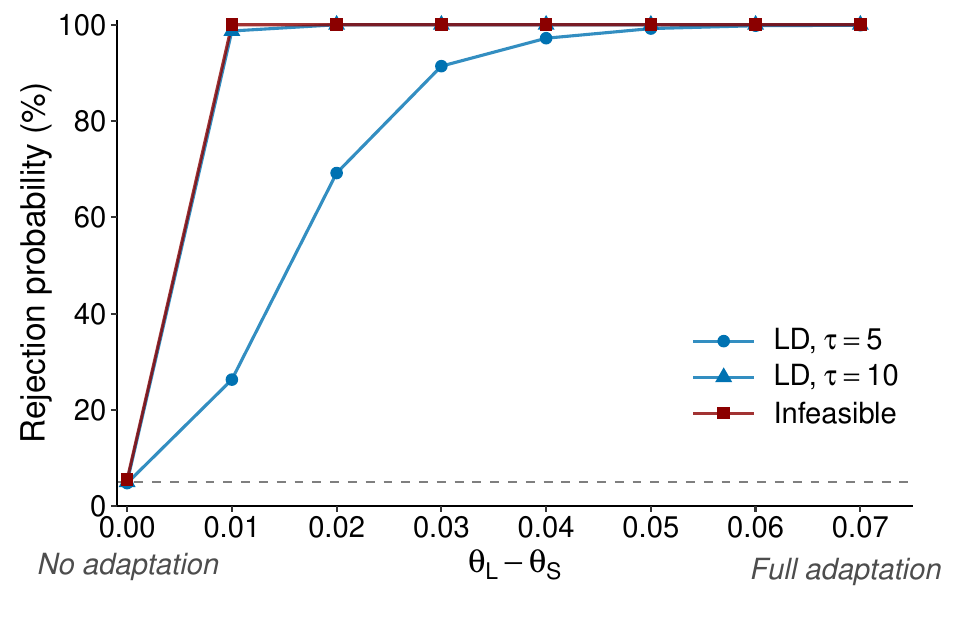}\\
    
    (b) Simulation Rejection Probability of Adaptation Test\\
    \end{tabular}
    
    \caption{Simulation Results: FE, LD and Adaptation Test}
    \label{fig:main simulation_results}
    \caption*{\scriptsize \textit{Notes:} Panel (a) plots the difference of the simulation estimate means $\bar{\hat \beta}_{LD}-\bar{\hat\beta}_{FE}$ against $\theta_{L}-\theta_{S}$ for long-differences using an averaging window ($\tau$) of 5 and 10. Panel (b) plots the rejection probability curves for LD ($H_0: \beta_{LD} = \beta_{FE}$). The null hypothesis of the infeasible test is $H_0:\theta_L=\theta_S$, which are estimated using $c_{it}^*$ and $w_{it}^*$ as regressors in a fixed-effects model. Simulation rejection probabilities are computed across 1,000 replications for  for a grid of values of $\theta_L$ over $\{-0.07,-0.06,0\}$ with $\theta_S=-0.07$.}
\end{figure}

\subsubsection{FE, LD and Adaptation Test}

Figure \ref{fig:main simulation_results} presents the simulation mean of $\hat{\beta}_{LD}-\hat{\beta}_{FE}$ and  the simulation rejection probabilities of the adaptation test ($H_0: \beta_{FE}=\beta_{LD}$). We vary $\theta_L$ over the grid $\{-0.07,-0.06, \dots, 0\}$, where we fix $\theta_S=-0.07$. As a result, the range of values we consider for $\theta_L$ starts from the no-adaptation case ($\theta_L=\theta_S$) to the full adaptation case ($\theta_L=0$), where the degree days above 29$^\circ$C are no longer harmful.\footnote{The FE estimate in \citet{burke2016aejep} is $-0.07$, whereas the LD estimate is $-0.02$ (see Table \ref{table:validate_data_be} for a replication using our constructed sample.)} The results presented in Figure \ref{fig:main simulation_results} correspond to the specification of $c_{it}^*=\frac{1}{m}\sum_{\ell=1}^mx_{i(t-\ell)}$ with $m=10$ (10-year climate normal), $T=25$ and $\tau\in\{5,10\}$. Appendix \ref{app:sim_results} provides a broader set of simulation statistics for this variant of the simulation design as well as for other variants with different choices of $m$ and $T$.

Panel (a) of Figure \ref{fig:main simulation_results} plots the simulation mean $\hat{\beta}_{LD}-\hat{\beta}_{FE}$ as a function of $\theta_L-\theta_S$, the extent of adaptation in the true DGP. Consider the case with $\tau=5$, the simulation mean of $\hat{\beta}_{LD}-\hat{\beta}_{FE}$ is about 20\% of $\theta_L-\theta_S$. As per Corollary \ref{cor:adaptation_test}, this proportion should be explained by $(1-\omega_S^{LD}-\omega_L^{FE})$. Indeed, $\hat{\beta}_{LD}-\hat{\beta}_{FE}$ is linear in $\theta_L-\theta_S$ with a slope of about $0.2$, since simulation means of $\hat{\omega}_S^{LD}$ and $\hat{\omega}_L^{FE}$ equal to 0.825 and -0.0265, respectively (see Table \ref{tab:simulation1_variantA}).\footnote{Note that we can use these results from Table \ref{tab:simulation1_variantA}, even though it reports the simulation results for $\theta_L=-0.02$ and $\theta_S=-0.07$, since $\omega_S^{LD}$ and $\omega_L^{FE}$ do not depend on the values of $\theta_L$ and $\theta_S$.} When $\tau$ is increased, however, the slope of the line increases to about 0.7, since the simulation mean of $\hat{\omega}_S^{LD}$ and $\hat{\omega}_L^{FE}$ equal 0.3119 and -0.0055, respectively (see Table \ref{tab:simulation1_variantA}). This simulation study demonstrates that the contamination weight $\omega_S^{LD}$ tends to be larger in magnitude relative to $\omega_L^{FE}$.\footnote{This demonstrates that the assumption made in \citet{carter2018arre} which suggests that the LD estimator is consistent for $\theta_L$ is implausible for temperature data and realistic climate models.} The asymmetry between the two contamination weights has a simple source. The within transformation differences out unit-specific means and leaves
predominantly high-frequency variation, so little climate variation survives and
$\omega^{FE}_L$ is small. The LD transformation instead averages
within windows to cancel weather, but the variance of the surviving shock falls
with $\tau$---$\bar{E}[\Delta \bar{w}^{*2}_i] =
2\sigma^2_w/\tau$, for example, if $w_{it}^*$ are i.i.d.\ shocks---thus in finite windows a non-trivial share of weather remains
and $\omega^{LD}_S$ is comparatively large. Averaging is thus a blunter
instrument for isolating climate than differencing is for isolating weather shocks.

In addition, our results here demonstrate a case of attenuation bias per Corollary \ref{cor:attenuation}, where both $\hat{\beta}_{LD}$ and $\hat{\beta}_{FE}$ under-estimate $\theta_L$ and $\theta_S$, respectively (see Table \ref{tab:simulation1_variantA}). We note however, that in other variants of our simulation design we find $\tau$ values that lead to over-estimation of $\theta_L-\theta_S$ (see Panel (b) of Figure \ref{fig:c10_ld_summary}).

Panel (b) of Figure \ref{fig:main simulation_results} presents the simulation rejection probabilities of the adaptation test ($H_0: \beta_{FE}=\beta_{LD}$). This figure demonstrates that the larger bias when using $\tau=5$ relative to $\tau=10$ can have substantive power consequences. We caution, however, that the choice of $\tau$ may also have an impact on the sampling variability of the LD estimator, which can have power implications.\footnote{For instance, for the case with $c_{it}^*$ as the 30-year normal, using $\tau=5$ leads to a higher rejection probability relative to $\tau=10$ (see Figure \ref{fig:sim_rej_cov_30-year}), as the former is associated with a higher simulation standard deviation than the latter.}

\section{Climate signal-to-noise ratio and the choice of $\tau$ in long-differencing}\label{sec:choice_tau}

In this section, we demonstrate that the extent of contamination from short-run weather variation in the long-differenced climate is inversely related to the climate signal-to-noise ratio and analyze this object and its dependence on the choice of $\tau$ in the long-differencing approach. We do so with the aid of two analytical examples of climate specifications of $x_{it}$, where we can solve for the climate signal-to-noise ratio analytically as a function of $T$, $\tau$, and other parameters of the data-generating process. We also compute the climate signal-to-noise ratio using temperature data.

First, suppose that $c_{it}^*=\mu_i+\delta_it$ and $w_{it}^*|\mu_i,\delta_i\overset{i.i.d.}{\sim} (0,\sigma_w^2)$, then in this case
the climate signal-to-noise ratio simplifies to
\begin{eqnarray}SNR_c&=& \left[\tau(T-\tau)^2 \frac{\sigma_{\delta}^2}{2 \sigma_{w}^2} \right]\label{eq:SNR_trend}\end{eqnarray}
where $\sigma_{\delta}^2\equiv Var(\delta_i)$. For a formal derivation of this result with all relevant assumptions, see Proposition \ref{prop:tau_for_trend} and its proof.

The main takeaways from Eq.\ \eqref{eq:SNR_trend} are that $SNR_c$ is increasing in the time horizon $T$ and $\sigma_\delta^2/\sigma_w^2$. As a result, for $\omega_S^{LD}\rightarrow 0$, either $T\rightarrow \infty$ and/or $\sigma_\delta^2/\sigma_w^2\rightarrow \infty$. Panel (a) in Figure \ref{fig:SNR} demonstrates however that for fixed $T$ and $\sigma_\delta^2/\sigma_w^2$, $SNR_c$ depends nonlinearly on $\tau$.\footnote{This nonlinearity stems from a trade-off in how $\tau$ affects the variance of the long-differenced climate and weather shocks, as evidenced by their respective variance formulae derived in Proposition \ref{prop:tau_for_trend}
$\bar{E}[\Delta \bar{c}_i^{*2}]=\sigma_\delta^2(T-\tau)^2$ and 
    $\bar{E}[\Delta \bar{w}_i^{*2}]=\frac{2\sigma_w^2}{\tau}$.} For this specification of $c_{it}^*$, there exists $\tau\in(1,T/2)$ that maximizes $SNR_c$, specifically $\tau^*=T/3$. It is important to note, however, that even if we use the $SNR_c$-maximizing $\tau^*$, this does not mean that $\omega_S^{LD}=0$ and thereby $\beta_{LD}=\theta_L$. Indeed, Figure \ref{fig:SNR} demonstrates that the magnitude of $SNR_c$ and subsequently $\omega_S^{LD}$ even at the optimal $\tau^*$ will ultimately depend on the value of $T$ and $\sigma_\delta^2/\sigma_w^2$.

\begin{figure}[htbp]
    \centering
    \begin{tabular}{cc}
    \includegraphics[width=0.45\linewidth]{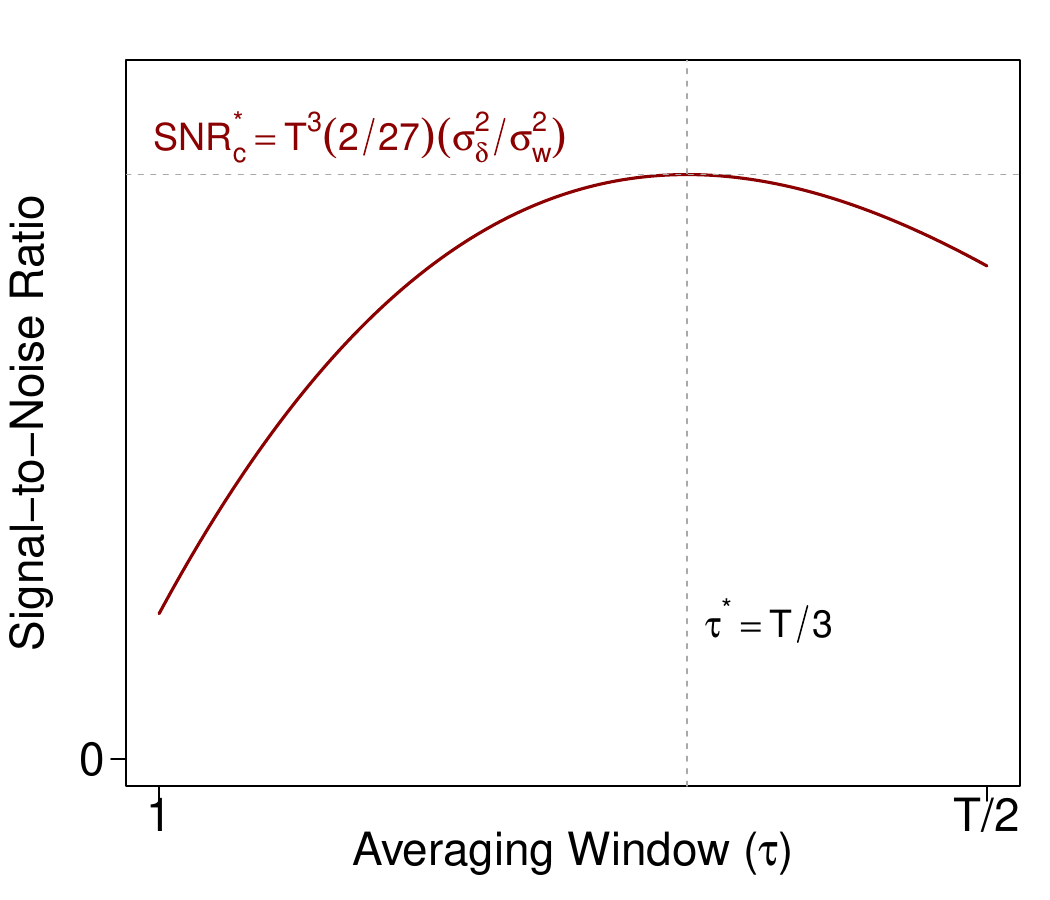}&  \includegraphics[width=0.45\linewidth]{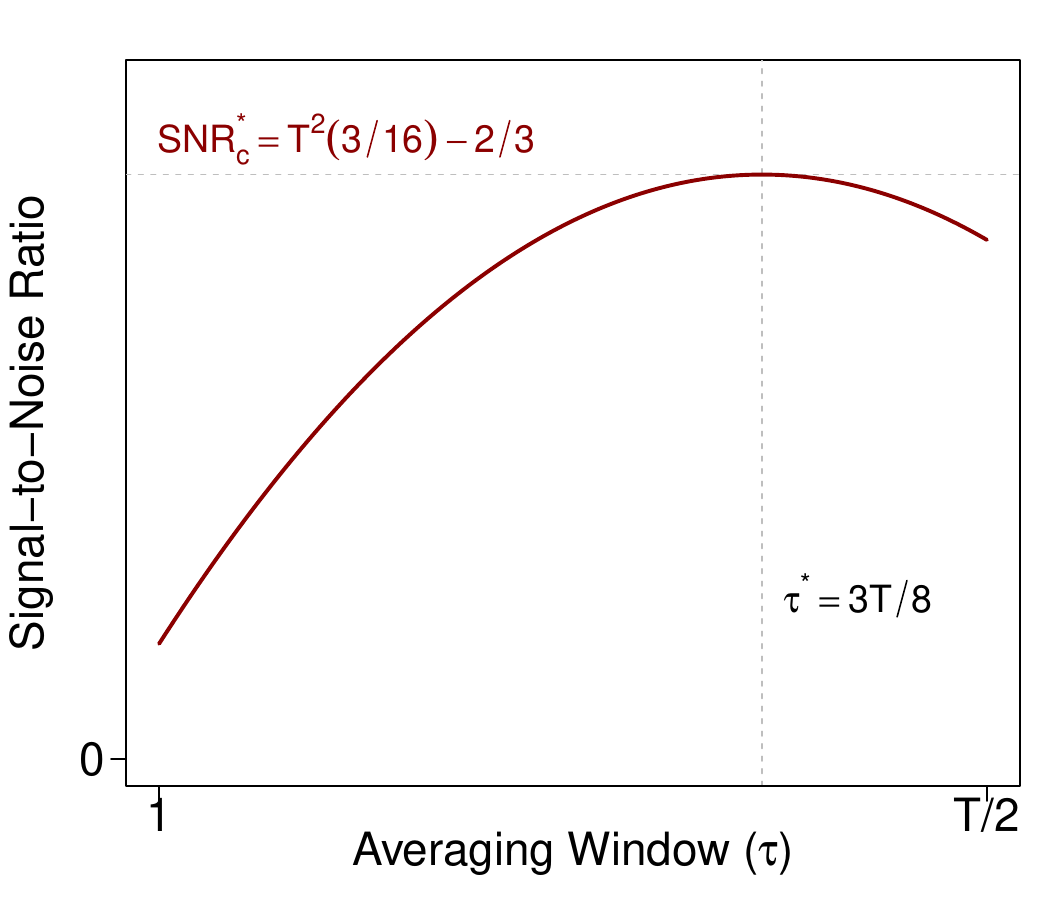}\\
   (a) Linearly Trending Case& (b) Unit-Root Case\\
    \end{tabular}
    \caption{Climate signal-to-noise ratio as a function of $\tau$ ($T=25$)}
    \label{fig:SNR}
\end{figure}

Since climate normals, popular in climate modeling, give rise to a unit-root AR(m), where $x_{it}=\frac{1}{m}\sum_{\ell=1}^mx_{i(t-\ell)}+\eta_{it}$ and $c_{it}^*=\frac{1}{m}\sum_{\ell=1}^mx_{i(t-\ell)}$, we also consider the unit root example. To simplify illustration, we first consider the special case where $m=1$ and thereby $c_{it}^*=x_{i(t-1)}$\footnote{This is an empirically relevant case if economic agents use lagged weather as their expectation of weather (climate).} and consider the more general climate normal case using the temperature data we use in our simulations in Figure \ref{fig:sim_SNR}. 

Similar to the linearly trending case, the unit-root example also demonstrates a trade-off in the choice of $\tau$. In this case, a larger $\tau$ decreases both the variance of $\Delta\bar w_i^*$ and $\Delta \bar c_i^*$. The climate signal-to-noise ratio therefore depends nonlinearly on $\tau$ as demonstrated clearly in Panel (b) of Figure \ref{fig:SNR}. Since $E[\Delta \bar w_i^*\Delta\bar c_i^*]\neq 0$ in this case, the signal-to-noise ratio is given by
\begin{eqnarray}
    SNR_c&=&\frac{\bar{E}[\Delta\bar c_i^{*2}]+\bar{E}[\Delta \bar c_i^*\Delta\bar w_i^*]}{\bar{E}[\Delta\bar w_i^{*2}]+\bar{E}[\Delta \bar c_i^*\Delta\bar w_i^*]}=T\tau - \frac{4\tau^2}{3} + \frac{\tau}{3}-1.
\end{eqnarray}
Proposition \ref{prop:tau_unit_root} and its proof state the assumptions and provided a detailed derivation of the signal-to-noise ratio in the unit root case as well as the choice of $\tau$ that maximizes this ratio and thereby minimizes $\omega_S^{LD}$.

    \begin{figure}[htbp]
    \centering   
    \begin{subfigure}[t]{0.45\linewidth}
\includegraphics[width=\linewidth]{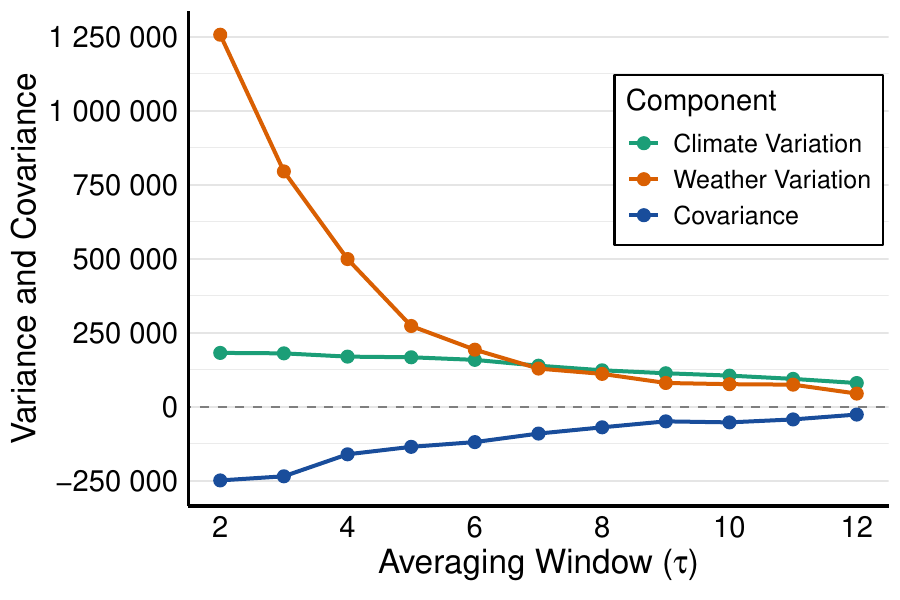}    
\caption{$SNR_c$ components with 10-year normal}\end{subfigure}    
\begin{subfigure}[t]{0.45\linewidth}
\includegraphics[width=\linewidth]{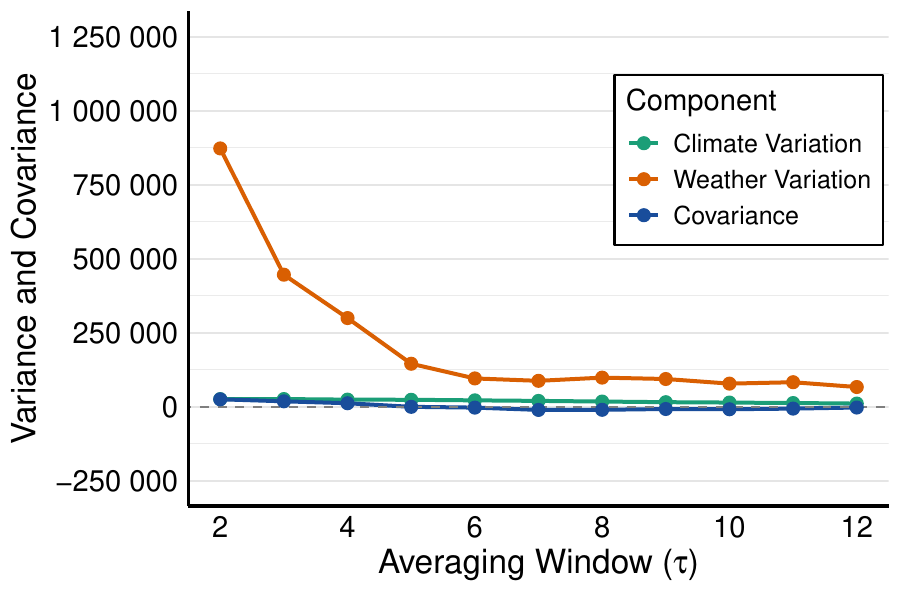}    
\caption{$SNR_c$ components with 30-year normal}\end{subfigure}
 \begin{subfigure}[t]{0.45\linewidth}
            \includegraphics[width=\linewidth]{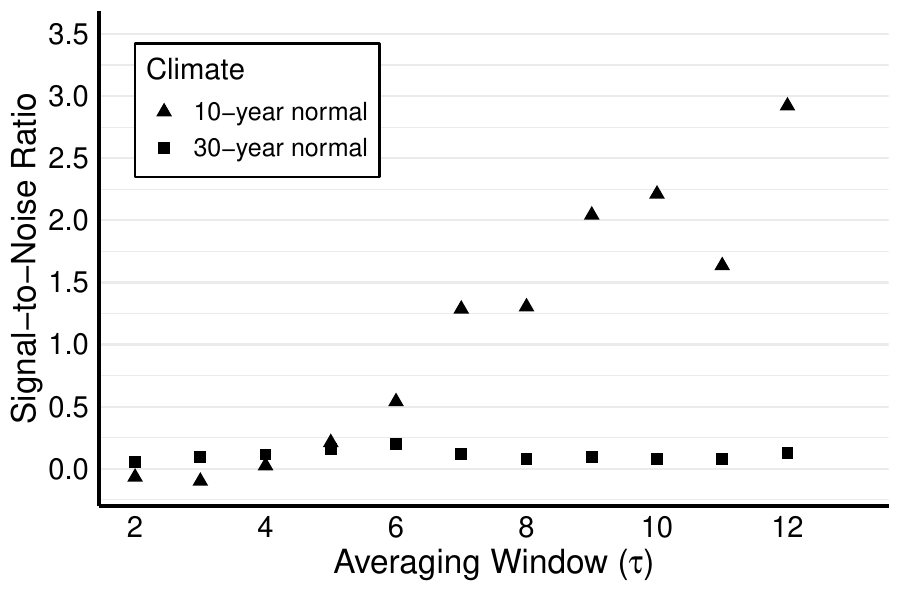}
        \caption{Signal-to-Noise Ratio}
       \end{subfigure}%
          \begin{subfigure}[t]{0.45\linewidth}
                       \includegraphics[width=\linewidth]{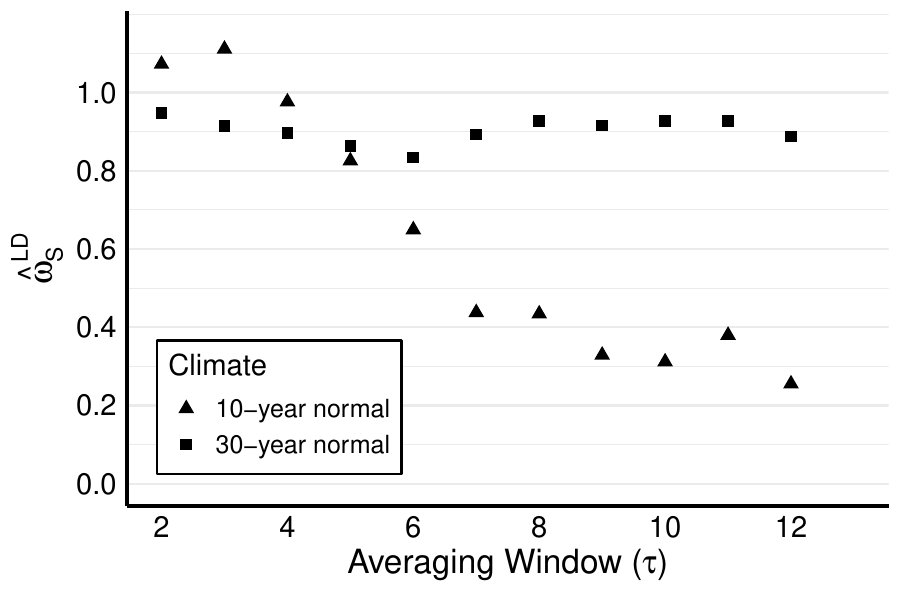}
        \caption{LD contamination weight ($\omega_S^{LD}$)}
      
       \end{subfigure}

    \caption{Sample estimates of $SNR_c$ across $\tau$ for different climate normals with T = 25} \label{fig:sim_SNR} 
    \caption*{\scriptsize{\emph{Notes}: Panels (a) and (b) plot the sample analogues of $\bar{E}[\Delta\bar c_i^{*2}]$ (\emph{Climate variation}), $\bar{E}[\Delta\bar w_i^{*2}]$ (\emph{Weather variation}) and $\bar{E}[\Delta\bar c_i^*\Delta \bar w_i^*]$ (\emph{Covariance})}. Panel (c) and (d) present the sample analogues of the signal-to-noise ratio and $\omega_S^{LD}$, respectively.} 
\end{figure}

The two analytical examples provide multiple takeaways. First, the choice of $\tau$ presents a trade-off, even when we solely consider consistency as a criterion. To avoid overlapping averaging windows, $\tau$ can take values from 1 to $T/2$. For both analytical examples we consider, the choice of $\tau$ that maximizes the climate signal-to-noise ratio is in the interior of the domain, specifically $T/3$ in the trend-stationary case and $3T/8$ in the unit-root case. A second important takeaway is that, even if one were to maximize the climate signal-to-noise ratio, the magnitude of this maximum depends on the time horizon of the sample, $T$. In both examples, the climate variation is increasing in $T$, whereas the weather variation does not depend on it. For the trend-stationary case, the maximum $SNR_c$ is cubic in $T$, whereas in the unit-root case it is quadratic in $T$. 

We next compute the components of $SNR_c$ and $\omega_S^{LD}$ using the temperature data used in our simulation design in Section \ref{sec:simulations}. Panels (a) and (b) of Figure \ref{fig:sim_SNR} plot the sample analogue of the LD-transformed climate variation ($\bar{E}[\Delta\bar c_i^{*2}]$), the LD-transformed weather variation ($\bar{E}[\Delta\bar w_i^{*2}]$) and their covariance ($\bar{E}[\Delta\bar c_i^*\Delta \bar w_i^*]$), where $c_{it}^*$ is a 10- and 30-year normal, respectively. Panel (a) demonstrates that, similar to the analytical examples, we see that a larger $\tau$ decreases the short-run weather variation. The climate variability also declines with the increase in $\tau$, as in the unit-root case, but only marginally relative to short-run weather variability. The covariance term is negative in this example and gets smaller in magnitude as $\tau$ increases. By contrast, when we consider the 30-year climate normal in Panel (b), we find that $\tau$ has hardly any impact on the climate variation and the covariance components, which are small in magnitude, though it vastly reduces the short-run weather variability.

Finally, we examine the implied $SNR_c$ and $\omega_S^{LD}$ in the 10- and 30-year climate normal case. We first note that the $SNR_c$ is more nonlinear than in the analytical examples (Panel (c) of Figure \ref{fig:sim_SNR}. It increases in $\tau$ up to $\tau=10$, and then decreases and reaches its maximum at $T=12$. For the 30-year normal, however, regardless of the choice of $\tau$, the signal-to-noise ratio is quite low and therefore $\omega_S^{LD}$ is close to one for most values of $\tau\in\{1,\dots,12\}$. As a result, regardless of the choice of $\tau$, the bias of the LD estimator toward the short-run response will be substantive. In practice, researchers can estimate $\omega_S^{LD}$ in their setting in order to assess the magnitude of the bias for various models of $c_{it}^*$ that are plausible in their context. We further discuss how such models can be used in Section \ref{sec:implications}.

\section{Implications for empirical practice}\label{sec:implications}

\noindent\textbf{FE and LD estimands are weighted averages of the long-run and short-run response}. This is a consequence of neither the LD nor the FE transformation being able to isolate the desired variation in temperature directly. The FE transformed temperature might still be contamination with climate variation, whereas the LD transformed temperature might contain variation from short-run weather shocks. As a result, comparisons of LD and FE estimands are not consistent for the extent of adaptation in general. In an empirically-calibrated simulation design using temperature data, we find that the LD contamination weight tends to be larger leading the LD estimator to suffer from greater bias in this design.

\bigskip

\noindent\textbf{Caution is warranted when interpreting non-rejections of adaptation tests}. Adaptation tests based on FE and LD estimators are tests by implication. This is a consequence of long-differencing not imposing assumptions on the climate process. Since they do not based on a null hypothesis equivalent to the no-adaptation hypothesis ($\theta_L=\theta_S$), their non-rejection must be interpreted with caution. Corollary \ref{cor:adaptation_test} demonstrates that if the FE and LD contamination weights sum to one, then a test based on a comparison of LD and FE estimators will have trivial power, regardless of the true extent of adaptation ($\theta_L-\theta_S$).

\bigskip

\noindent\textbf{Choice of $\tau$ may exacerbate the bias of LD estimators}. The analytical and numerical examples demonstrate that the choice of averaging window in the LD estimator ($\tau$) can have substantive implications for the bias of the LD estimator and the power of the adaptation test. They also demonstrate that the averaging window that minimizes the bias of the LD estimator depends on the underlying climate data-generating process. For some data-generating processes, even the LD estimator using the bias-minimizing choice of $\tau$ might still be substantively biased if the climate signal-to-noise ratio is relatively low for that choice of $\tau$. This analysis highlights the importance of considering the time span of a study, the underlying climate data-generating process and the associated climate signal-to-noise ratio when comparing LD and FE estimators and interpreting adaptation tests.

\bigskip

\noindent \textbf{Assessing the magnitude of the contamination weights requires specifying climate models}. Despite the LD estimator not requiring a specification of the climate model, this is necessary to assess its bias in finite samples. Using different models for climate justified by the empirical context, one can estimate both FE and LD contamination weights in order to assess the extent of the bias of FE and LD. Indeed, one can use such estimates to construct confidence intervals for the extent of adaptation, $\theta_L-\theta_S$, by inverting a test of the equality in Corollary \ref{cor:adaptation_test}.\footnote{Once researchers specify climate models, direct estimation of $\theta_L$ and $\theta_S$ using the assumed climate model becomes a compelling alternative as proposed in, for example, \citet{bento2023jeem}. } The test inversion is required in this context, since weak identification concerns would arise when the contamination weights sum to one.

\bigskip

\noindent\textbf{While agnostic about the climate-shock process, the validity of long-differencing rests on a separable outcome model}. Our formal analysis demonstrates that the validity of long-differencing and the implied adaptation test requires an outcome model that is separable in the climate and shock components as in Eq.\ \eqref{eq:outcome_model} (in addition to covariates). Economic theory and climate science can be used to justify such an outcome model. The presence of response heterogeneity, a deviation from separability between observables and unobservables, further highlights the importance of considering the assumptions on the outcome model when interpreting long-differencing results.

\bibliography{bib}

@article{auffhammer2022jeem,
  title={Climate Adaptive Response Estimation: Short and long run impacts of climate change on residential electricity and natural gas consumption},
  author={Auffhammer, Maximilian},
  journal={Journal of Environmental Economics and Management},
  volume={114},
  pages={102669},
  year={2022},
  publisher={Elsevier}
}

@article{baylis2025jpub,
  title={Climate and migration in the United States},
  author={Baylis, Patrick and Bharadwaj, Prashant and Mullins, Jamie T and Obradovich, Nick},
  journal={Journal of Public Economics},
  volume={249},
  pages={105446},
  year={2025},
  publisher={Elsevier}
}

@incollection{carleton2024hecc,
  title={Adaptation to climate change},
  author={Carleton, Tamma and Duflo, Esther and Jack, B Kelsey and Zappal{\`a}, Guglielmo},
  booktitle={Handbook of the Economics of Climate Change},
  volume={1},
  number={1},
  pages={143--248},
  year={2024},
  publisher={Elsevier}
}

@article{carter2018arre,
  title={Identifying the economic impacts of climate change on agriculture},
  author={Carter, Colin and Cui, Xiaomeng and Ghanem, Dalia and M{\'e}rel, Pierre},
  journal={Annual Review of Resource Economics},
  volume={10},
  number={1},
  pages={361--380},
  year={2018},
  publisher={Annual Reviews}
}

@misc{cui2025,
  title  = {Adaptation to Climate Change: New Evidence from U.S. Agriculture},
  author = {Cui, Xiaomeng and Xiao, Zimao},
  year   = {2025},
  howpublished = {Working paper},
  url    = {https://www.dropbox.com/scl/fi/5r891rh1d0pl09n7ph1ef/CX_ClimAgLongDiff_update.pdf?rlkey=cfh5g4hemir99gnswmc10t1jg&e=1&st=xj7tznw6&dl=0}
}

@article{bento2023jeem,
  title={A unifying approach to measuring climate change impacts and adaptation},
  author={Bento, Antonio M and Miller, Noah and Mookerjee, Mehreen and Severnini, Edson},
  journal={Journal of Environmental Economics and Management},
  volume={121},
  pages={102843},
  year={2023},
  publisher={Elsevier}
}

@techreport{bilal2024nber,
  title={The macroeconomic impact of climate change: Global vs. local temperature},
  author={Bilal, Adrien and K{\"a}nzig, Diego R},
  year={2024},
  institution={National Bureau of Economic Research}
}

@article{burke2016aejep,
  title={Adaptation to climate change: Evidence from US agriculture},
  author={Burke, Marshall and Emerick, Kyle},
  journal={American Economic Journal: Economic Policy},
  volume={8},
  number={3},
  pages={106--140},
  year={2016},
  publisher={American Economic Association 2014 Broadway, Suite 305, Nashville, TN 37203-2425}
}

@article{butler2013ncc,
  title={Adaptation of US maize to temperature variations},
  author={Butler, Ethan E and Huybers, Peter},
  journal={Nature Climate Change},
  volume={3},
  number={1},
  pages={68--72},
  year={2013},
  publisher={Nature Publishing Group UK London}
}

@article{carleton2017pnas,
  title={Crop-damaging temperatures increase suicide rates in India},
  author={Carleton, Tamma A},
  journal={Proceedings of the national academy of sciences},
  volume={114},
  number={33},
  pages={8746--8751},
  year={2017},
  publisher={National Academy of Sciences}
}

@article{chen2021jde,
  title={Response and adaptation of agriculture to climate change: Evidence from China},
  author={Chen, Shuai and Gong, Binlei},
  journal={Journal of Development Economics},
  volume={148},
  pages={102557},
  year={2021},
  publisher={Elsevier}
}

@article{ghanem2021what,
author = {Dalia Ghanem  and Aaron Smith },
title = {What Are the Benefits of High-Frequency Data for Fixed Effects Panel Models?},
journal = {Journal of the Association of Environmental and Resource Economists},
volume = {8},
number = {2},
pages = {199-234},
year = {2021},
doi = {10.1086/710968},

URL = {https://www.journals.uchicago.edu/doi/abs/10.1086/710968},
eprint = {https://www.journals.uchicago.edu/doi/pdf/10.1086/710968}
}

@article{cui2024on,
title = {On model selection criteria for climate change impact studies},
journal = {Journal of Econometrics},
volume = {239},
number = {1},
pages = {105511},
year = {2024},
note = {Climate Econometrics},
issn = {0304-4076},
doi = {https://doi.org/10.1016/j.jeconom.2023.105511},
url = {https://www.sciencedirect.com/science/article/pii/S0304407623002270},
author = {Xiaomeng Cui and Bulat Gafarov and Dalia Ghanem and Todd Kuffner}
}

@article{dell2012aejm,
  title={Temperature shocks and economic growth: Evidence from the last half century},
  author={Dell, Melissa and Jones, Benjamin F and Olken, Benjamin A},
  journal={American Economic Journal: Macroeconomics},
  volume={4},
  number={3},
  pages={66--95},
  year={2012},
  publisher={American Economic Association}
}

@techreport{deryugina2017nber,
  title={The marginal product of climate},
  author={Deryugina, Tatyana and Hsiang, Solomon},
  year={2017},
  institution={National Bureau of Economic Research}
}

@article{gibbons2019,
  title={Broken or fixed effects?},
  author={Gibbons, Charles E and Su{\'a}rez Serrato, Juan Carlos and Urbancic, Michael B},
  journal={Journal of Econometric Methods},
  volume={8},
  number={1},
  pages={20170002},
  year={2019},
  publisher={De Gruyter}
}

@article{gospodinov2025,
  title={The Economic Impact of Low-and High-Frequency Temperature Changes},
  author={Gospodinov, Nikolay and Gaffney, Ignacio Lopez and Ng, Serena},
  journal={arXiv preprint arXiv:2505.08950},
  year={2025}
}

@article{hasselmann1976tf,
  title={Stochastic climate models part I. Theory},
  author={Hasselmann, Klaus},
  journal={tellus},
  volume={28},
  number={6},
  pages={473--485},
  year={1976},
  publisher={Taylor \& Francis}
}

@article{lehner2023erc,
  title={Origin, importance, and predictive limits of internal climate variability},
  author={Lehner, Flavio and Deser, Clara},
  journal={Environmental Research: Climate},
  volume={2},
  number={2},
  pages={023001},
  year={2023},
  publisher={IOP Publishing}
}

@article{maher2018grl,
  title={ENSO change in climate projections: forced response or internal variability?},
  author={Maher, Nicola and Matei, Daniela and Milinski, Sebastian and Marotzke, Jochem},
  journal={Geophysical Research Letters},
  volume={45},
  number={20},
  pages={11--390},
  year={2018},
  publisher={Wiley Online Library}
}

@article{marotzke2015nature,
  title={Forcing, feedback and internal variability in global temperature trends},
  author={Marotzke, Jochem and Forster, Piers M},
  journal={Nature},
  volume={517},
  number={7536},
  pages={565--570},
  year={2015},
  publisher={Nature Publishing Group UK London}
}

@article{kolstad2020reep,
  title={Estimating the economic impacts of climate change using weather observations},
  author={Kolstad, Charles D and Moore, Frances C},
  journal={Review of Environmental Economics and Policy},
  year={2020},
  publisher={The University of Chicago Press}
}

@techreport{lemoine2018nber,
  title={Estimating the consequences of climate change from variation in weather},
  author={Lemoine, Derek},
  year={2018},
  institution={National Bureau of Economic Research}
}

@techreport{lemoine2025nber,
  title={A guide to climate damages},
  author={Lemoine, Derek and Hausman, Catherine and Shrader, Jeffrey G},
  year={2025},
  institution={National Bureau of Economic Research}
}

@article{liu2023aejep,
  title={Climate change and labor reallocation: Evidence from six decades of the Indian Census},
  author={Liu, Maggie and Shamdasani, Yogita and Taraz, Vis},
  journal={American Economic Journal: Economic Policy},
  volume={15},
  number={2},
  pages={395--423},
  year={2023},
  publisher={American Economic Association 2014 Broadway, Suite 305, Nashville, TN 37203-2425}
}

@article{merel2021ajae,
  title={Climate econometrics: Can the panel approach account for long-run adaptation?},
  author={M{\'e}rel, Pierre and Gammans, Matthew},
  journal={American Journal of Agricultural Economics},
  volume={103},
  number={4},
  pages={1207--1238},
  year={2021},
  publisher={Wiley Online Library}
}

@article{gammans2017,
  title={Negative impacts of climate change on cereal yields: statistical evidence from France},
  author={Gammans, Matthew and M{\'e}rel, Pierre and Ortiz-Bobea, Ariel},
  journal={Environmental research letters},
  volume={12},
  number={5},
  pages={054007},
  year={2017},
  publisher={IOP Publishing}
}

@article{moore2014ncc,
  title={Adaptation potential of European agriculture in response to climate change},
  author={Moore, Frances C and Lobell, David B},
  journal={Nature Climate Change},
  volume={4},
  number={7},
  pages={610--614},
  year={2014},
  publisher={Nature Publishing Group UK London}
}

@techreport{obolensky2024nber,
  title={Migration, Climate Similarity, and the Consequences of Climate Mismatch},
  author={Obolensky, Marguerite and Tabellini, Marco and Taylor, Charles},
  year={2024},
  institution={National Bureau of Economic Research}
}

@article{obradovich2018pnas,
  title={Empirical evidence of mental health risks posed by climate change},
  author={Obradovich, Nick and Migliorini, Robyn and Paulus, Martin P and Rahwan, Iyad},
  journal={Proceedings of the National Academy of Sciences},
  volume={115},
  number={43},
  pages={10953--10958},
  year={2018},
  publisher={National Academy of Sciences}
}

@article{ponticelli2023nber,
  title={Temperature, adaptation, and local industry concentration},
  author={Ponticelli, Jacopo and Xu, Qiping and Zeume, Stefan},
  journal={NBER Working Paper},
  number={w31533},
  year={2023}
}

@article{schlenker2009pnas,
  title={Nonlinear temperature effects indicate severe damages to US crop yields under climate change},
  author={Schlenker, Wolfram and Roberts, Michael J},
  journal={Proceedings of the National Academy of sciences},
  volume={106},
  number={37},
  pages={15594--15598},
  year={2009},
  publisher={National Acad Sciences}
}

@article{shrader2023wp,
  title={Improving climate damage estimates by accounting for adaptation},
  author={Shrader, Jeffrey},
  journal={Available at SSRN 3212073},
  year={2023}
}

@article{taylor2026jaere,
  title={Irrigation and climate change: Long-run adaptation and its externalities},
  author={Taylor, Charles A},
  journal={Journal of the Association of Environmental and Resource Economists},
  volume={13},
  number={4},
  pages={935--973},
  year={2026},
  publisher={The University of Chicago Press Chicago, IL}
}

@article{won2024ajae,
  title={Understanding the effect of cover crop use on prevented planting losses},
  author={Won, Sunjae and Rejesus, Roderick M and Goodwin, Barry K and Aglasan, Serkan},
  journal={American Journal of Agricultural Economics},
  volume={106},
  number={2},
  pages={659--683},
  year={2024},
  publisher={Wiley Online Library}
}

@article{yu2021sr,
  title={Maladaptation of US corn and soybeans to a changing climate},
  author={Yu, Chengzheng and Miao, Ruiqing and Khanna, Madhu},
  journal={Scientific reports},
  volume={11},
  number={1},
  pages={12351},
  year={2021},
  publisher={Nature Publishing Group UK London}
}

@article{griliches1986errors,
  title={Errors in variables in panel data},
  author={Griliches, Zvi and Hausman, Jerry A},
  journal={Journal of econometrics},
  volume={31},
  number={1},
  pages={93--118},
  year={1986},
  publisher={Elsevier}
}
\newpage
\pagenumbering{gobble}
\appendix
\setcounter{page}{0} 
\setcounter{section}{0}
\begin{center}
    \huge{Supplementary Appendix}
    
\end{center}

\startcontents[sections]
\printcontents[sections]{l}{1}{\setcounter{tocdepth}{2}}

\pagenumbering{arabic}
\renewcommand{\thefigure}{A\arabic{figure}}
\renewcommand{\thetable}{A\arabic{table}}
\setcounter{figure}{0}
\setcounter{table}{0}
\renewcommand{\thefigure}{B\arabic{figure}}
\renewcommand{\thetable}{B\arabic{table}}
\setcounter{figure}{0}
\setcounter{table}{0}
\setcounter{section}{1}
\appendix
\newpage
\section{Proofs of the results in the main text}
\label{appendix:proofs}
\subsection{Proof of Proposition \ref{prop:FE}}

\begin{proof}
\noindent 
(i)
\begin{eqnarray*}
    \beta_{FE}&=&\frac{\bar{E}[\sum_t \tilde{x}_{it}(\theta_S \tilde w^*_{it} + \theta_L\tilde c^*_{it} + \tilde \varepsilon_{it})]}{\bar{E}[\sum_t \tilde{x}_{it}^2]}= \theta_S \frac{\bar{E}[\sum_t  \tilde{x}_{it}\tilde w^*_{it}]}{\bar{E}[\sum_t  \tilde{x}_{it}^2]} + \theta_L \frac{\bar{E}[\sum_t  \tilde{x}_{it}\tilde c^*_{it}]}{\bar{E}[\sum_t \tilde{x}_{it}^2]} \\
    &=& \theta_S \frac{\bar{E}[\sum_t (\tilde w^*_{it} + \tilde c^*_{it})\tilde w^*_{it}]}{\bar{E}[\sum_t  \tilde{x}_{it}^2]} + \theta_L \frac{\bar{E}[\sum_t  (\tilde w^*_{it} + \tilde c^*_{it})\tilde c^*_{it}]}{\bar{E}[\sum_t \tilde{x}_{it}^2]} \\
    &=&\theta_S\frac{\bar{E}[\sum_t\tilde{w}^{*2}_{it}]+\bar{E}[\sum_t\tilde{w}^{*}_{it}\tilde{c}^{*}_{it}]}{\bar{E}[\sum_t\tilde{x}_{it}^2]}+\theta_L\frac{\bar{E}[\sum_t\tilde{c}^{*2}_{it}]+\bar{E}[\sum_t\tilde{w}^{*}_{it}\tilde{c}^{*}_{it}]}{\bar{E}[\sum_t\tilde{x}_{it}^2]},
\end{eqnarray*}
where the first equality follows by plugging the within-group demeaned version of \eqref{eq:outcome_model} for $\tilde{y}_{it}$ and invoking strict exogeneity. The second equality follows from noting that $\tilde{x}_{it}=\tilde c_{it}^*+\tilde{w}_{it}^*$.

The result follows from the definition of $\omega_S^{FE}$ and $\omega_L^{FE}$ in Proposition \ref{prop:FE}. The summing-to-one result is immediate from noting that the sum of the numerators of $\omega_S^{FE}$ and $\omega_L^{FE}$ equals to their common denominator.

\noindent (ii)
Subtracting $\theta_S$ from both sides of the previous equality yields
\begin{eqnarray*}
\beta_{FE}-\theta_S&=&\theta_S\frac{\bar{E}[\sum_t\tilde{w}^{*2}_{it}]}{\bar{E}[\sum_t\tilde{x}_{it}^2]}+\theta_L\frac{\bar{E}[\sum_t\tilde{c}^{*2}_{it}]}{\bar{E}[\sum_t\tilde{x}_{it}^2]}+(\theta_S+\theta_L)\frac{\bar{E}[\sum_t\tilde{w}^{*}_{it}\tilde{c}^{*}_{it}]}{\bar{E}[\sum_t\tilde{x}_{it}^2]}-\theta_S\\
&=&\theta_S\left(\frac{\bar{E}[\sum_t\tilde{w}^{*2}_{it}]+\bar{E}[\sum_t\tilde{w}^{*}_{it}\tilde{c}^{*}_{it}]-\bar{E}[\sum_t\tilde{x}_{it}^2]}{\bar{E}[\sum_t\tilde{x}_{it}^2]}\right)+\theta_L\left(\frac{\bar{E}[\sum_t\tilde{c}^{*2}_{it}]+\bar{E}[\sum_t\tilde{w}^{*}_{it}\tilde{c}^{*}_{it}]}{\bar{E}[\sum_t\tilde{x}_{it}^2]}\right)
\end{eqnarray*}
Since $\tilde{x}_{it}=\tilde{c}^{*}_{it}+\tilde{w}^*_{it}$, it follows that $\bar{E}\left[\sum_t \tilde{x}_{it}^2\right]=\bar{E}\left[\sum_t\tilde{c}^{*2}_{it}\right]+\bar{E}\left[\sum_t\tilde{w}^{*2}_{it}\right]+2\bar{E}\left[\sum_t\tilde{w}^{*}_{it}\tilde{c}^{*}_{it}\right]$. Plugging this equality in the parentheses in the first term on the RHS of the last equality yields
\begin{eqnarray*}
\beta_{FE}-\theta_S&=&(\theta_L-\theta_S)\left(\frac{\bar{E}[\sum_t\tilde{c}^{*2}_{it}]+\bar{E}[\sum_t\tilde{w}^{*}_{it}\tilde{c}^{*}_{it}]}{\bar{E}[\sum_t\tilde{x}_{it}^2]}\right)
\end{eqnarray*}
\end{proof}

\subsection{Proof of Proposition \ref{prop:LD}}

\noindent (i) The proof follows by similar steps as in the proof of Proposition \ref{prop:FE}.
\begin{eqnarray*}
    \beta_{LD}&\equiv & \frac{\bar{E}[  \Delta \bar{x}_{i}\Delta\bar{y}_{i}]}{\bar{E}[\Delta \bar{x}_{i}^2]} =\frac{\bar{E}[ \Delta \bar{x}_{i}(\theta_S \Delta \bar w^*_{i} + \theta_L \Delta \bar c^*_{i} + \Delta \bar \varepsilon_{i})]}{\bar{E}[\Delta \bar{x}_{i}^2]} \\
      &=& \theta_S \frac{\bar{E}[ \Delta \bar{x}_{i}\Delta \bar w^*_{i}]}{\bar{E}[ \Delta \bar{x}_{i}^2]} + \theta_L \frac{\bar{E}[\Delta \bar{x}_{i}\Delta \bar c^*_{i}]}{\bar{E}[ \Delta \bar{x}_{i}^2]} \\
    &=& \theta_S \frac{\bar{E}[ (\Delta \bar w^*_{i} + \Delta \bar c^*_{i})\Delta \bar w^*_{i}]}{\bar{E}[ \Delta \bar{x}_{i}^2]} + \theta_L \frac{\bar{E}[ (\Delta \bar w_{i}^{*} + \Delta \bar c^*_{i})\Delta \bar c^*_{i}]}{\bar{E}[ \Delta \bar{x}_{i}^2]} \\
     &=& \theta_S \frac{\bar{E}[\Delta \bar w_{i}^{*2}]+\bar{E}[ \Delta \bar w_{i}^{*}\Delta \bar c^*_{i}]}{\bar{E}[ \Delta  \bar{x}_{i}^2]} + \theta_L \frac{\bar{E}[ \Delta \bar c_{i}^{*2}]+\bar{E}[ \Delta \bar w_{i}^{*}\Delta \bar c^*_{i}]}{\bar{E}[ \Delta \bar{x}_{i}^2]}.  
\end{eqnarray*}
The result follows from the definition of $\omega_S^{LD}$ and $\omega_L^{LD}$ in Proposition \ref{prop:LD}. The summing-to-one result is immediate from noting that the sum of the numerators of $\omega_S^{LD}$ and $\omega_L^{LD}$ equals to their common denominator.

(ii)
Subtracting $\theta_L$ from both sides of the result in (i) yields
\begin{eqnarray*}
\beta_{LD}-\theta_L
&=&\theta_S\left(\frac{\bar{E}[\Delta \bar{w}_{i}^{*2}]}{\bar{E}[\Delta \bar{x}_{i}^2]}+\frac{\bar{E}[\Delta \bar{w}_{i}^{*} \Delta \bar{c}_{i}^{*}]}{\bar{E}[\Delta \bar{x}_{i}^2]}\right)+\theta_L\left(\frac{\bar{E}[\Delta \bar{c}_{i}^{*2}]}{\bar{E}[\Delta \bar{x}_{i}^2]}+\frac{\bar{E}[\Delta \bar{w}_{i}^{*}\Delta \bar{c}_{i}^{*}]}{\bar{E}[\Delta \bar{x}_{i}^2]} - 1\right)
\end{eqnarray*}

\noindent Simplifying the last equality similar to the last step in the proof of Proposition \ref{prop:FE} yields
\begin{eqnarray*}
\beta_{LD}-\theta_L&=&(\theta_S-\theta_L)\left(\frac{\bar{E}\left[\Delta \bar{w}^{*2}_{i}\right] + \bar{E}\left[\Delta\bar{w}^{*}_{i}\Delta\bar{c}^{*}_{i}\right]}{\bar{E}[\Delta\bar{x}^{2}_i]}\right)
\end{eqnarray*}

\qed

\subsection{Proof of Corollary \ref{cor:adaptation_test}}

\begin{proof}
Subtracting $\beta_{FE}$ from $\beta_{LD}$ together with invoking $\omega_L^{FE}+\omega_S^{FE}=1$ and $\omega_S^{LD}+\omega_L^{LD}=1$ from Propositions \ref{prop:FE} and \ref{prop:LD}, respectively, yields the result:
\begin{eqnarray*}
    \beta_{LD} - \beta_{FE} &=& (\omega^{LD}_S \theta_S + \omega^{LD}_L \theta_L) - (\omega^{FE}_S \theta_S + \omega^{FE}_L \theta_L) \\
    &=& ((1-\omega^{LD}_L) \theta_S + \omega^{LD}_L \theta_L) - ((1-\omega^{FE}_L) \theta_S + \omega^{FE}_L \theta_L) \\
    &=& (\theta_L - \theta_S)(1-\omega^{LD}_S - \omega^{FE}_L)
\end{eqnarray*}
where
\begin{eqnarray*}
    \omega_S^{LD} \equiv \left(\frac{\bar{E}\left[\Delta \bar{w}^{*2}_{i}\right] + \bar{E}\left[\Delta\bar{w}^{*}_{i}\Delta\bar{c}^{*}_{i}\right]}{\bar{E}[\Delta\bar{x}^{2}_i]}\right), \qquad \omega_L^{FE} \equiv \left(\frac{\bar{E}[\sum_t\tilde{c}^{*2}_{it}]+\bar{E}[\sum_t\tilde{w}^{*}_{it}\tilde{c}^{*}_{it}]}{\bar{E}[\sum_t\tilde{x}_{it}^2]}\right).
\end{eqnarray*}
\end{proof}
\section{Choice of $\tau$}\label{app:choice_tau}
\subsection{Linearly Trending Case}
\begin{proposition}[Linearly Trending Case] \label{prop:tau_for_trend}

Suppose that for $i=1,\dots,N$ and $t=1,\dots, T$ $c^*_{it} = \mu_i + \delta_i t$, $w^*_{it} \overset{i.i.d.}{\sim} (0, \sigma_w^2)$ and $\delta_i\overset{i.i.d.}{\sim}(0,\sigma_\delta^2)$, $w_{it}^* \perp(\mu_i,\delta_i)$, where $\sigma_w^2>0$, $\sigma_\delta^2>0$ and $Cov(\mu_i,\delta_i)=0$. Then,
 \begin{enumerate}[(i)]\item $SNR_c=\tau(T-\tau)^2\frac{\sigma_\delta^2}{2\sigma_w^2}$ and $\omega_S^{LD} = \dfrac{1}{1 + \left[\tau(T-\tau)^2 \frac{\bar \sigma_{\delta}^2}{2\bar \sigma_{w}^2} \right]}$
 \item $\omega_S^{LD}$ is minimized when $\tau^* =T/3$.
\end{enumerate}
\end{proposition}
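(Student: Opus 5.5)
The plan is to compute the three components of $\omega_S^{LD}$ from Proposition \ref{prop:LD} directly under the stated DGP, and then optimize over $\tau$.

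\textbf{Climate component.} Under $c^*_{it}=\mu_i+\delta_i t$, the fixed effect $\mu_i$ cancels in the long difference. The two window averages of $t$ are $(T-\tau+1+T)/2$ and $(1+\tau)/2$, whose difference is $T-\tau$. Hence $\Delta\bar c_i^*=\delta_i(T-\tau)$, and since $E[\delta_i]=0$,
\[
\bar E[\Delta\bar c_i^{*2}]=\sigma_\delta^2(T-\tau)^2 .
\]

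\textbf{Weather component.} $\Delta\bar w_i^*$ is the difference of two averages of $\tau$ i.i.d.\ shocks over disjoint windows, which requires $\tau\le T/2$. The two averages are independent, each with variance $\sigma_w^2/\tau$, so
\[
\bar E[\Delta\bar w_i^{*2}]=2\sigma_w^2/\tau .
\]

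\textbf{Cross term.} Independence of $w^*$ from $(\mu_i,\delta_i)$ together with $E[w^*_{it}]=0$ gives $\bar E[\Delta\bar w_i^*\Delta\bar c_i^*]=(T-\tau)E[\delta_i]E[\Delta\bar w_i^*]=0$. This places us in the zero-covariance case discussed after Proposition \ref{prop:LD}. There $\omega_S^{LD}=1/(1+SNR_c)$ with
\[
SNR_c=\frac{\sigma_\delta^2(T-\tau)^2}{2\sigma_w^2/\tau}=\tau(T-\tau)^2\frac{\sigma_\delta^2}{2\sigma_w^2},
\]
which is part (i). The condition $\sigma_w^2>0$ guarantees $\bar E[\Delta\bar x_i^2]>0$, as Proposition \ref{prop:LD} requires.

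\textbf{Optimization.} For part (ii), $\omega_S^{LD}$ is strictly decreasing in $SNR_c$, so it suffices to maximize $f(\tau)=\tau(T-\tau)^2$ over $\tau\in[1,T/2]$, treating $\tau$ as continuous. We have
\[
f'(\tau)=(T-\tau)(T-3\tau),
\]
which vanishes at $\tau=T/3$ and $\tau=T$. On $(0,T)$, $f'$ is positive for $\tau<T/3$ and negative for $T/3<\tau<T$. Since $T/3\in[1,T/2]$ whenever $T\ge3$, $\tau^*=T/3$ is the interior maximizer of $f$ on the admissible domain, and hence the minimizer of $\omega_S^{LD}$.

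\textbf{Main obstacle.} There is no real analytical difficulty; the main care points are bookkeeping ones. The window means of $t$ must be computed exactly to get the clean factor $T-\tau$. One must also note that non-overlap of the windows is what makes the weather variance $2\sigma_w^2/\tau$ with no covariance correction. Finally, integrality of $\tau$ needs a remark: when $T/3$ is not an integer, the discrete optimum is one of the two neighboring integers, by unimodality of $f$.
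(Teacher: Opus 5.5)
Your proposal is correct and follows the paper's proof of part (i) essentially step for step. You compute $\bar E[\Delta\bar c_i^{*2}]=\sigma_\delta^2(T-\tau)^2$ and $\bar E[\Delta\bar w_i^{*2}]=2\sigma_w^2/\tau$, remove the cross term by independence, and substitute into $\omega_S^{LD}$. For part (ii), the paper's appendix proof stops after (i) and the main text only asserts $\tau^*=T/3$, so your computation $f'(\tau)=(T-\tau)(T-3\tau)$ on $[1,T/2]$, together with the integrality remark, supplies the missing argument rather than departing from the paper.
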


\begin{proof}
\textit{(i)} We first derive the components of $\omega_S^{LD}$ from
Proposition \ref{prop:LD} under the maintained assumptions.

\begin{eqnarray*}
    \bar{E}[\Delta \bar w_i^{*2}]=Var\left(\frac{1}{\tau}\sum_{t=T-\tau+1}^Tw_{it}-\frac{1}{\tau}\sum_{t=1}^{\tau}w_{it}\right)=\frac{2\sigma_w^2}{\tau},
    \label{eq:vw}
\end{eqnarray*}
where the first equality follows from $E[w_i^{*}]=0$ under the maintained assumptions. The second equality follows from the definition of $\Delta \bar w_i^{*}$, whereas the last equality follows from the i.i.d.\ assumption imposed on $w_{it}^*$ and $Var(w_{it}^*)=\sigma_w^2$.
\begin{eqnarray*}\bar{E}[\Delta\bar{c}_i^*]&=&\bar{E}\left[\left(\frac{1}{\tau}\sum_{t=T-\tau+1}^T(\mu_i+\delta_i t)-\frac{1}{\tau}\sum_{t=1}^\tau(\mu_i+\delta_i t) \right)^2\right]\\
&=&\bar{E}\left[\left(\delta_i\frac{1}{\tau}\left(\frac{\tau(T-\tau+1+T)}{2}-\frac{\tau(\tau+1)}{2}\right)\right)^2\right]=\bar{E}\left[\left(\delta_i\left(\frac{T-\tau+1+T-(\tau+1)}{2}\right)\right)^2\right]\\
&=&\bar{E}\left[\left(\delta_i\left(\frac{2T-2\tau}{2}\right)\right)^2\right]\equiv \sigma_\delta^2(T-\tau)^2,
    \label{eq:vc}
\end{eqnarray*}
where the first equality follows from the long-difference transformation differencing out $\mu_i$. The last equality follows from $\delta_i\overset{i.i.d.}{\sim} (0,\sigma^2)$. The remaining equalities follow from standard algebraic manipulations.

Since $\Delta \bar w_{i}^*$ and $\Delta \bar c_{i}^*$ are independent, zero-mean random variables under the maintained assumptions, $\bar{E}[\Delta \bar w_{i}^*\Delta \bar c_{i}^*]=0$. As a result, $\omega_S^{LD}$ simplifies to
\begin{eqnarray}
\omega_S^{LD}& = &
= \frac{\dfrac{2\sigma_w^2}{\tau}}{\dfrac{2\sigma_w^2}{\tau} +
\sigma_\delta^2(T-\tau)^2}=\frac{1}{1+\frac{\tau(T-\tau)^2}{2}\frac{\sigma_\delta^2}{\sigma_w^2}}.
\label{eq:omega_simplified}
\end{eqnarray}

\end{proof}
\subsection{Unit Root Case}
\begin{proposition}[Unit Root Case]
\label{prop:tau_unit_root}
Suppose that $x_{it} = x_{i,t-1} + \eta_{it}$ with $c^*_{it} = x_{i,t-1}$ and
$w^*_{it} = \eta_{it}$, where $\eta_{it} \overset{i.i.d.}{\sim} (0,\sigma_\eta^2)$, $\sigma_\eta^2>0$,
and the process is initialized at $x_{i0}$ for each $i$. For $1 \le \tau \le T/2$,
\begin{enumerate}[(i)]
\item $SNR_c=T\tau - \frac{4\tau^2}{3} + \frac{\tau}{3}-1$ and $\omega_S^{LD} = \dfrac{3}{3\tau T - 4\tau^2 + 1}$;
\item $\omega_S^{LD}$ is minimized at $\tau^* = 3T/8$.
\end{enumerate}
\end{proposition}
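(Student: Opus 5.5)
The plan is to compute $\omega_S^{LD}$ directly from Proposition \ref{prop:LD}. Its numerator is $\bar{E}[\Delta\bar w_i^{*2}]+\bar{E}[\Delta\bar w_i^*\Delta\bar c_i^*]=\bar{E}[\Delta\bar w_i^*\Delta\bar x_i]$, so
$$\omega_S^{LD}=\frac{\bar{E}[\Delta\bar w_i^*\Delta\bar x_i]}{\bar{E}[\Delta\bar x_i^2]}.$$
This form avoids computing the three pieces $\bar{E}[\Delta\bar c_i^{*2}]$, $\bar{E}[\Delta\bar w_i^{*2}]$ and $\bar{E}[\Delta\bar c_i^*\Delta\bar w_i^*]$ separately.

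\textbf{Step 1: write both long differences as linear forms in the shocks.} Since $x_{it}=x_{i0}+\sum_{s\le t}\eta_{is}$, the initial value $x_{i0}$ cancels in any long difference. This holds with no assumption on $x_{i0}$, so its distribution never enters. Counting, for each $s$, how many $t$ in the late window $\{T-\tau+1,\dots,T\}$ and in the early window $\{1,\dots,\tau\}$ satisfy $t\ge s$ gives $\Delta\bar x_i=\sum_{s=1}^T d_s\eta_{is}$ with
$$d_s=\begin{cases}(s-1)/\tau, & s\le\tau,\\ 1, & \tau<s\le T-\tau+1,\\ (T-s+1)/\tau, & s>T-\tau+1.\end{cases}$$
Likewise $\Delta\bar w_i^*=\sum_s b_s\eta_{is}$, where $b_s=1/\tau$ on the late window, $-1/\tau$ on the early window, and $0$ otherwise.

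\textbf{Step 2: compute the two moments.} Because the $\eta_{is}$ are i.i.d.\ with variance $\sigma_\eta^2$, every moment is $\sigma_\eta^2$ times a sum of coefficient products.
\begin{itemize}
\item Cross moment: $\bar{E}[\Delta\bar w_i^*\Delta\bar x_i]=\sigma_\eta^2\sum_s b_sd_s$. The early window contributes $-\frac{\tau-1}{2\tau}$ and the late window contributes $\frac{\tau+1}{2\tau}$, so the total is $\sigma_\eta^2/\tau$.
\item Second moment: $\bar{E}[\Delta\bar x_i^2]=\sigma_\eta^2\sum_s d_s^2$. The two ramps give the square sums $\sum_{k=0}^{\tau-1}k^2/\tau^2$ and $\sum_{k=1}^{\tau}k^2/\tau^2$, and the flat middle block contributes $T-2\tau$. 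Together this is
$$\frac{2\tau^2+1}{3\tau}+T-2\tau=\frac{3T\tau-4\tau^2+1}{3\tau}.$$
\end{itemize}
The ratio is then $\omega_S^{LD}=3/(3\tau T-4\tau^2+1)$. The signal-to-noise ratio follows as $SNR_c=\omega_L^{LD}/\omega_S^{LD}=(1-\omega_S^{LD})/\omega_S^{LD}$, using the adding-up property in Proposition \ref{prop:LD}. This is the ratio displayed in Section \ref{sec:choice_tau}.

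The main obstacle is the bookkeeping at the window boundaries. In particular, the shock at $s=T-\tau+1$ enters the late window fully, and the coefficient $d_1=0$ must be tracked correctly. I would check the final formula against small cases such as $T=2,\tau=1$, and reconcile the constant term in the displayed $SNR_c$ expression with $(1-\omega_S^{LD})/\omega_S^{LD}$. My own algebra gives $T\tau-\tfrac{4\tau^2}{3}-\tfrac{2}{3}$, which does not match the stated $T\tau-\tfrac{4\tau^2}{3}+\tfrac{\tau}{3}-1$. That constant needs careful verification before the result is finalised.

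\textbf{Step 3: part (ii).} Minimizing $\omega_S^{LD}$ is equivalent to maximizing the concave quadratic $3T\tau-4\tau^2$ in $\tau$. The first-order condition gives $\tau^*=3T/8$, which lies in $[1,T/2]$ whenever $T\ge 8/3$. Treating $\tau$ as continuous, this interior point is the minimizer.
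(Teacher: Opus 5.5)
Your computation of $\omega_S^{LD}=3/(3\tau T-4\tau^2+1)$ and your part (ii) are correct, and they follow the paper's approach. The paper computes $\bar{E}[\Delta\bar w_i^{*2}]=2\sigma_\eta^2/\tau$, $\bar{E}[\Delta\bar c_i^{*2}]=\sigma_\eta^2\bigl(T-\frac{4\tau}{3}+\frac{1}{3\tau}\bigr)$ and $\bar{E}[\Delta\bar c_i^*\Delta\bar w_i^*]=-\sigma_\eta^2/\tau$ separately, then notes that the denominator collapses to $\bar{E}[\Delta\bar c_i^{*2}]$. Working with $\Delta\bar x_i$ directly is leaner. Your coefficients are the paper's shifted by one period ($d_s=\gamma_{s-1}$ with $\gamma_0=0$, because $c^*_{it}=x_{i,t-1}$), which is exactly why $\bar{E}[\Delta\bar x_i^2]=\bar{E}[\Delta\bar c_i^{*2}]$.

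The discrepancy you flag is real, and it is an error in the statement, not in your algebra: the correct value is $SNR_c=T\tau-\frac{4\tau^2}{3}-\frac{2}{3}$.
\begin{itemize}
\item Plugging the paper's own three components into its definition $SNR_c=(\bar{E}[\Delta\bar c_i^{*2}]+\bar{E}[\Delta\bar c_i^*\Delta\bar w_i^*])/(\bar{E}[\Delta\bar w_i^{*2}]+\bar{E}[\Delta\bar c_i^*\Delta\bar w_i^*])$ gives $\tau\bigl(T-\frac{4\tau}{3}+\frac{1}{3\tau}-\frac{1}{\tau}\bigr)$, which is your expression.
\item The displayed $T\tau-\frac{4\tau^2}{3}+\frac{\tau}{3}-1$ is what you get by replacing $\frac{1}{3\tau}$ with $\frac{1}{3}$. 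The paper's proof asserts this final equality without showing the step.
\item The two formulas in (i) cannot both hold. Since $SNR_c=\omega_L^{LD}/\omega_S^{LD}=1/\omega_S^{LD}-1$, the stated $SNR_c$ would force $\omega_S^{LD}=3/(3\tau T-4\tau^2+\tau)$. That agrees with the stated $\omega_S^{LD}$ only at $\tau=1$.
\item Your proposed check settles it. For $T=4$, $\tau=2$ one has $\Delta\bar x_i=\frac{1}{2}\eta_{i2}+\eta_{i3}+\frac{1}{2}\eta_{i4}$, so $\omega_S^{LD}=1/3$ and $SNR_c=2$. The stated formula gives $7/3$.
\end{itemize}
So do not try to reconcile the constant. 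Prove (i) with the corrected $SNR_c$ and treat the stated expression as a typo. The $\omega_S^{LD}$ formula and $\tau^*=3T/8$ are unaffected. So is the main-text remark that the maximal $SNR_c$ is quadratic in $T$; it equals $\frac{3T^2}{16}-\frac{2}{3}$.
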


\begin{proof}
\textit{(i)} We first derive all components of $\omega_S^{LD}$ defined in Proposition \ref{prop:LD} under the maintained assumptions. To do so, it is first helpful to conduct recursive substitution in $x_{it}$ up to the initial condition $x_{i0}$, such that $x_{it} = x_{i0} + \sum_{s=1}^t \eta_{is}$, and subsequently $c_{it}^*=x_{i0}+\sum_{s=1}^{t-1}\eta_{is}$.

Since the windows are disjoint ($\tau \le T/2$) and $\eta_{it}$ is i.i.d. with $E[\eta_{it}]=0$, it follows that
\begin{equation*}
\bar{E}[\Delta \bar w_i^{*2}]= Var\left(\frac{1}{\tau}\sum_{t=T-\tau+1}^T\eta_{it}
- \frac{1}{\tau}\sum_{t=1}^\tau\eta_{it}\right) = \frac{2\sigma_\eta^2}{\tau}.
\end{equation*}

Next, we express $\Delta\bar c_i^*$ as a linear combination of the innovations. First, note that since $c^*_{it}= x_{i0}+1\{t>1\}\sum_{s=1}^{t-1}\eta_{is}$, $\Delta\bar c_i^*$ can be simplified as follows, where the third equality follows from the sums involving $x_{i0}$ canceling out.

\begin{eqnarray}\Delta\bar c_i^*&\equiv&\frac{1}{\tau}\sum_{t=T-\tau+1}^T c_{it}^*-\frac{1}{\tau}\sum_{t=1}^\tau c_{it}^*=\frac{1}{\tau}\sum_{t=T-\tau+1}^T\left(x_{i0}+\sum_{s=1}^{t-1}\eta_{is}\right)-\frac{1}{\tau}\sum_{t=1}^\tau\left(x_{i0}+1\{t>1\}\sum_{s=1}^{t-1}\eta_{is}\right)\nonumber\\
&=&\frac{1}{\tau}\sum_{t=T-\tau+1}^T \sum_{s=1}^{t-1}\eta_{is}-\frac{1}{\tau}\sum_{t=2}^\tau\sum_{s=1}^{t-1}\eta_{is}=\sum_{s=1}^{T-\tau}\eta_{is}+\frac{1}{\tau}\sum_{s=T-\tau+1}^T(T-s)\eta_{is}-\frac{1}{\tau}\sum_{s=1}^{\tau-1}(\tau-s)\eta_{is}\nonumber\\
&=&\sum_{s=1}^{\tau-1}\frac{s}{\tau}\eta_{is}+\sum_{s=\tau}^{T-\tau}\eta_{is}+\sum_{s=T-\tau+1}^T\frac{(T-s)}{\tau}\eta_{is}\label{eq:LD_cit}\end{eqnarray}

As a result, Eq.\ \eqref{eq:LD_cit} implies that $\Delta\bar c_i^* = \sum_{s=1}^T \gamma_s\,\eta_{is}$
with
\begin{equation}
\gamma_s = \begin{cases}
s/\tau, & 1 \le s \le \tau,\\
1, & \tau < s \le T-\tau,\\
(T-s)/\tau, & T-\tau < s \le T.
\end{cases}
\label{eq:rw_coefs}
\end{equation}
Under the i.i.d.\ assumption imposed on $\eta_{it}$ with $E[\eta_{it}]=0$, $\bar{E}[\Delta \bar c_i^{*2}]$ simplifies as follows
\begin{eqnarray*}
\bar{E}[\Delta \bar c_i^{*2}]&=&Var(\Delta \bar c_i^*)=\sigma_\eta^2\left[\frac{1}{\tau^2}\sum_{s=1}^{\tau}s^2 + (T-2\tau)
+ \frac{1}{\tau^2}\sum_{s=T-\tau+1}^T(T-s)^2\right]\\
&=& \sigma_\eta^2\left[\frac{(\tau+1)(2\tau+1)}{6\tau} + (T-2\tau)
+ \frac{(\tau-1)(2\tau-1)}{6\tau}\right]\\
&=& \sigma_\eta^2\left[(T-2\tau) + \frac{(\tau+1)(2\tau+1)+(\tau-1)(2\tau-1)}{6\tau}\right]\\
&=&\sigma_\eta^2\left[(T-2\tau)-\frac{4\tau^2+2}{6\tau}\right]=\sigma_\eta^2\left[(T-2\tau)+\frac{2\tau^2+1}{3\tau}\right]\\
&=& \sigma_\eta^2\left(T - \frac{4\tau}{3} + \frac{1}{3\tau}\right)
\end{eqnarray*}
where the second equality follows from $\sum_{s=1}^\tau s^2=\frac{\tau(\tau+1)(2\tau+1)}{6}$ and $\sum_{s=T-\tau+1}^T(T-s)^2=\sum_{r=0}^{\tau-1}r^2=\frac{(\tau-1)\tau(2(\tau-1)+1)}{6}=\frac{(\tau-1)\tau(2\tau-1)}{6}$. The remaining equalities follow by standard algebraic manipulation.

Since $\Delta\bar w_i^* = \sum_{s=1}^T \lambda_s\,\eta_{is}$ with
$\lambda_s = -\frac{1}{\tau}$ for $s=1,\dots,\tau$, $\frac{1}{\tau}$ for $s=T-\tau+1,\dots,T$, and
$0$ otherwise, using the definition of $\gamma_s$ in Eq.\ \eqref{eq:rw_coefs} and invoking the i.i.d.\ assumption imposed on $\eta_{it}$ with $E[\eta_{it}]=0$ yields
\begin{equation*}\bar{E}[\Delta\bar c_i^*\Delta \bar w_i^*] = \sigma_\eta^2\sum_{s=1}^T \lambda_s\gamma_s
= \sigma_\eta^2\left[-\frac{1}{\tau^2}\sum_{s=1}^{\tau}s
+ \frac{1}{\tau^2}\sum_{s\in L}(T-s)\right]
= \sigma_\eta^2\left[-\frac{\tau+1}{2\tau} + \frac{\tau-1}{2\tau}\right]
= -\frac{\sigma_\eta^2}{\tau}.
\end{equation*}

Substituting into the definition of $\omega_S^{LD}$ in Proposition \ref{prop:LD} noting that $\bar{E}[\Delta \bar w_i^{*2}]+2\bar{E}[\Delta \bar w_i^*\Delta\bar c_i^*]=0$, 
so the denominator collapses to $\bar{E}[\Delta\bar c_i^{*2}]$, while
the numerator equals $\sigma_\eta^2/\tau$. Hence,
\begin{equation*}
\omega_S^{LD}
= \frac{\sigma_\eta^2/\tau}{\sigma_\eta^2\left(T - \frac{4\tau}{3} + \frac{1}{3\tau}\right)}
= \frac{3}{3\tau T - 4\tau^2 + 1}.
\end{equation*}
and
\begin{eqnarray}
    SNR_c&=&\frac{\bar{E}[\Delta\bar c_i^{*2}]+\bar{E}[\Delta \bar c_i^*\Delta\bar w_i^*]}{\bar{E}[\Delta\bar w_i^{*2}]+\bar{E}[\Delta \bar c_i^*\Delta\bar w_i^*]}=T\tau - \frac{4\tau^2}{3} + \frac{\tau}{3}-1.
\end{eqnarray}
\textit{(ii)} Minimizing $\omega_S^{LD}$ is equivalent to maximizing
$q(\tau) \equiv 3\tau T - 4\tau^2$, a strictly concave parabola maximized at
$\tau^* = \frac{3T}{8}$, which implies the result.
\end{proof}
\section{FE and LD probability limits under response heterogeneity}\label{app:heterogeneity}
\begin{proposition}\label{prop:FE_heterogeneity}
    Suppose that $\bar{E}[\sum_t\tilde{x}_{it}\tilde{y}_{it}]<\infty$ and $0<\bar{E}[\sum_t\tilde{x}_{it}^2]<\infty$. Suppose further that $y_{it}=\theta_{S,i}w_{it}^*+\theta_{L,i}c_{it}^*+\alpha_i+\varepsilon_{it}$ and $E[\varepsilon_{it}|X_i] = 0$ for all $i=1,\dots,n$ and $t=1,\dots,T$. Then,
 \begin{eqnarray*}\beta_{FE}&=&\bar{E}\left[\theta_{S,i}\frac{E[\sum_t\tilde w^{*2}_{it}|\theta_{S,i}]+E[\sum_t\tilde w^{*}_{it}\tilde{c}_{it}^*|\theta_{S,i}]}{\bar{E}[\sum_t \tilde{x}_{it}^2]}\right]+\bar{E}\left[\theta_{L,i}\frac{E[\sum_t\tilde c^{*2}_{it}|\theta_{L,i}]+E[\sum_t\tilde w^{*}_{it}\tilde{c}_{it}^*|\theta_{L,i}]}{\bar{E}[\sum_t \tilde{x}_{it}^2]}\right]
    \end{eqnarray*}
\end{proposition}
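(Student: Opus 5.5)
The argument follows the proof of Proposition \ref{prop:FE}, with the unit-specific coefficients pulled through conditional expectations.

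\textbf{Step 1 (within transformation and substitution).} First, within-transform the heterogeneous outcome equation. Since $\theta_{S,i}$ and $\theta_{L,i}$ are time-invariant, demeaning over $t$ removes $\alpha_i$ and gives
\[
\tilde y_{it}=\theta_{S,i}\tilde w^*_{it}+\theta_{L,i}\tilde c^*_{it}+\tilde\varepsilon_{it}.
\]
Multiplying by $\tilde x_{it}$, summing over $t$ and taking $\bar E$ splits the numerator of $\beta_{FE}$ into three terms:
\[
\bar E\Big[\theta_{S,i}\textstyle\sum_t\tilde x_{it}\tilde w^*_{it}\Big],\qquad
\bar E\Big[\theta_{L,i}\textstyle\sum_t\tilde x_{it}\tilde c^*_{it}\Big],\qquad
\bar E\Big[\textstyle\sum_t\tilde x_{it}\tilde\varepsilon_{it}\Big].
\]

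\textbf{Step 2 (the error term).} The error term vanishes by strict exogeneity $E[\varepsilon_{it}\mid X_i]=0$, since $\tilde x_{it}$ is a function of $X_i$. I will state explicitly that this exogeneity is understood to hold conditional on $(X_i,\theta_{S,i},\theta_{L,i})$, or at least that $\varepsilon_{it}$ is mean-independent of $X_i$ in a way that makes the cross term zero. This is exactly as in Proposition \ref{prop:FE}.

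\textbf{Step 3 (decomposition and iterated expectations).} Substitute $\tilde x_{it}=\tilde w^*_{it}+\tilde c^*_{it}$ to get $\sum_t\tilde x_{it}\tilde w^*_{it}=\sum_t\tilde w^{*2}_{it}+\sum_t\tilde w^*_{it}\tilde c^*_{it}$, and the analogous expression for the climate term. Then apply the law of iterated expectations inside $\bar E$, conditioning the first term on $\theta_{S,i}$ and the second on $\theta_{L,i}$:
\[
E\Big[\theta_{S,i}\textstyle\sum_t\tilde x_{it}\tilde w^*_{it}\Big]
= E\Big[\theta_{S,i}\,E\big[\textstyle\sum_t\tilde w^{*2}_{it}+\sum_t\tilde w^*_{it}\tilde c^*_{it}\,\big|\,\theta_{S,i}\big]\Big],
\]
and similarly for the $\theta_{L,i}$ term. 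Finally, divide by the deterministic constant $\bar E[\sum_t\tilde x_{it}^2]$, which is strictly positive and finite by assumption. This constant moves inside the outer $\bar E$, which delivers the displayed formula.

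\textbf{Main obstacle.} The only delicate point is bookkeeping rather than mathematics. I must make sure the limit defining $\bar E$ exists and commutes with this conditioning, and that the products $\theta_{S,i}\sum_t\tilde x_{it}\tilde w^*_{it}$ and $\theta_{L,i}\sum_t\tilde x_{it}\tilde c^*_{it}$ are integrable. I would add these to the ``sufficient regularity conditions'' already assumed in the paper, namely finite moments for $\theta_{S,i}$, $\theta_{L,i}$, $\tilde w^*_{it}$ and $\tilde c^*_{it}$. I would also note that conditioning separately on $\theta_{S,i}$ and $\theta_{L,i}$, as the statement does, is valid term by term. Conditioning jointly on $(\theta_{S,i},\theta_{L,i})$, as in Remark \ref{rem:heterogeneity}, would give an equivalent expression.
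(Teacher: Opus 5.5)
Your proposal is correct and follows the paper's proof essentially step for step. Like the paper, you substitute the within-transformed heterogeneous model, drop the error term by strict exogeneity, write $\tilde x_{it}=\tilde w^*_{it}+\tilde c^*_{it}$, and apply the law of iterated expectations conditioning separately on $\theta_{S,i}$ and $\theta_{L,i}$ before dividing by the constant denominator (your strengthened exogeneity condition is not needed, since the error term carries no $\theta$ factor and $E[\varepsilon_{it}\mid X_i]=0$ already makes it vanish).
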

\begin{proof}
\noindent 
\begin{eqnarray*}
    \beta_{FE}&\equiv&\frac{\bar{E}[\sum_t\tilde{x}_{it}\tilde{y}_{it}]}{\bar{E}[\sum_t\tilde{x}_{it}^2]}= \frac{\bar{E}[\sum_t \tilde{x}_{it}(\theta_{S,i} \tilde w^*_{it} + \theta_{L,i}\tilde c^*_{it} + \tilde \varepsilon_{it})]}{\bar{E}[\sum_t \tilde{x}_{it}^2]} \\
    &=&\frac{\bar{E}[\theta_{S,i}\sum_t \tilde{x}_{it} \tilde w^*_{it}]}{\bar{E}[\sum_t \tilde{x}_{it}^2]} +\frac{\bar{E}[\theta_{L,i}\sum_t \tilde{x}_{it}\tilde c^*_{it}]}{\bar{E}[\sum_t \tilde{x}_{it}^2]} \\
    &=&\lim_{n\rightarrow\infty}\frac{1}{n}\sum_{i=1}^nE\left[\theta_{S,i}\frac{E[\sum_t\tilde w^{*2}_{it}|\theta_{S,i}]+E[\sum_t\tilde w^{*}_{it}\tilde{c}_{it}^*|\theta_{S,i}]}{\bar{E}[\sum_t \tilde{x}_{it}^2]}\right]\\&&+\lim_{n\rightarrow\infty}\frac{1}{n}\sum_{i=1}^n E\left[\theta_{L,i}\frac{E[\sum_t\tilde c^{*2}_{it}|\theta_{L,i}]+E[\sum_t\tilde w^{*}_{it}\tilde{c}_{it}^*|\theta_{L,i}]}{\bar{E}[\sum_t \tilde{x}_{it}^2]}\right]\\
&=&\bar{E}\left[\theta_{S,i}\frac{E[\sum_t\tilde w^{*2}_{it}|\theta_{S,i}]+E[\sum_t\tilde w^{*}_{it}\tilde{c}_{it}^*|\theta_{S,i}]}{\bar{E}[\sum_t \tilde{x}_{it}^2]}\right]+\bar{E}\left[\theta_{L,i}\frac{E[\sum_t\tilde c^{*2}_{it}|\theta_{L,i}]+E[\sum_t\tilde w^{*}_{it}\tilde{c}_{it}^*|\theta_{L,i}]}{\bar{E}[\sum_t \tilde{x}_{it}^2]}\right]
\end{eqnarray*}
where the penultimate equality follows from the definition of $\bar{E}[\cdot]$ and applying the law of iterated expectations inside the unconditional expectation for each $i$.

\end{proof}
\begin{proposition}\label{prop:LD_heterogeneity}
    Suppose that $\bar{E}[\Delta \bar x_i\Delta \bar y_i]<\infty$ and $0<\bar{E}[\Delta \bar x_i^2]<\infty$. Suppose further that $y_{it}=\theta_{S,i}w_{it}^*+\theta_{L,i}c_{it}^*+\alpha_i+\varepsilon_{it}$ and $E[\varepsilon_{it}|X_i] = 0$ for all $i=1,\dots,n$ and $t=1,\dots,T$.  Then,
 \begin{eqnarray*}\beta_{LD}&=& \bar{E}\left[\theta_{L,i}\frac{E[\Delta \bar{c}_i^{*2}|\theta_{L,i}]+E[\Delta\bar{c}_i\Delta\bar{w}_i|\theta_{L,i}]}{\bar{E}[\Delta \bar{x}_{i}^2]}\right]+ \bar{E}\left[\theta_{S,i}\frac{E[\Delta \bar{w}_i^{*2}|\theta_{S,i}]+E[\Delta\bar{c}_i\Delta\bar{w}_i|\theta_{S,i}]}{\bar{E}[\Delta \bar{x}_{i}^2]}\right]
    \end{eqnarray*}
    Suppose further that $E[\Delta \bar x_i^2|\theta_{L,i},\theta_{S,i}]>0$.
    \begin{eqnarray*}
    \beta_{LD}&=&\bar{E}\left[\frac{E[\Delta \bar x_i^2|\theta_{L,i},\theta_{S,i}]}{\bar{E}[\Delta \bar{x}_{i}^2]}\theta_{L,i}\right]+\bar{E}\left[\frac{E[\Delta \bar x_i^2|\theta_{L,i},\theta_{S,i}]}{\bar{E}[\Delta \bar{x}_{i}^2]}\omega_{S,i}^{LD}(\theta_{S,i}-\theta_{L,i})\right],\end{eqnarray*}
    where $\omega_{S,i}^{LD}\equiv\frac{E[\Delta \bar{w}_i^{*2}|\theta_{L,i},\theta_{S,i}]+E[\Delta\bar{c}_i\Delta\bar{w}_i|\theta_{L,i},\theta_{S,i}]}{E[\Delta \bar{x}_{i}^2|\theta_{L,i},\theta_{S,i}]}$.
\end{proposition}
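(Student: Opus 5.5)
The proof mirrors that of Proposition \ref{prop:FE_heterogeneity}, applied to the long-differenced model. I would proceed in three steps.

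\textbf{Step 1 (numerator).} Long-difference the heterogeneous outcome model unit by unit. Because $\theta_{S,i}$ and $\theta_{L,i}$ are time-invariant, $\alpha_i$ drops out and
\[
\Delta\bar y_i=\theta_{S,i}\Delta\bar w_i^*+\theta_{L,i}\Delta\bar c_i^*+\Delta\bar\varepsilon_i .
\]
I would substitute this into the numerator of $\beta_{LD}$. Strict exogeneity, $E[\varepsilon_{it}|X_i]=0$, kills the $\Delta\bar x_i\Delta\bar\varepsilon_i$ term; if the conditioning set is meant to include the coefficients, I would use $E[\varepsilon_{it}|X_i,\theta_{L,i},\theta_{S,i}]=0$. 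Writing $\Delta\bar x_i=\Delta\bar w_i^*+\Delta\bar c_i^*$ then gives
\[
\bar E\big[\theta_{S,i}(\Delta\bar w_i^{*2}+\Delta\bar w_i^*\Delta\bar c_i^*)\big]+\bar E\big[\theta_{L,i}(\Delta\bar c_i^{*2}+\Delta\bar w_i^*\Delta\bar c_i^*)\big].
\]

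\textbf{Step 2 (first display).} Apply the law of iterated expectations for each $i$, conditioning on $\theta_{L,i}$ in the first term and on $\theta_{S,i}$ in the second. Then pull the coefficient outside the inner expectation and divide by the nonrandom denominator $\bar E[\Delta\bar x_i^2]$. This yields the first display, exactly as in the FE case.

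\textbf{Step 3 (second display).} Condition instead on the pair $(\theta_{L,i},\theta_{S,i})$ in both terms. Adopt the shorthand $A_i=E[\Delta\bar w_i^{*2}|\cdot]$, $B_i=E[\Delta\bar c_i^{*2}|\cdot]$, $C_i=E[\Delta\bar w_i^*\Delta\bar c_i^*|\cdot]$, so that $E[\Delta\bar x_i^2|\cdot]=A_i+B_i+2C_i$. The unit-level integrand is
\[
\theta_{S,i}(A_i+C_i)+\theta_{L,i}(B_i+C_i).
\]
Rewrite it as
\[
\theta_{L,i}(A_i+B_i+2C_i)+(\theta_{S,i}-\theta_{L,i})(A_i+C_i),
\]
which is the same algebraic step as in Proposition \ref{prop:LD}(ii), now done unit by unit. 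Under $E[\Delta\bar x_i^2|\theta_{L,i},\theta_{S,i}]>0$, factor $A_i+C_i=E[\Delta\bar x_i^2|\cdot]\,\omega_{S,i}^{LD}$. Dividing by $\bar E[\Delta\bar x_i^2]$ and averaging over $i$ gives the stated decomposition.

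\textbf{Main obstacle.} This is bookkeeping rather than mathematics. The first display conditions on a single coefficient in each term, while the second conditions on both. The two are reconciled through the tower property: $E[\theta_{L,i}g]=E[\theta_{L,i}E[g|\theta_{L,i}]]=E[\theta_{L,i}E[g|\theta_{L,i},\theta_{S,i}]]$. I would state this explicitly rather than derive the second display from the first. I would also assume enough moments for $\theta\cdot\Delta\bar x^2$ to be integrable, since that is needed for the $\bar E$ limits to exist and be exchanged with the sums.
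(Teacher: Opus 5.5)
Your proposal is correct and follows essentially the same route as the paper's proof: substitute the long-differenced outcome model, drop the $\Delta\bar\varepsilon_i$ term by strict exogeneity, and apply iterated expectations unit by unit (on single coefficients for the first display, on the pair $(\theta_{L,i},\theta_{S,i})$ for the second). You then rearrange $\theta_{L,i}(1-\omega_{S,i}^{LD})+\theta_{S,i}\omega_{S,i}^{LD}$ into $\theta_{L,i}+\omega_{S,i}^{LD}(\theta_{S,i}-\theta_{L,i})$, weighted by $E[\Delta\bar x_i^2|\theta_{L,i},\theta_{S,i}]/\bar{E}[\Delta\bar x_i^2]$. Your hedge about conditioning the exogeneity assumption on the coefficients is unnecessary: $\Delta\bar x_i\Delta\bar\varepsilon_i$ involves no coefficients, so $E[\varepsilon_{it}|X_i]=0$ already gives $E[\Delta\bar x_i\Delta\bar\varepsilon_i]=0$.
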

\begin{proof} The result follows by similar steps as those made in the proof of Proposition \ref{prop:FE_heterogeneity}, 
\begin{eqnarray*}
    \beta_{LD}&\equiv & \frac{\bar{E}[  \Delta \bar{x}_{i}\Delta\bar{y}_{i}]}{\bar{E}[\Delta \bar{x}_{i}^2]}
     =\frac{\bar{E}[ \Delta \bar{x}_{i}(\theta_{S,i} \Delta \bar w^*_{i} + \theta_{L,i} \Delta \bar c^*_{i} + \Delta \bar \varepsilon_{i})]}{\bar{E}[\Delta \bar{x}_{i}^2]}\\
&=& \bar{E}\left[\theta_{L,i}\frac{E[\Delta \bar{c}_i^{*2}|\theta_{L,i},\theta_{S,i}]+E[\Delta\bar{c}_i\Delta\bar{w}_i|\theta_{L,i},\theta_{S,i}]}{\bar{E}[\Delta \bar{x}_{i}^2]}\right]\\&&+ \bar{E}\left[\theta_{S,i}\frac{E[\Delta \bar{w}_i^{*2}|\theta_{L,i},\theta_{S,i}]+E[\Delta\bar{c}_i\Delta\bar{w}_i|\theta_{L,i},\theta_{S,i}]}{\bar{E}[\Delta \bar{x}_{i}^2]}\right]\\
&=&\bar{E}\left[\theta_{L,i}\frac{E[\Delta \bar x_i^2|\theta_{L,i},\theta_{S,i}]}{\bar{E}[\Delta \bar{x}_{i}^2]}\frac{E[\Delta \bar{c}_i^{*2}|\theta_{L,i},\theta_{S,i}]+E[\Delta\bar{c}_i\Delta\bar{w}_i|\theta_{L,i},\theta_{S,i}]}{E[\Delta \bar{x}_{i}^2|\theta_{L,i},\theta_{S,i}]}\right] \\
&&+\bar{E}\left[\theta_{S,i}\frac{E[\Delta \bar x_i^2|\theta_{L,i},\theta_{S,i}]}{\bar{E}[\Delta \bar{x}_{i}^2]}\frac{E[\Delta \bar{w}_i^{*2}|\theta_{L,i},\theta_{S,i}]+E[\Delta\bar{c}_i\Delta\bar{w}_i|\theta_{L,i},\theta_{S,i}]}{E[\Delta \bar{x}_{i}^2|\theta_{L,i},\theta_{S,i}]}\right]\\
&=&\bar{E}\left[\frac{E[\Delta \bar x_i^2|\theta_{L,i},\theta_{S,i}]}{\bar{E}[\Delta \bar{x}_{i}^2]}\left(\theta_{L,i}(1-\omega_{S,i}^{LD})+\theta_{S,i}\omega_{S,i}^{LD}\right)\right]\\
&=&\bar{E}\left[\frac{E[\Delta \bar x_i^2|\theta_{L,i},\theta_{S,i}]}{\bar{E}[\Delta \bar{x}_{i}^2]}\theta_{L,i}\right]+\bar{E}\left[\frac{E[\Delta \bar x_i^2|\theta_{L,i},\theta_{S,i}]}{\bar{E}[\Delta \bar{x}_{i}^2]}\omega_{S,i}^{LD}(\theta_{S,i}-\theta_{L,i})\right]
\end{eqnarray*}
where $\omega_{S,i}^{LD}\equiv\frac{E[\Delta \bar{w}_i^{*2}|\theta_{L,i},\theta_{S,i}]+E[\Delta\bar{c}_i\Delta\bar{w}_i|\theta_{L,i},\theta_{S,i}]}{E[\Delta \bar{x}_{i}^2|\theta_{L,i},\theta_{S,i}]}$.
\end{proof}

\renewcommand{\thefigure}{D\arabic{figure}}
\renewcommand{\thetable}{D\arabic{table}}
\setcounter{figure}{0}
\setcounter{table}{0}
\section{Supplement to Section \ref{sec:simulations}}

\subsection{Data description for empirically-calibrated simulation}
\label{app_section:sim1_data}

This section provides documentation of the data used in the empirical illustration. We construct a dataset following the methods outlined in \citet{burke2016aejep} and \citet{schlenker2009pnas}, for our corn yield and weather variables. Our dataset contains the exact same set of counties (N = 1531) used in \citet{burke2016aejep}, but with an additional 20 years of data (1950-2022). It is necessary that our dataset contain additional years for two reasons. First, additional years allow us to establish different choices for $c^*_{it}$ (e.g., linear trend, climate normal). For instance, in order to define the unobserved climate component as a 30-year climate normal, we require a full 30-years before the estimation sample. Second, to compare the decomposed variance weights across various specifications of the LD model (i.e. longer windows ($\tau$) or longer differences ($T$)) additional data is needed. 

Table \ref{table:validate_data_be} presents estimates using panel FE and LD that replicates results from \citet{burke2016aejep} and compares with our constructed data set. 

\subsection{LD and panel data dimensions}
\label{app:ld_explanation}

For the simulations in Section \ref{sec:simulations}, varying sizes of LD averages are compared. The center of each average is fixed with more or less years on either side of the center being used to calculate the average. Figure \ref{fig:year_info} illustrates two examples.

\begin{figure}[H]
\centering
\begin{tikzpicture}[xscale=0.3, yscale=1]

\draw[thick] (0,0) -- (30,0);
% Vertical ticks at ends
\draw[thick] (0,-0.2) -- (0,0.2);
\draw[thick] (30,-0.2) -- (30,0.2);
% Year labels at ends
\node[left] at (0,-0.2) {1986};
\node[right] at (30,-0.2) {2015};
% Highlight segment: 1990--2000 (10 years centered on 1995)
\draw[line width=2pt] (0,0) -- (5,0);
% Highlight segment: 2010--2020 (10 years centered on 2015)
\draw[line width=2pt] (25,0) -- (30,0);
% Title
\node[below,yshift=4pt] at (15,1.2) {T = 30, 5-year averages};

\draw[thick] (0,-3) -- (30,-3);
% Vertical ticks at ends
\draw[thick] (0,-3.2) -- (0,-2.8);
\draw[thick] (30,-3.2) -- (30,-2.8);
% Year labels at ends
\node[left] at (0,-3.2) {1986};
\node[right] at (30,-3.2) {2015};
% Highlight segment: 1990--2000 (10 years centered on 1995)
\draw[line width=2pt] (0,-3) -- (10,-3);
% Highlight segment: 2010--2020 (10 years centered on 2015)
\draw[line width=2pt] (20,-3) -- (30,-3);
% Title
\node[below,yshift=4pt] at (15,-1.8) {T = 30, 10-year averages};

% Highlight segment: 1990--2000 (10 years centered on 1995)
\draw[line width=2pt] (0,-3) -- (10,-3);
% Highlight segment: 2010--2020 (10 years centered on 2015)
\draw[line width=2pt] (25,-3) -- (30,-3);

% PANEL
\draw[line width=2pt] (0,-6) -- (30,-6);
% Vertical ticks at ends
\draw[thick] (0,-6.2) -- (0,-5.8);
\draw[thick] (30,-6.2) -- (30,-5.8);
% Year labels at ends
\node[left] at (0,-6.2) {1986};
\node[right] at (30,-6.2) {2015};
% Top bracket: full panel
\node[above,yshift=6pt] at (15,-5.8) {Corresponding panel for FE estimator};

\end{tikzpicture}
    \caption{Visual description of which years of data are included (bolded) in LD and corresponding FE}
    \label{fig:year_info}
\end{figure}
\begin{table}[H]
   \caption{Comparison of estimates using data from \citet{burke2016aejep} with estimates from constructed dataset}
   \label{table:validate_data_be}
   \vspace{-0.5em}
   \centering
   \footnotesize
   \begin{tabular}{lcccc}
      \tabularnewline \midrule \midrule
       & \multicolumn{2}{c}{Panel} & \multicolumn{2}{c}{LD} \\ 
      Model:                                  & (1)             & (2)             & (3)             & (4) \\  
      \midrule
      $DD_{it;0-29C}$                         & 0.0004$^{***}$  & 0.0003$^{***}$  & -0.0001         & -0.0002 \\   
                                              & (0.00008)       & (0.00008)       & (0.0003)        & (0.0003) \\   
      $DD_{it;>29C}$                          & -0.0057$^{***}$ & -0.0055$^{***}$ & -0.0053$^{***}$ & -0.0044$^{***}$ \\   
                                              & (0.0007)        & (0.0007)        & (0.0010)        & (0.0014) \\   
      $Prec_{it;\leq 42cm}$                   & 0.0117$^{***}$  & 0.0132$^{***}$  & 0.0515$^{**}$   & 0.0486$^{**}$ \\   
                                              & (0.0027)        & (0.0026)        & (0.0194)        & (0.0224) \\   
      $Prec_{it;> 42cm}$                      & -0.0007         & -0.0004         & 0.0036$^{**}$   & 0.0032$^{*}$ \\   
                                              & (0.0005)        & (0.0005)        & (0.0017)        & (0.0018) \\   

      \midrule
      Sample                                  & 1978-2002       & 1978-2002       & 1978-2002       & 1978-2002 \\  
      FE                                      & Cty, Yr         & Cty, Yr         & None            & None \\  
      Data                                    & BE (2016)       & Constructed     & BE (2016)       & Constructed \\ 
      Observations                            & 38,123          & 38,123          & 1,531           & 1,531 \\  
      \midrule \midrule
   \multicolumn{5}{p{0.62\linewidth}}{\footnotesize \textit{Notes}: 
Standard errors for estimates are clustered at the state level.}   
        \end{tabular}
\end{table}

\subsection{Detailed Simulation Tables}\label{app:sim_results}

\subsubsection{FE and LD Estimators}

The simulation results contained in this section are based on the simulation design in Table \ref{tab:simulation_design_1}. We conduct four variants of the simulation design, varying the lengths of the panel ($T=25$ and $T=40$) and varying the length of the climate normal (10- and 30-year). The different panel lengths span typical time horizons in the empirical literature (see Figure \ref{fig:ld_lit}).  

Table \ref{tab:simulation1_variantA} presents the simulation statistics for the FE ($\hat \beta_{FE}$) and LD estimators ($\hat \beta_{LD}$) across different of specifications of the climate normal, where $\theta_S=-0.07$ and $\theta_L=-0.02$.\footnote{As established in Proposition \ref{prop:FE} and \ref{prop:LD}, the weights do not depend on the values of $\theta_L$ and $\theta_S$.} The FE estimator exhibits small bias across all specifications. Consistent with Proposition \ref{prop:FE}, this negligible bias coincides with a small simulation mean of $\hat{\omega}_L^{FE}$, the estimated weight on $\theta_L$ in Proposition \ref{prop:FE}(i). For the LD estimator, the bias is substantive implying the estimator may not isolate climate variation from high-frequency weather variation. For instance, when $c^*_{it}$ is a 10-year normal and the specification uses $\tau = 10$, the simulation mean of $\hat \omega^{LD}_S$ is $0.3119$. The bias grows considerably as the moving-average horizon in $c^*_{it}$ becomes larger (see Table \ref{tab:simulation1_variantB}).

Figures \ref{fig:c10_ld_summary}, \ref{fig:c10_cov_rej_summary}, \ref{fig:c30_ld_summary},  and \ref{fig:c30_cov_rej_summary} present simulation results from the four variants across a grid of values for $\theta_L$, covering the no-adaptation case where $\theta_L=\theta_S$ to full adaptation where $\theta_L=0$.

Rejection probabilities of the adaptation test from long-different specifications in panels (c) and (d) of Figures \ref{fig:c10_cov_rej_summary} and \ref{fig:c30_cov_rej_summary} are compared with those obtained with the infeasible estimator. The infeasible estimator provides the rejection that could be attained if $c_{it}^*$ and $w_{it}^*$ were directly observed. Because $c_{it}^*$ is defined as a climate normal in this simulation, the infeasible estimator is implemented using fixed effects estimation that simultaneously estimates the short and long-run responses using $c_{it}^*$ and $w_{it}^*$ as regressors. Finite-sample power is substantially higher for smaller differences of $\theta_L-\theta_S$ with longer panels ($T=40$). 

Across all simulation variants, the coverage probability of $\theta_L-\theta_S$, the extent of adaptation, using the confidence intervals around $\hat{\beta}_{LD}-\hat{\beta}_{FE}$ is extremely low (Figures \ref{fig:c10_cov_rej_summary} and \ref{fig:c30_cov_rej_summary}). This is primarily driven by the bias discussed in Section \ref{sec:theory_results} and illustrated in panels (a) and (b) of Figures \ref{fig:c10_ld_summary} and  \ref{fig:c30_ld_summary}.  

\begin{table}[H]
\caption{Simulation results varying $\tau$ with $T=25$ and $\theta_L = -0.02$}
\centering
\label{tab:simulation1_variantA}
\begin{tabular}{ccccccc}
  \hline
\multicolumn{7}{c}{\textit{Long-Differences} ($\theta_L = -0.02$)} \\
Details & $\hat\beta_{LD}$ & $\hat\beta_{LD} - \theta_L$ & $\omega_S^{LD}$ & $\omega_L^{LD}$ & $\omega_S^{LD}\theta_S$ & $\omega_L^{LD}\theta_L$ \\
\hline
\multicolumn{7}{c}{\textit{Panel A. Climate as 10-year normal}} \\
 $\tau = 5$ & -0.0612 & -0.0412 & 0.8254 & 0.1746 & -0.0578 & -0.0035 \\ 
  $\tau = 10$ & -0.0356 & -0.0156 & 0.3119 & 0.6881 & -0.0218 & -0.0138 \\ 
   \multicolumn{7}{c}{\textit{Panel B. Climate as 30-year normal}} \\
$\tau = 5$ & -0.0632 & -0.0432 & 0.8642 & 0.1358 & -0.0605 & -0.0027 \\ 
  $\tau = 10$ & -0.0663 & -0.0463 & 0.9269 & 0.0731 & -0.0649 & -0.0015 \\ 
   \hline
\multicolumn{7}{c}{\textit{Fixed Effects} ($\theta_S = -0.07$)} \\
Details & $\hat\beta_{FE}$ & $\hat\beta_{FE} - \theta_S$ & $\omega_S^{FE}$ & $\omega_L^{FE}$ & $\omega_S^{FE}\theta_S$ & $\omega_L^{FE}\theta_L$ \\
\hline
\multicolumn{7}{c}{\textit{Panel A. Climate as 10-year normal}} \\
$\tau \in (5,10)$ & -0.0713 & -0.0013 & 1.0265 & -0.0265 & -0.0719 & 0.0005 \\ 
   \multicolumn{7}{c}{\textit{Panel B. Climate as 30-year normal}} \\
$\tau \in (5,10)$ & -0.0703 & -0.0003 & 1.0055 & -0.0055 & -0.0704 & 0.0001 \\ 
   \hline
\multicolumn{7}{p{0.7\linewidth}}{\footnotesize \textit{Notes:} This table presents the simulation means across 1,000 replications for $T = 25$.} \\
\end{tabular}
\end{table}

\begin{table}[H]
\centering
\caption{Simulation results varying $\tau$ with $T=40$ and $\theta_L = -0.02$} 
\label{tab:simulation1_variantB}
\begin{tabular}{ccccccc}
  \hline
\multicolumn{7}{c}{\textit{Long-Differences} ($\theta_L = -0.02$)} \\
Details & $\hat\beta_{LD}$ & $\hat\beta_{LD} - \theta_L$ & $\omega_S^{LD}$ & $\omega_L^{LD}$ & $\omega_S^{LD}\theta_S$ & $\omega_L^{LD}\theta_L$ \\
\hline
\multicolumn{7}{c}{\textit{Panel A. Climate as 10-year normal}} \\
 $\tau = 5$ & -0.0393 & -0.0193 & 0.3853 & 0.6147 & -0.0270 & -0.0123 \\ 
  $\tau = 10$ & -0.0217 & -0.0017 & 0.0339 & 0.9661 & -0.0024 & -0.0193 \\ 
  $\tau = 20$ & -0.0138 & 0.0062 & -0.1249 & 1.1249 & 0.0087 & -0.0225 \\ 
   \multicolumn{7}{c}{\textit{Panel B. Climate as 30-year normal}} \\
$\tau = 5$ & -0.0498 & -0.0298 & 0.5968 & 0.4032 & -0.0418 & -0.0081 \\ 
  $\tau = 10$ & -0.0439 & -0.0239 & 0.4785 & 0.5215 & -0.0335 & -0.0104 \\ 
  $\tau = 20$ & -0.0459 & -0.0259 & 0.5185 & 0.4815 & -0.0363 & -0.0096 \\ 
   \hline
\multicolumn{7}{c}{\textit{Fixed Effects} ($\theta_S = -0.07$)} \\
Details & $\hat\beta_{FE}$ & $\hat\beta_{FE} - \theta_S$ & $\omega_S^{FE}$ & $\omega_L^{FE}$ & $\omega_S^{FE}\theta_S$ & $\omega_L^{FE}\theta_L$ \\
\hline
\multicolumn{7}{c}{\textit{Panel A. Climate as 10-year normal}} \\
$\tau \in (5,10,20)$ & -0.0703 & -0.0003 & 1.0053 & -0.0053 & -0.0704 & 0.0001 \\ 
   \multicolumn{7}{c}{\textit{Panel B. Climate as 30-year normal}} \\
$\tau \in (5,10,20)$ & -0.0692 & 0.0008 & 0.9837 & 0.0163 & -0.0689 & -0.0003 \\ 
   \hline
\multicolumn{7}{p{0.7\linewidth}}{\footnotesize \textit{Notes:} This table presents the simulation means across 1,000 replications for $T = 25$.} \\
\end{tabular}
\end{table}

\newpage

\begin{figure}
\centering
\begin{subfigure}{0.48\textwidth}
    \centering
\includegraphics[width=\textwidth]{Figures/diff_ld_10_T25_july22.pdf}
    \caption{Estimates for T = 25}
\end{subfigure}
\begin{subfigure}{0.48\textwidth}
    \centering
    \includegraphics[width=\textwidth]{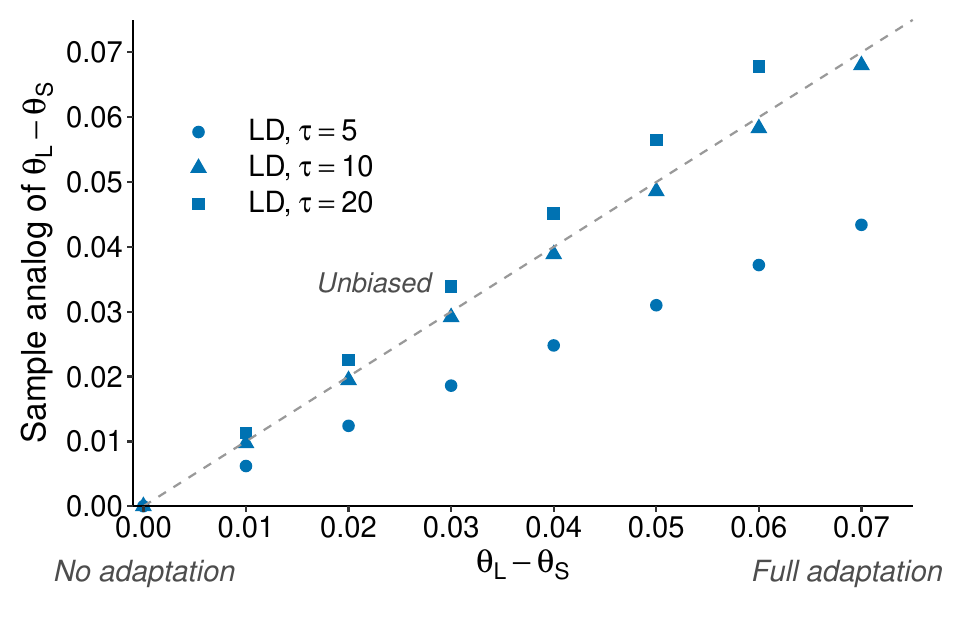}
       \caption{Estimates for T = 40}
\end{subfigure}
\begin{subfigure}{0.48\textwidth}
    \centering
    \includegraphics[width=\textwidth]{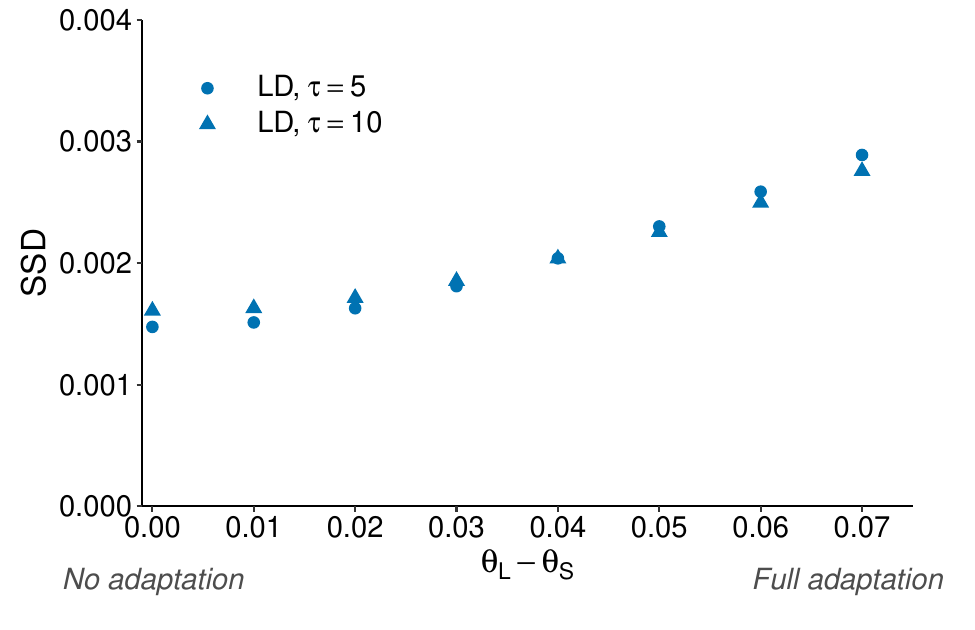}
    \caption{Simulation SD for T = 25}
\end{subfigure}
\begin{subfigure}{0.48\textwidth}
    \centering
    \includegraphics[width=\textwidth]{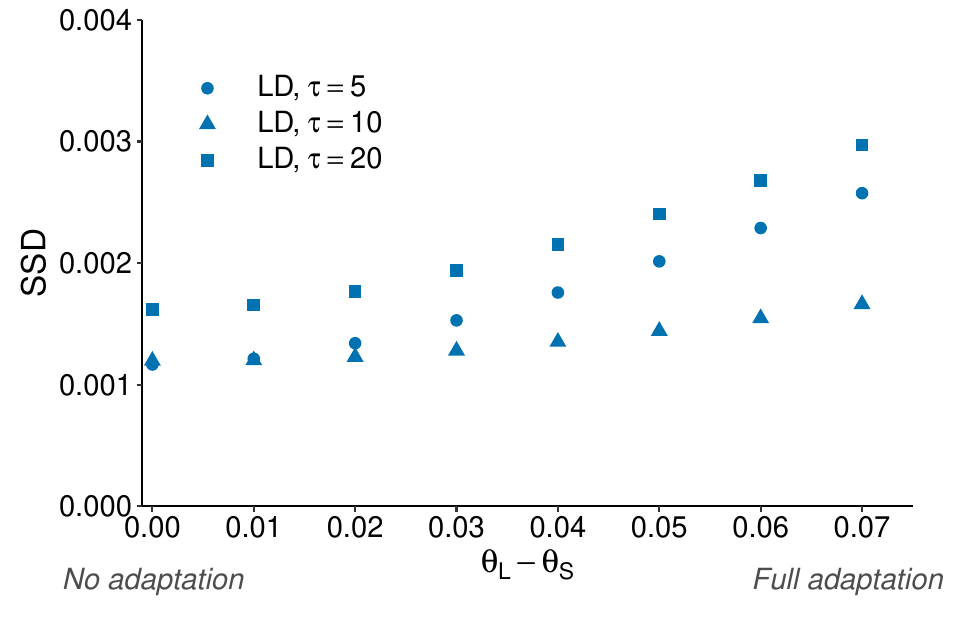}
     \caption{Simulation SD for T = 40}
\end{subfigure}
\begin{subfigure}{0.48\textwidth}
    \centering
    \includegraphics[width=\textwidth]{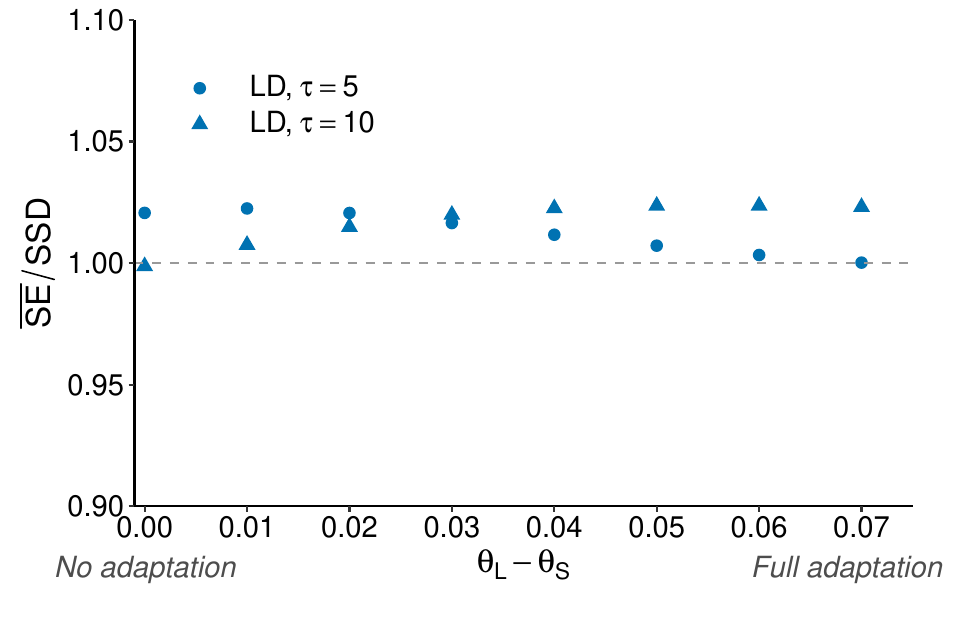}
    \caption{$\overline{SE}/SSD$ ratio for T = 25}
\end{subfigure}
\begin{subfigure}{0.48\textwidth}
    \centering
    \includegraphics[width=\textwidth]{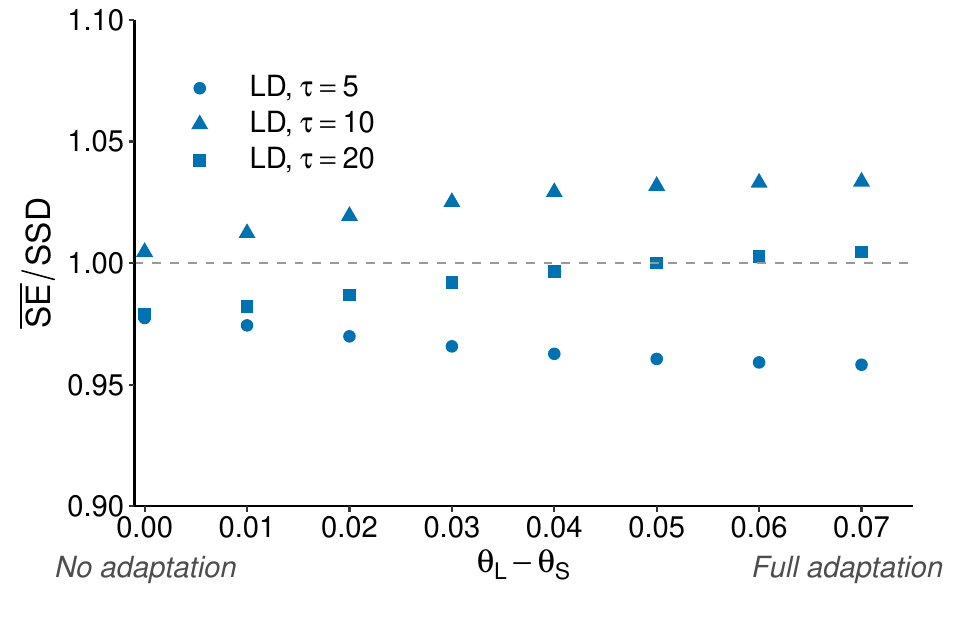}
     \caption{$\overline{SE}/SSD$ ratio for T = 40}
\end{subfigure}
   \caption{LD--FE simulation results for $c_{it}^*$ as 10-year normal: Summary statistics}
     \label{fig:c10_ld_summary}
\caption*{\scriptsize \textit{Notes:} Panels (a) and (b) plot $\bar{\hat \beta}_{LD}-\bar{\hat\beta}_{FE}$ against $\theta_{L}-\theta_{S}$. Panels (c) and (d) plot the simulation standard deviation of $\bar{\hat \beta}_{LD}-\bar{\hat\beta}_{FE}$. Panels (e) and (f) plot the ratio of the simulation mean of the standard error of $\hat{\beta}_{LD}-\hat{\beta}_{FE}$ to its simulation standard deviation, where the cluster-robust bootstrap standard error is computed using 400 bootstrap replications with counties treated as clusters. }
 
\end{figure}

\begin{figure}
\centering
\begin{subfigure}{0.48\textwidth}
    \centering
    \includegraphics[width=\textwidth]{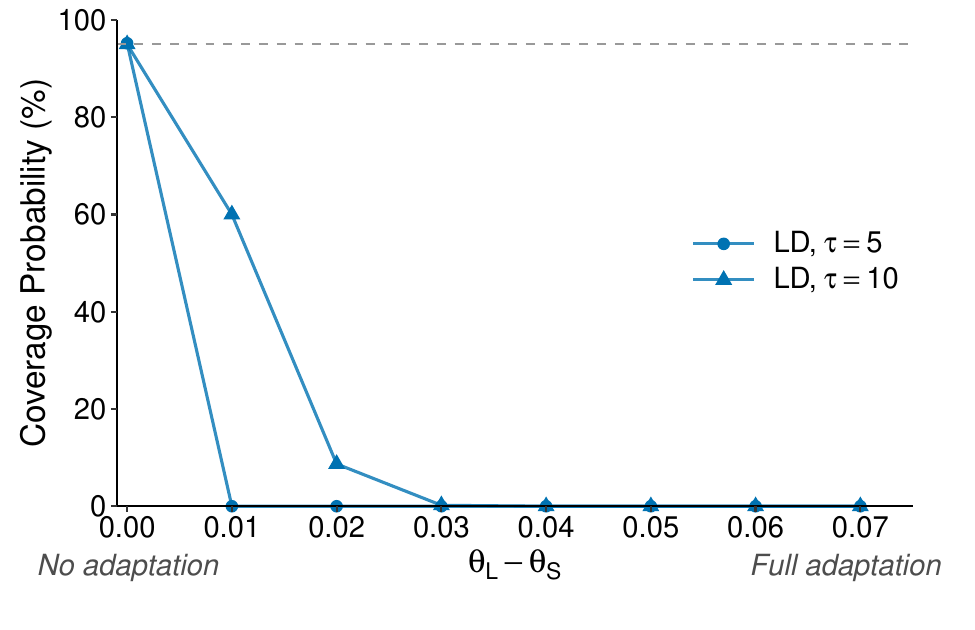}
    \caption{Coverage probabilities for T = 25}
\end{subfigure}
\begin{subfigure}{0.48\textwidth}
    \centering
    \includegraphics[width=\textwidth]{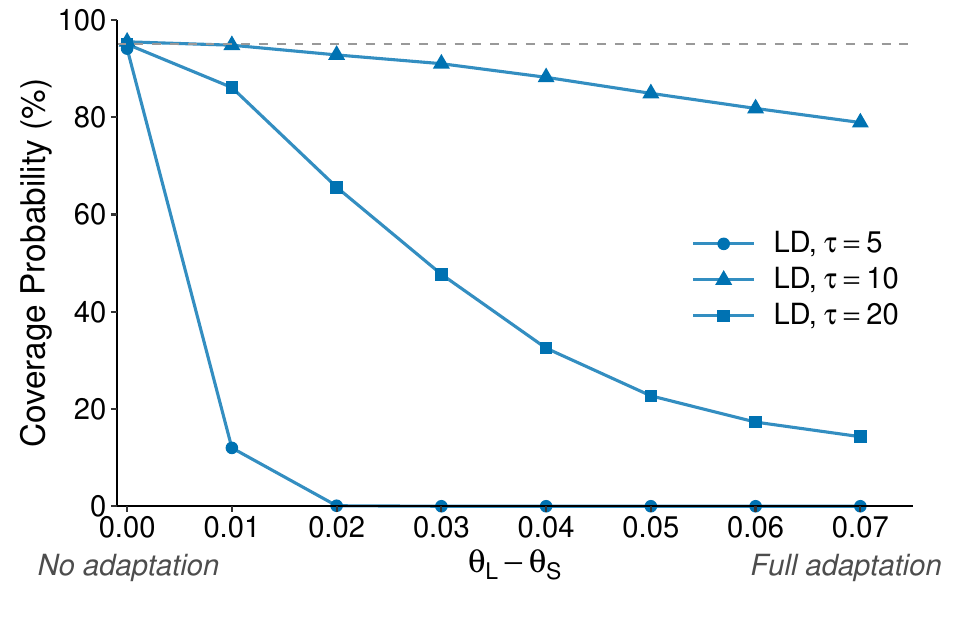}
     \caption{Coverage probabilities for T = 40}
\end{subfigure}
\begin{subfigure}{0.48\textwidth}
    \centering
    \includegraphics[width=\textwidth]{Figures/rejections_ld_infeasible_10_T25_july22.pdf}
    \caption{Rejection probabilities for T = 25}
\end{subfigure}
\begin{subfigure}{0.48\textwidth}
    \centering
    \includegraphics[width=\textwidth]{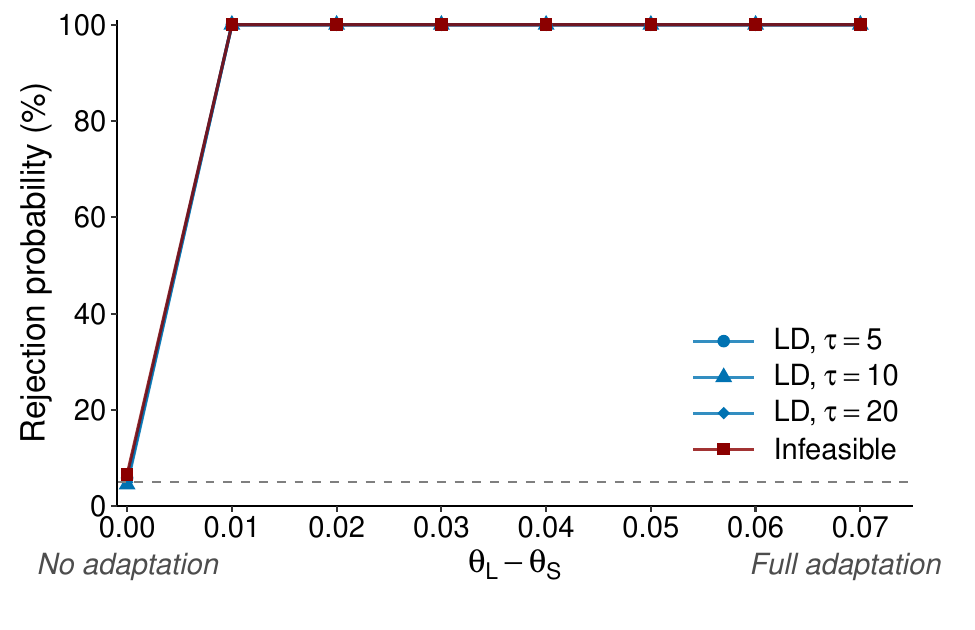}
     \caption{Rejection probabilities for T = 40}
\end{subfigure}
   \caption{LD--FE simulation results for $c^*_{it}$ as 10-year normal: Coverage and rejection probability}
\caption*{\scriptsize \textit{Notes:} Panels (a) and (b) plot the simulation analogue of $P(\theta_L-\theta_S\in \text{95 \% CI})$, where 95\% CI is the confidence interval around $\hat \beta_{FE} - \hat \beta_{LD}$ using bootstrapped standard errors ($B=400$). Panels (c) and (d) plot the rejection probability curves at the $\alpha = 0.05$ significance level for the null hypothesis $H_0: \beta_{LD} = \beta_{FE}$ using the LD and FE estimator. The null hypothesis of the infeasible test is $H_0:\theta_L=\theta_S$, which are estimated using $c_{it}^*$ and $w_{it}^*$ as regressors in a fixed-effects model. Simulation rejection probabilities are computed across 1,000 replications for  for a grid of values of $\theta_L$ over $\{-0.07,-0.06,0\}$ with $\theta_S=-0.07$.}
   \label{fig:c10_cov_rej_summary}
\end{figure}

\begin{figure}
\centering
\begin{subfigure}{0.48\textwidth}
    \centering
\includegraphics[width=\textwidth]{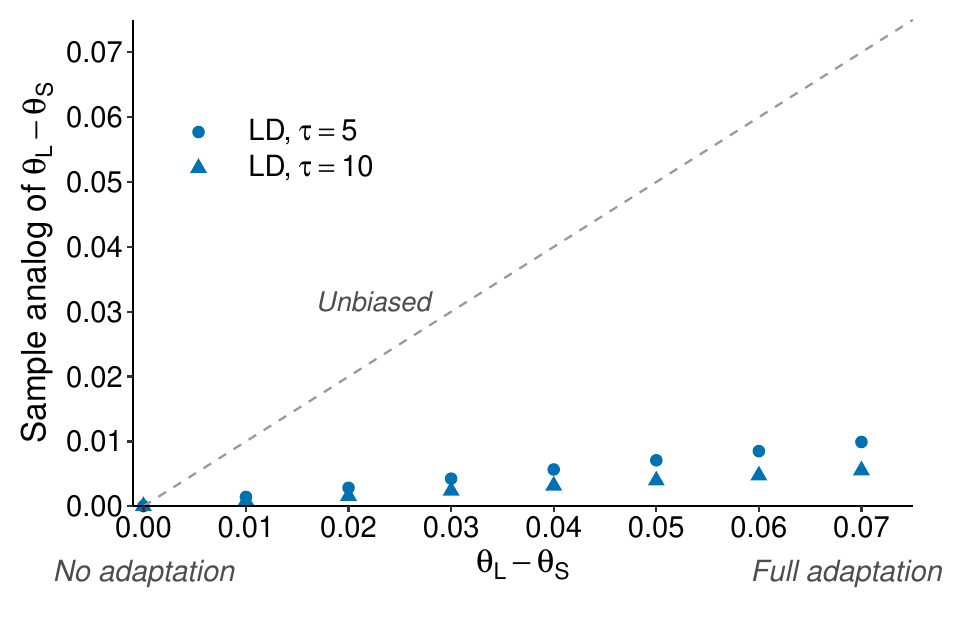}
    \caption{Estimates for T = 25}
\end{subfigure}
\begin{subfigure}{0.48\textwidth}
    \centering
    \includegraphics[width=\textwidth]{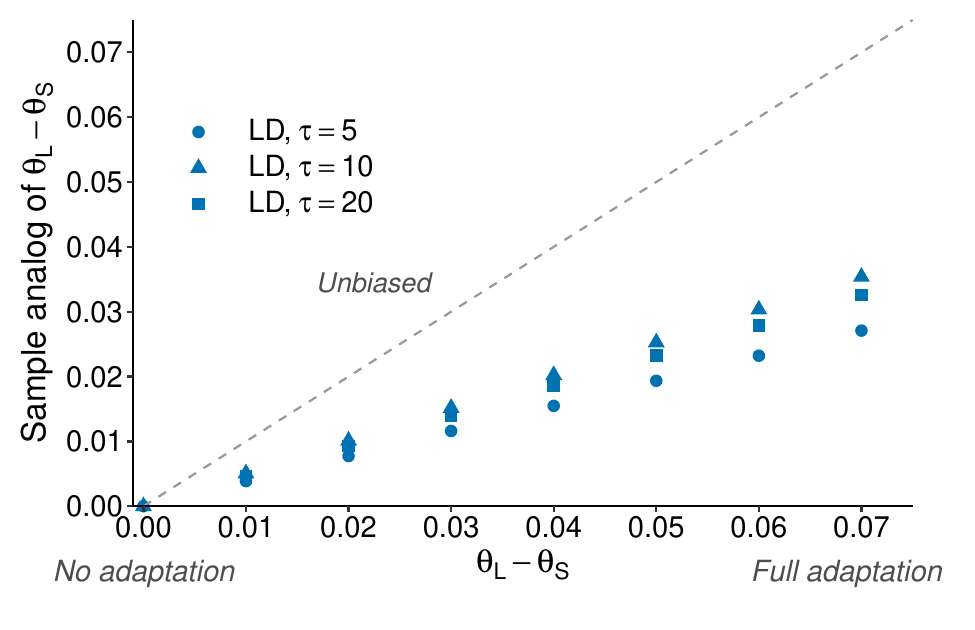}
       \caption{Estimates for T = 40}
\end{subfigure}

\begin{subfigure}{0.48\textwidth}
    \centering
    \includegraphics[width=\textwidth]{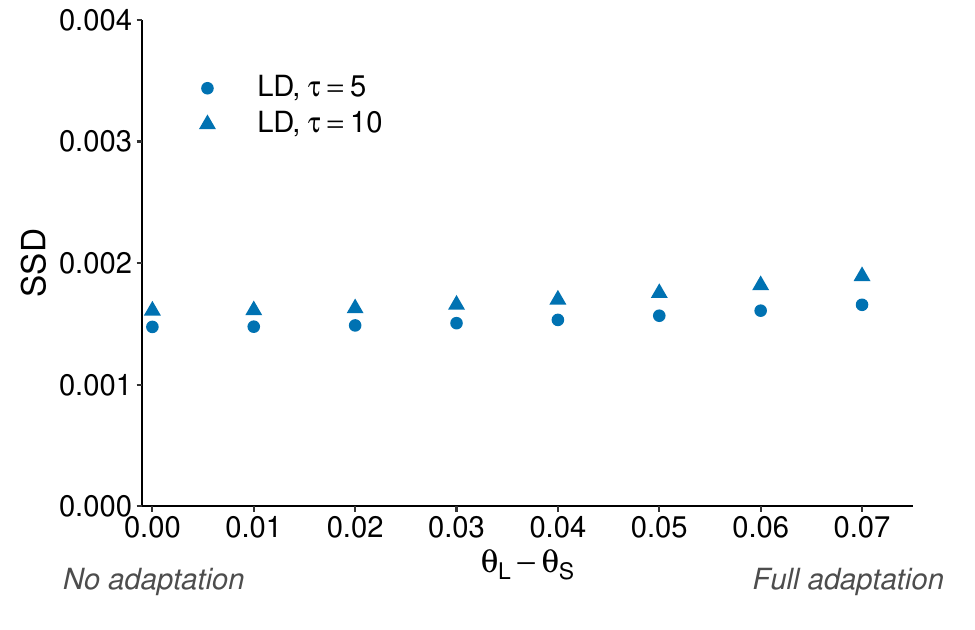}
    \caption{Simulation SD for T = 25}
\end{subfigure}
\begin{subfigure}{0.48\textwidth}
    \centering
    \includegraphics[width=\textwidth]{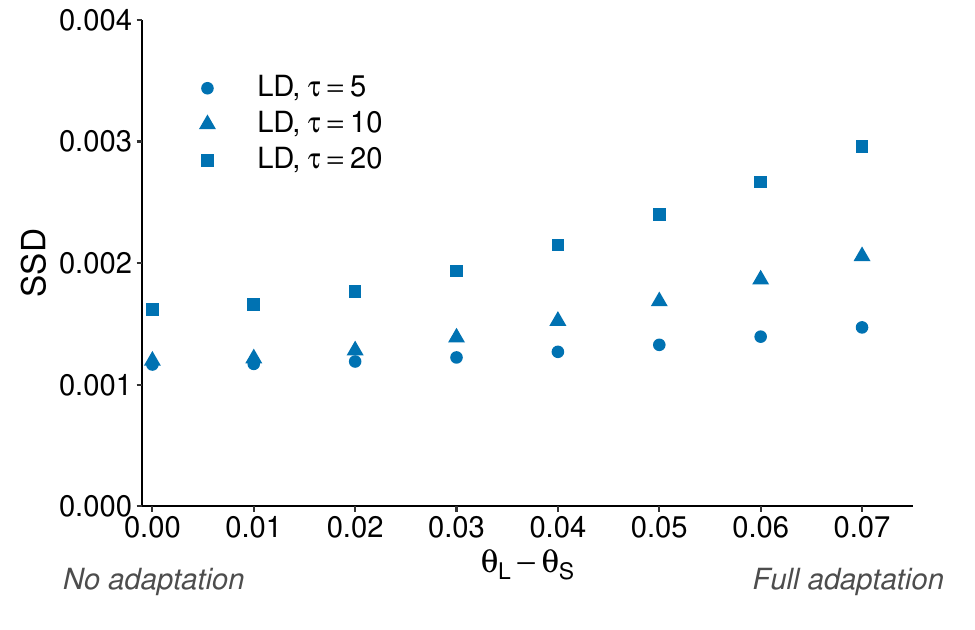}
     \caption{Simulation SD for T = 40}
\end{subfigure}
\begin{subfigure}{0.48\textwidth}
    \centering
    \includegraphics[width=\textwidth]{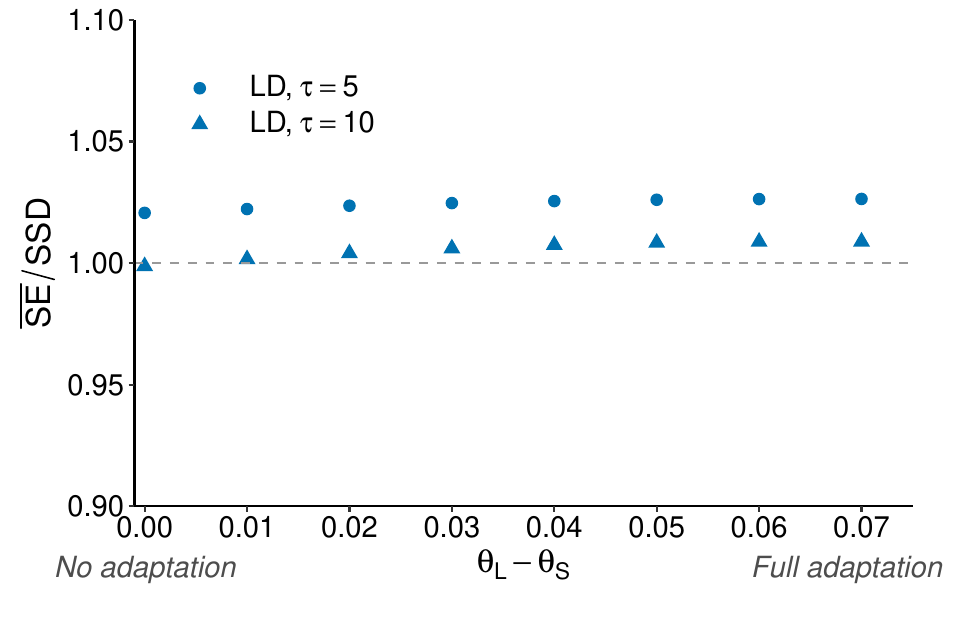}
    \caption{$\overline{SE}/SSD$ ratio for T = 25}
\end{subfigure}
\begin{subfigure}{0.48\textwidth}
    \centering
    \includegraphics[width=\textwidth]{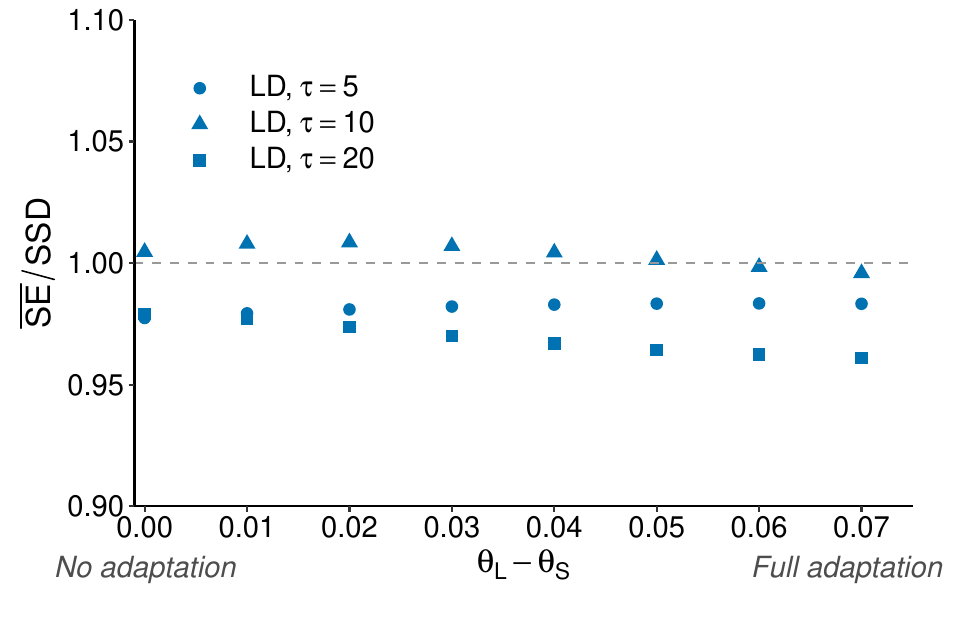}
     \caption{$\overline{SE}/SSD$ ratio for T = 40}
\end{subfigure}
   \caption{LD--FE simulation results for $c_{it}^*$ as 30-year normal: Summary statistics}
\caption*{\scriptsize \textit{Notes:} Panels (a) and (b) plot $\bar{\hat \beta}_{LD}-\bar{\hat\beta}_{FE}$ against $\theta_{L}-\theta_{S}$. Panels (c) and (d) plot the simulation standard deviation of $\bar{\hat \beta}_{LD}-\bar{\hat\beta}_{FE}$. Panels (e) and (f) plot the ratio of the simulation mean of the standard error of $\hat{\beta}_{LD}-\hat{\beta}_{FE}$ to its simulation standard deviation, where the cluster-robust bootstrap standard error is computed using 400 bootstrap replications with counties treated as clusters. }
   \label{fig:c30_ld_summary}
\end{figure}

\begin{figure}
\centering
\begin{subfigure}{0.48\textwidth}
    \centering
    \includegraphics[width=\textwidth]{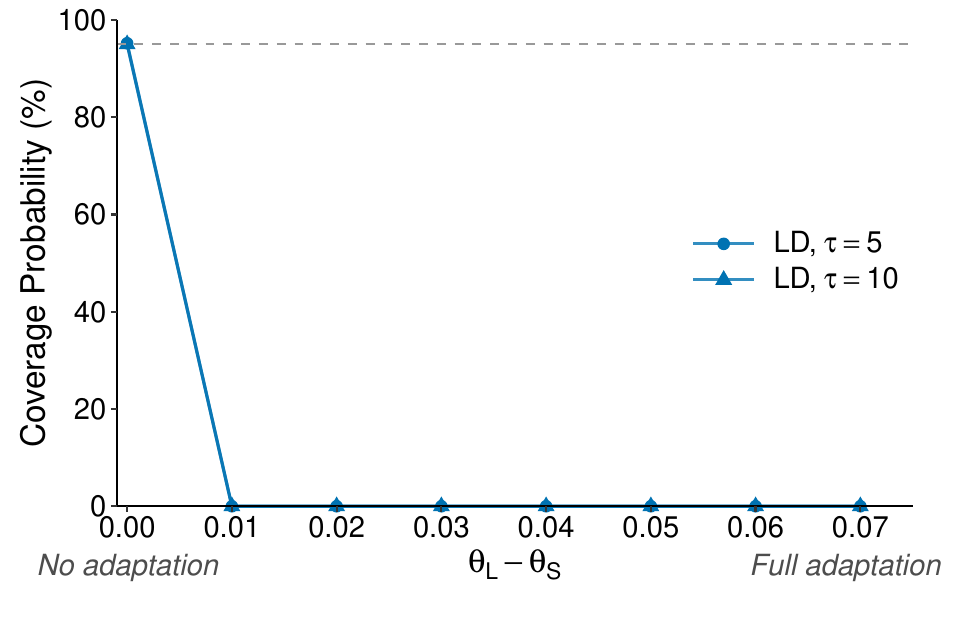}
    \caption{Coverage probabilities for T = 25}
\end{subfigure}
\begin{subfigure}{0.48\textwidth}
    \centering
    \includegraphics[width=\textwidth]{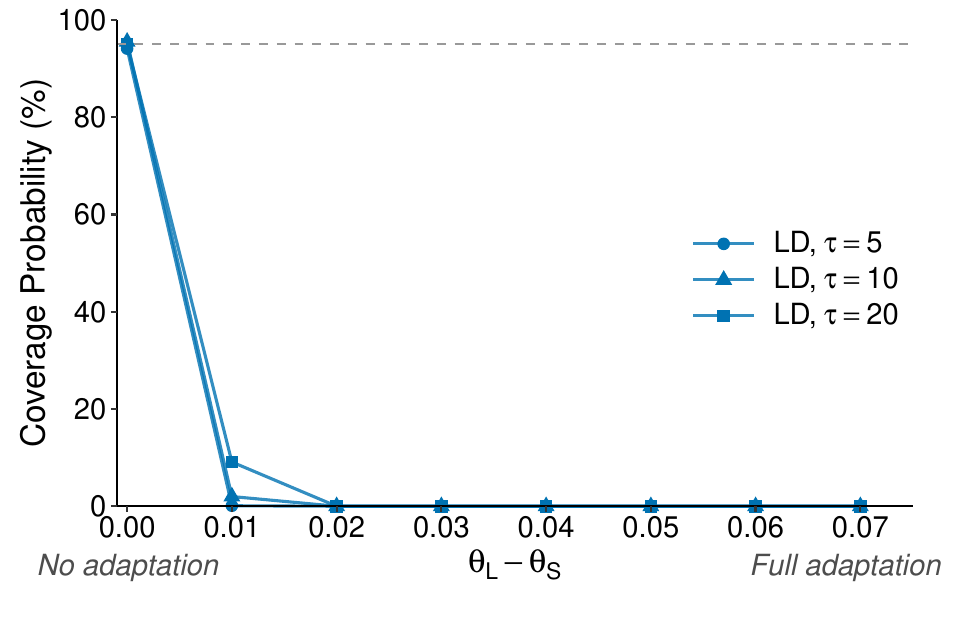}
     \caption{Coverage probabilities for T = 40}
\end{subfigure}
\begin{subfigure}{0.48\textwidth}
    \centering
    \includegraphics[width=\textwidth]{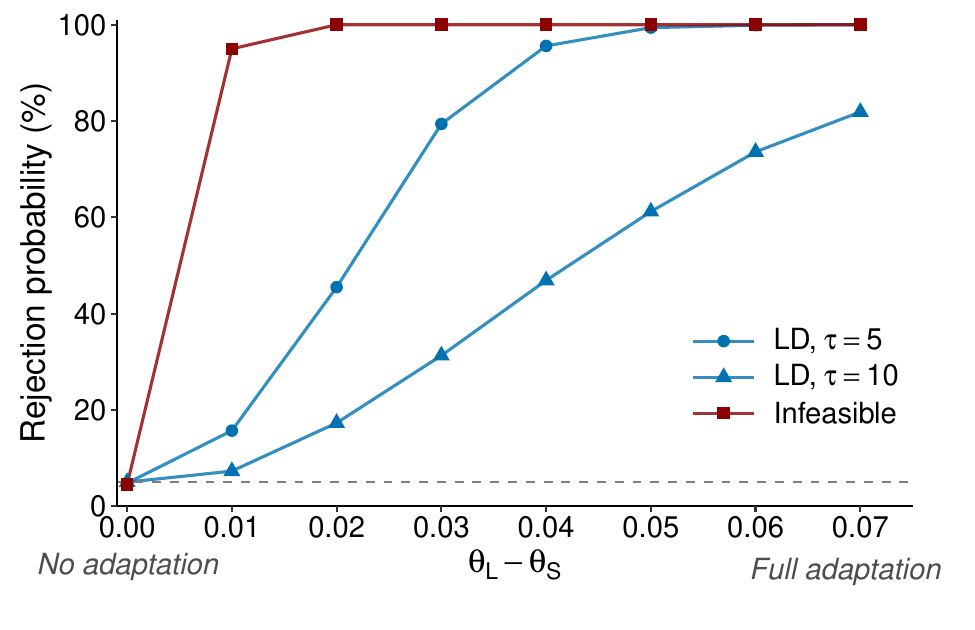}
    \caption{Rejection probabilities for T = 25}
\end{subfigure}
\begin{subfigure}{0.48\textwidth}
    \centering
    \includegraphics[width=\textwidth]{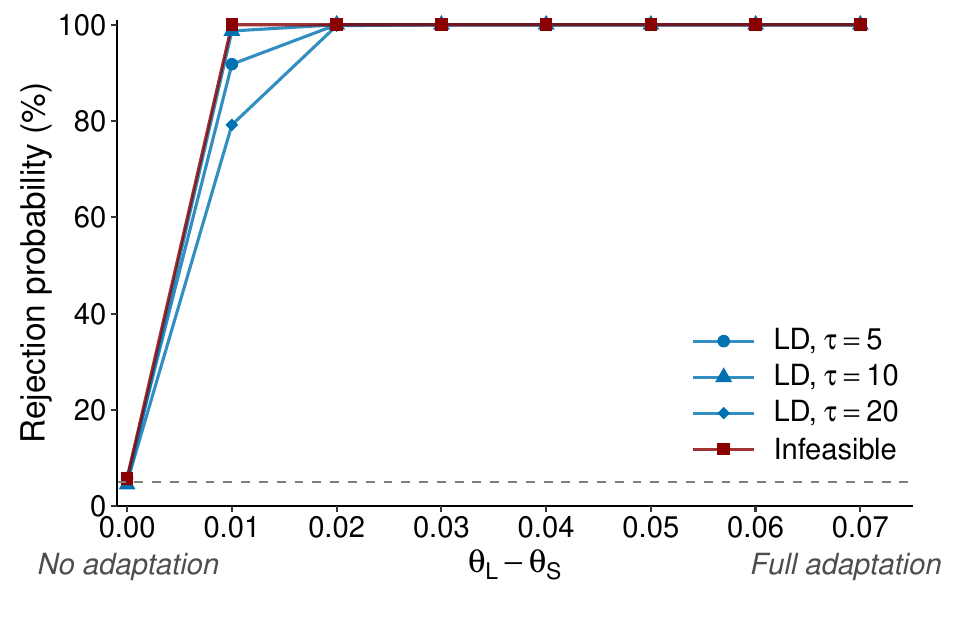}
     \caption{Rejection probabilities for T = 40}
\end{subfigure}
   \caption{LD--FE simulation results for $c^*_{it}$ as 30-year normal: Coverage and rejection probability}\label{fig:sim_rej_cov_30-year}
\caption*{\scriptsize \textit{Notes:} Panels (a) and (b) plot the simulation analogue of $P(\theta_L-\theta_S\in \text{95 \% CI})$, where 95\% CI is the confidence interval around $\hat \beta_{FE} - \hat \beta_{LD}$ using bootstrapped standard errors ($B=400$). Panels (c) and (d) plot the rejection probability curves at the $\alpha = 0.05$ significance level for the null hypothesis $H_0: \beta_{LD} = \beta_{FE}$ using the LD and FE estimator. The null hypothesis of the infeasible test is $H_0:\theta_L=\theta_S$, which are estimated using $c_{it}^*$ and $w_{it}^*$ as regressors in a fixed-effects model. Simulation rejection probabilities are computed across 1,000 replications for  for a grid of values of $\theta_L$ over $\{-0.07,-0.06,0\}$ with $\theta_S=-0.07$.}
   \label{fig:c30_cov_rej_summary}
\end{figure}

\begin{figure}
\centering
\begin{subfigure}{0.48\textwidth}
    \centering
    \includegraphics[width=\textwidth]{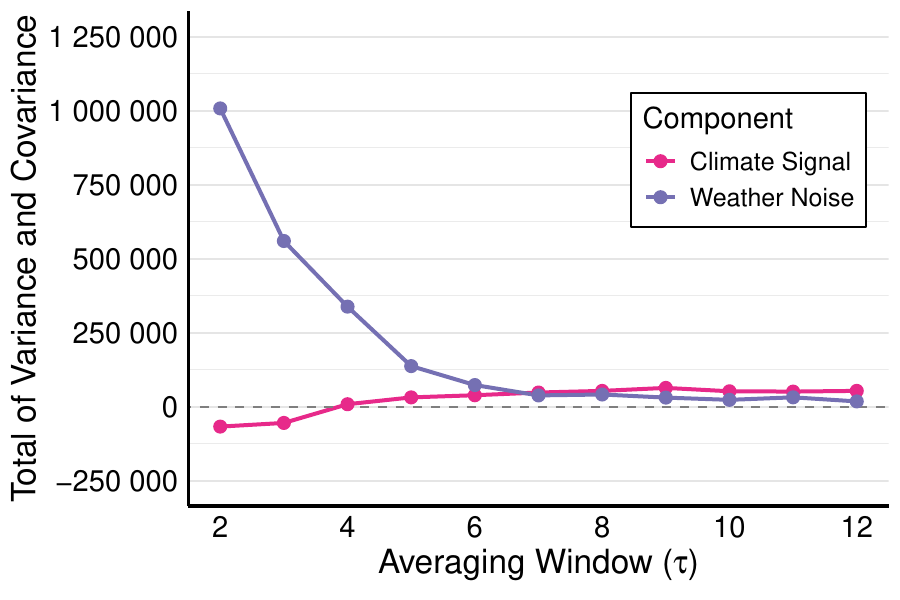}
    \caption{Climate as 10-year normal}
\end{subfigure}
\begin{subfigure}{0.48\textwidth}
    \centering
    \includegraphics[width=\textwidth]{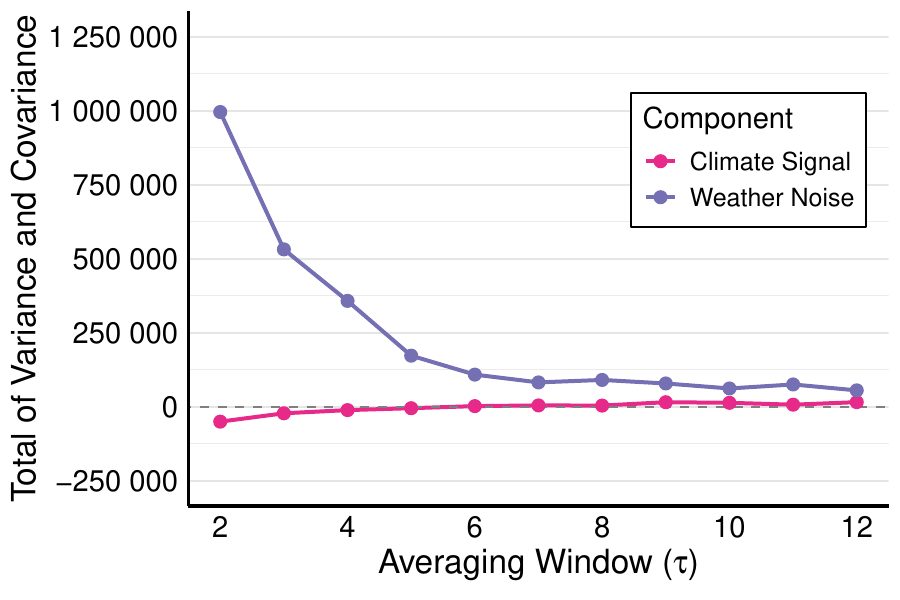}
    \caption{Climate as 20-year normal}
\end{subfigure}
\begin{subfigure}{0.48\textwidth}
    \centering
    \includegraphics[width=\textwidth]{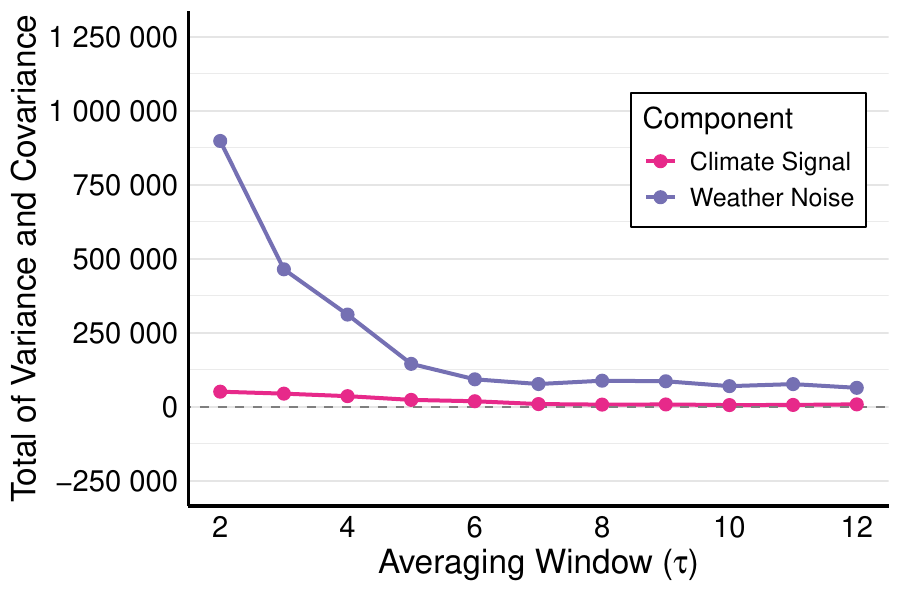}
    \caption{Climate as 30-year normal}
\end{subfigure}

   \caption{Numerator (climate signal) and denominator (weather noise) of the signal-to-noise ratio for values of $\tau$ for $T=25$}
\end{figure}

\begin{figure}
\centering
\begin{subfigure}{0.48\textwidth}
    \centering
    \includegraphics[width=\textwidth]{Figures/var_covar_T25_clim10.pdf}
    \caption{Climate as 10-year normal}
\end{subfigure}
\begin{subfigure}{0.48\textwidth}
    \centering
    \includegraphics[width=\textwidth]{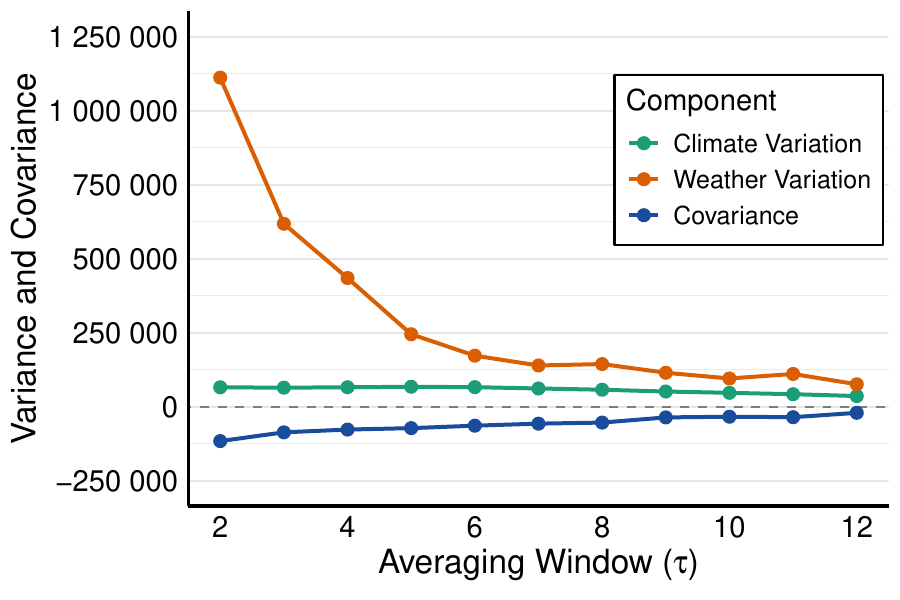}
    \caption{Climate as 20-year normal}
\end{subfigure}
\begin{subfigure}{0.48\textwidth}
    \centering
    \includegraphics[width=\textwidth]{Figures/var_covar_T25_clim30.pdf}
    \caption{Climate as 30-year normal}
\end{subfigure}
   \caption{Climate variation, weather variation, and covariance terms used to construct $SNR_c$ and $\omega_S^{LD}$ across values of $\tau$ for $T=25$}
\end{figure}

\end{document}